\newif\ifkuvat
\newif\ifhypercolors
\newlength{\tavallinenparskip}
\newlength{\tavallinenparindent}
\providecommand{\insertlogosmall}[2]{\ifkuvat\ifhypercolors\AaltoLogoSmall{1}{#1}{#2}\else\AaltoLogoSmall{1}{#1}{aaltoBlack}\fi\fi}
\providecommand{\insertlogolarge}[2]{\ifkuvat\ifhypercolors\AaltoLogoLarge{1}{#1}{#2}\else\AaltoLogoLarge{1}{#1}{aaltoBlack}\fi\fi}
\theoremstyle{definition}
\newtheorem{mydef}{Definition}
\theoremstyle{theorem}
\newtheorem{thm}{Theorem}
\newtheorem{lemma}{Lemma}
\newcommand{\cl}{\textrm{cl}}
\newcommand{\te}[1]{\textrm{#1}}
\newcommand{\q}{\textquotesingle}
\newcommand{\kr}{\left\lceil \frac{k}{r} \right\rceil}
\newcommand{\lkr}{\lceil \frac{k}{r} \rceil}
\newcommand{\I}{\mathcal{I}}
\begin{document}

\pagenumbering{roman}

\thispagestyle{empty}

\begin{adjustwidth}{1cm}{1cm}

\vspace*{\fill}

\noindent\insertlogolarge{!}{aaltoYellow}
\vspace*{2cm}

\noindent{\Large Antti Pöllänen}
\vspace*{1cm}

\noindent{\Large Locally Repairable Codes and Matroid Theory}
\vspace*{2cm}

\noindent Bachelor's thesis
\vspace*{.5cm}

\noindent Espoo November 22, 2015
\vspace*{1cm}

Supervisor: Associate Professor Camilla Hollanti

Instructor:  Ph.D. Thomas Westerbäck

\vspace*{2cm}

\end{adjustwidth}

\clearpage

\hfill\insertlogosmall{?}{aaltoRed}

\setlength{\tavallinenparskip}{\parskip}
\setlength{\tavallinenparindent}{\parindent}

\noindent{\renewcommand{\arraystretch}{1.3}\begin{tabularx}{\textwidth}{|X|X|X|X|X|X|}
\hline
\multicolumn{3}{|p{\textwidth/2 - 2\tabcolsep - 3\fboxrule/2}|}{%
Aalto University \newline
School of Science \newline
PL 11000, 00076 Aalto \newline
\ifhypercolors\href{http://sci.aalto.fi/en/}{http://sci.aalto.fi/en/}\else http://www.aalto.fi \fi} &
\multicolumn{3}{p{\textwidth/2 - 2\tabcolsep - 3\fboxrule/2}|}{%
\textsc{Abstract of the bachelor's \newline thesis}} \\
\hline
\multicolumn{6}{|p{\textwidth - 2\tabcolsep - 2\fboxrule}|}{%
\textbf{Author: }Antti Pöllänen} \\
\hline
\multicolumn{6}{|p{\textwidth - 2\tabcolsep - 2\fboxrule}|}{%
\textbf{Title: }\newline
Locally Repairable Codes and Matroid Theory} \\
\hline
\multicolumn{6}{|p{\textwidth - 2\tabcolsep - 2\fboxrule}|}{%
\textbf{Degree programme: }Engineering Physics and Mathematics} \\
\hline
\multicolumn{3}{|p{\textwidth/2 - 2\tabcolsep - 3\fboxrule/2}|}{%
\textbf{Major subject: }Mathematics and Systems Analysis} &
\multicolumn{3}{p{\textwidth/2 - 2\tabcolsep - 3\fboxrule/2}|}{%
\textbf{Major subject code: }SCI3029    } \\
\hline
\multicolumn{6}{|p{\textwidth - 2\tabcolsep - 2\fboxrule}|}{%
\textbf{Supervisor: }Associate Professor Camilla Hollanti \newline
\textbf{Instructor: }Ph.D. Thomas Westerbäck} \\
\hline
\multicolumn{6}{|p{\textwidth - 2\tabcolsep - 2\fboxrule}|}{%
\textbf{Abstract: }\newline
Locally repairable codes (LRCs) are error correcting codes used in distributed data storage. A traditional approach is to look for codes which simultaneously maximize error tolerance and minimize storage space consumption. However, this tends to yield codes for which error correction requires an unrealistic amount of communication between storage nodes. LRCs solve this problem by allowing errors to be corrected locally.

This thesis reviews previous results on the subject presented in \cite{artikkeli}. These include that every almost affine LRC induces a matroid such that the essential properties of the code are determined by the matroid. Also, the generalized Singleton bound for LRCs can be extended to matroids as well. Then, matroid theory can be used to find classes of matroids that either achieve the bound, meaning they are optimal in a certain sense, or at least come close to the bound. This thesis presents an improvement to the results of \cite{artikkeli} in both of these cases.} \\
\hline
\multicolumn{2}{|p{\textwidth/3 - 2\tabcolsep - 4\fboxrule/3}|}{%
\textbf{Date: }\newline November 22, 2015} &
\multicolumn{2}{p{\textwidth/3 - 2\tabcolsep - 4\fboxrule/3}|}{%
\textbf{Language: }\newline English} &
\multicolumn{2}{p{\textwidth/3 - 2\tabcolsep - 4\fboxrule/3}|}{%
\textbf{Number of pages: } \newline $4+28$} \\
\hline
\multicolumn{6}{|p{\textwidth - 2\tabcolsep - 2\fboxrule}|}{%
\textbf{Keywords: }distributed storage, matroid, erasure channel, locally repairable code, almost affine code, generalized Singleton bound} \\
\hline
\end{tabularx}}

\clearpage
\hfill\insertlogosmall{?}{aaltoBlue}

\setlength{\tavallinenparskip}{\parskip}
\setlength{\tavallinenparindent}{\parindent}

\noindent{\renewcommand{\arraystretch}{1.3}\begin{tabularx}{\textwidth}{|X|X|X|X|X|X|}
        \hline
        \multicolumn{3}{|p{\textwidth/2 - 2\tabcolsep - 3\fboxrule/2}|}{%
            Aalto-yliopisto \newline
            Perustieteiden korkeakoulu \newline
            PL 11000, 00076 Aalto \newline
            \ifhypercolors\href{http://www.aalto.fi}{http://sci.aalto.fi/fi/}\else http://sci.aalto.fi/fi/ \fi} &
        \multicolumn{3}{p{\textwidth/2 - 2\tabcolsep - 3\fboxrule/2}|}{%
            \textsc{Kandidaatintyön tiivistelmä}} \\
        \hline
        \multicolumn{6}{|p{\textwidth - 2\tabcolsep - 2\fboxrule}|}{%
            \textbf{Tekijä: }Antti Pöllänen} \\
        \hline
        \multicolumn{6}{|p{\textwidth - 2\tabcolsep - 2\fboxrule}|}{%
            \textbf{Työn nimi: }\newline
            Paikallisesti korjaavat koodit ja matroiditeoria} \\
        \hline
        \multicolumn{6}{|p{\textwidth - 2\tabcolsep - 2\fboxrule}|}{%
            \textbf{Koulutusohjelma: }Teknillinen fysiikka ja matematiikka} \\
        \hline
        \multicolumn{3}{|p{\textwidth/2 - 2\tabcolsep - 3\fboxrule/2}|}{%
            \textbf{Pääaine: }Matematiikka ja systeemitieteet} &
        \multicolumn{3}{p{\textwidth/2 - 2\tabcolsep - 3\fboxrule/2}|}{%
            \textbf{Pääaineen koodi: }SCI3029} \\
        \hline
        \multicolumn{6}{|p{\textwidth - 2\tabcolsep - 2\fboxrule}|}{%
            \textbf{Vastuuopettaja: }Professori Camilla Hollanti \newline
            \textbf{Ohjaaja: }Ph.D. Thomas Westerbäck} \\
        \hline
        \multicolumn{6}{|p{\textwidth - 2\tabcolsep - 2\fboxrule}|}{%
            \textbf{Tiivistelmä: }\newline
            Paikallisesti korjaavat koodit ovat virheenkorjauskoodeja, joita käytetään hajautetuissa tallennusjärjestelmissä. Perinteisesti on etsitty koodeja, jotka mahdollistavat mahdollisimman monen yhtäaikaisen virheen korjaamisen ja samanaikaisesti kasvattavat tallennustilan tarvetta mahdollisimman vähän.
            Tällaisilla koodeilla virheenkorjaus edellyttää kuitenkin usein epärealistisen paljon kommunikaatiota tallennusyksiköiden välillä. Paikallisesti korjaavien koodien tarkoitus on ratkaista tämä ongelma tekemällä virheiden korjaamisesta paikallista.
            
            Tässä työssä selostetaan aiheeseen liittyvät artikkelissa \cite{artikkeli} esitetyt tutkimustulokset. Niihin lukeutuu, että kutakin lähes affiinia paikallisesti korjaavaa koodia vastaa yksikäsitteisesti matroidi, josta ilmenevät koodin olennaiset ominaisuudet. Lisäksi yleistetty Singleton-raja paikallisesti korjaaville koodeille voidaan yleistää koskemaan myös matroideja. Näiden tulosten avulla matroiditeoriaa voidaan hyödyntää sellaisten matroidiluokkien löytämisessä, jotka joko saavuttavat Singleton-rajan, eli ovat tietyssä mielessä optimaalisia, tai ainakin yltävät lähelle sitä. Työssä löydetään parannus aiempiin tuloksiin kummassakin näistä tapauksista.} \\
        \hline
        \multicolumn{2}{|p{\textwidth/3 - 2\tabcolsep - 4\fboxrule/3}|}{%
            \textbf{Päivämäärä: }22.11.2015} &
        \multicolumn{2}{p{\textwidth/3 - 2\tabcolsep - 4\fboxrule/3}|}{%
            \textbf{Kieli: }englanti} &
        \multicolumn{2}{p{\textwidth/3 - 2\tabcolsep - 4\fboxrule/3}|}{%
            \textbf{Sivumäärä: }$4+28$} \\
        \hline
        \multicolumn{6}{|p{\textwidth - 2\tabcolsep - 2\fboxrule}|}{%
            \textbf{Avainsanat: }hajautetut tallennusjärjestelmät, matroidi, pyyhkiymäkanava, paikallisesti korjaava koodi, lähes affiini koodi, yleistetty Singleton-raja} \\
        \hline
    \end{tabularx}}

\clearpage

\begin{spacing}{1.6}
\tableofcontents
\end{spacing}
\clearpage

\pagenumbering{arabic}

\section{Introduction}

In modern times, the need for large scale data storage is swiftly increasing. This need is present for example in large data centers and in cloud storage. The large scale of these distributed data storage systems makes hardware failures common. However, the data must not be lost, and therefore means to recover corrupted data must be devised.

Coding theory can be used as a tool for solving this problem. Coding refers to the process of converting the data into a longer redundant form in such a way that errors occurred after the coding can be corrected.

There are various different codes that could be used in the context of distributed storage. However, in this paper we are interested in a class of codes called \emph{locally repairable codes} (LRCs). Using these codes we can optimize not only storage space consumption and global error tolerance, but also local error tolerance. Local error tolerance or correction is desirable because it reduces the need for communication between storage units.


Every \emph{almost affine} LRC induces a \emph{matroid} such that the parameters $(n,k,d,r,\delta)$ of the LRC appear as matroid invariants. Consequently, the \emph{generalized Singleton bound} for the parameters $(n,k,d,r,\delta)$ of an LRC can be extended to matroids. Matroid theory can then be utilized to design LRCs that achieve the bound or at least come close to it. We review these results first introduced in \cite{artikkeli} as well as present two improvements to them.


\section{Locally repairable codes}

\subsection{Basics}
By a locally repairable code we mean a \emph{block code} with certain local error correction properties. Let us start by reviewing the basic concept of a block code. The coding procedure using a block code can be defined as an injective mapping $\gamma: M \rightarrow \Sigma^n$, where $M$ is the set of symbols used to represent the non-coded information and $\Sigma$ is the alphabet used to represent the coded information. By the term \emph{code} we mean the set $C=\gamma(M) \subsetneq \Sigma^n$, i.e. the image of $\gamma$. The code is a proper subset of $\Sigma^n$ because redundancy must be introduced by the coding process in order for error detection or correction to be possible. By a \emph{codeword} we mean an element $c \in C$. The length of a codeword, i.e. the \emph{block length} of the code, is denoted by $n$. The size of the alphabet $\Sigma$ is denoted by $q$ ($q=|\Sigma|$).

The \emph{dimension} of the code $k$ is defined by $k=\log_q(|C|)$. This means that if the original information is presented using the same alphabet $\Sigma$ as the coded information, $k$ symbols are needed to present the non-coded information and we have $M = \Sigma^k$.

The \emph{rate} $R$ of a code is given by $R=\frac{k}{n} < 1$. A high rate is desirable because then less code symbols are needed to convey the coded information.

The \emph{Hamming distance} $\Delta(\mathbf{x}, \mathbf{y})$ of two codewords $\mathbf{x}, \mathbf{y} \in C$ is defined as the number of positions at which the two codewords differ. The \emph{minimum distance} $d$ of a code $C$ is defined by $d = \min_{\mathbf{x}, \mathbf{y} \in C, \mathbf{x} \neq \mathbf{y}}(\Delta(\mathbf{x}, \mathbf{y}))$. A large minimum distance is desirable because then more simultaneous errors can be corrected. 

In this paper we use an\textit{ erasure channel model}, which means that the potential errors include only \textit{erasures}: Each element of the codeword either stays correct or is erased, and we know what the indices of the erased elements are. In  this case, $d-1$ simultaneous errors can always be corrected. 


Finally, we need the concept of a \emph{projection of a code}. For any subset $X=\{i_1,...,i_m\} \subseteq [n]$, the projection of the code $C$ to $\Sigma^{|X|}$, denoted by $C_X$, is the set

\begin{equation*}
C_X = \{(c_{i_1},...,c_{i_{m}}): \mathbf{c} = (c_1,...,c_n) \in C\}.
\end{equation*}

The expression $[n]$ for $n \in \mathbb{Z}_+$ stands for the set of integers from 1 to $n$, i.e. \newline $[n] = \{y \in \mathbb{Z}_+: y \leq n \}$.

\subsection{Locally repairable codes}

A large rate and  minimum distance are not always the only design criteria for a good code. With locally repairable codes we are also interested in a certain kind of locality of the error correction which is described by the parameters $r$ and $\delta$. Let us next give the definition of an \emph{$(n,k,d,r,\delta)$-LRC} (originally defined in \cite{LRCpaper}).

First we need the notion of an \emph{$(r,\delta)$-locality set}: \newline

\begin{mydef}
    When $1\leq r \leq k$ and $\delta \geq 2$, an $(r,\delta)$-locality set of $C$ is a subset $S\subseteq [n]$ such that
    
    \begin{equation}
    \label{locSet}
    \begin{alignedat}{3}
    & \textrm{(i)} \quad && |S|\leq r+\delta-1, \\
    & \textrm{(ii)} \quad && l \in S, L=\{i_1,...,i_{|L|}\} \subseteq S \setminus \{l\} \textrm{ and } |L|=|S|-(\delta-1) \Rightarrow \\ &\quad \quad && \exists f: C_L\rightarrow  \Sigma \textrm{ such that } f((c_{i_1},...,c_{i_{|L|}})) = c_l \te{ for all } \mathbf{c} \in C.
    \end{alignedat}
    \end{equation}
\end{mydef}

This definition implies that any $\delta-1$ code symbols of a locality set can be recovered by the rest of the symbols of the locality set and the locality set is at most of size $r+\delta-1$. This also means that any $|S|-(\delta-1)$ code symbols of the locality set can be used to determine the rest of the symbols in the locality set.

The condition (ii) could also be equivalently expressed as either of the following:

\begin{equation*}
\begin{alignedat}{3}
& \te{\emph{(ii)\q}} \quad && l \in S, L=\{ i_1,...,i_{|L|}\} \subseteq S \setminus \{l\} \te{ and } |L|=|S|-(\delta-1) \Rightarrow \\
& && |C_{L \cup \{l\}}| = |C_L|,\\
& \te{\emph{(ii)\q \q}} \quad && d(C_S) \geq \delta \te{, where $d(C_S)$ is the minimum distance of $C_S$}.
\end{alignedat}
\end{equation*}

We say that $C$ is a \emph{locally repairable code} with \emph{all-symbol locality} $(r,\delta)$ if every code symbol $l \in [n]$ is included in an $(r,\delta)$-locality set.

\subsection{The Singleton bound}
Let us from now on continue our analysis in the context of linear codes. An \emph{(n,k)-linear code} $C$ is a linear subspace of $\mathbb{F}_q^n$ with dimension $k$. Here $\mathbb{F}_q$ is a finite field of size $q$ (which is a prime) and acts as the alphabet of the code. If we assume that the original information is presented using the same alphabet $\mathbb{F}_q$, we get a linear code via the following encoding function $\gamma: \mathbb{F}_q^k \rightarrow \mathbb{F}_q^n$:

\begin{equation*}
\gamma (\mathbf{x}) = \mathbf{x}^T \mathbf{G}, \quad \textrm{ where } \mathbf{x} \in \mathbb{F}_q^k \textrm{ and } \mathbf{G} \in \mathbb{F}_q^{k \times n}.
\end{equation*}

Here $\mathbf{G}$ is called a \emph{generator matrix} of the linear code.  

So far we have stated the following criteria that are desirable for any code: a large rate $R=\frac{k}{n}$ and a large minimum distance $d$. These objectives are clearly contradictory, since a good rate implies only little redundancy in the code, whereas having a large minimum distance forces the code to have a lot of redundancy. For linear codes, this tradeoff is described by the \emph{Singleton bound}:
\begin{equation}
\label{singleton}
d \leq n-k+1.
\end{equation}

For linear locally repairable codes, we obtain the \emph{generalized Singleton bound} \cite{LRCpaper}:

\begin{equation}
\label{genSingleton}
d \leq n-k+1- \left( \left\lceil\frac{k}{r} \right\rceil -1\right)(\delta-1).
\end{equation}

The notation $\lceil \cdot \rceil$ denotes rounding up to the nearest integer. Similarly we will use $\lfloor \cdot \rfloor$ to denote rounding down to the nearest integer.  Note that this is a less strict bound than \eqref{singleton}, since every linear code achieving the bound \eqref{singleton} (i.e. satisfying it as an equation) also achieves the bound \eqref{genSingleton}, but a code achieving \eqref{genSingleton} only achieves \eqref{singleton} if $k=r$ (as we assume $\delta \geq 2$).

From now on, we will refer to the generalized Singleton bound \eqref{genSingleton} merely as the \emph{Singleton bound} or even as the \emph{bound} when the meaning is clear from context. When a code satisfies \eqref{genSingleton} as an equation, we say that the code achieves the bound or that the code is optimal. By optimal we do not here mean optimality in any objective sense. Instead, we could use \emph{Pareto optimality} as an objective criterion for the  favorableness of an LRC. By a Pareto optimal LRC we mean an LRC for which it is impossible to improve one parameter without weakening some other, with a larger $d$, $k$ and $\delta$ as well as a smaller $n$ and $r$ always being more desirable. The Pareto optimal LRCs are a subset of the LRCs with maximal $d$, i.e. the $(n,k,d,r,\delta)$-LRCs for which there exists no LRC with parameters  $(n,k,d',r,\delta)$ such that $d' > d$. The significance of LRCs achieving the bound is that they are Pareto optimal except for some cases where decreasing $r$ would not alter the value of the right side of inequality \eqref{genSingleton} or where $\lkr-1=0$ enabling a suboptimal $\delta$.

\section{Matroids}

Matroids are abstract combinatorial structures that capture a certain mathematical notion of dependence that is common to a surprisingly large number of mathematical entities. For example, a set of vectors along with the concept of linear independence yields a matroid. One possibility to define a matroid is the following definition via independent sets \cite{oxley}: \newline

\begin{mydef}
    
    A matroid $M=(E,\mathcal{I})$ is a finite set $E$ with a collection of subsets $\I \subseteq \mathcal{P}(E)$ such that

    \begin{equation}
    \label{matAxioms1}
    \begin{alignedat}{3}
    & \textrm{(i)} \quad && \emptyset  \in \I, \\
    & \textrm{(ii)} \quad && Y \in \I \textrm{ and } X \subseteq Y \Rightarrow X \in \I, \\
    & \textrm{(iii)} \quad && X, Y \in \I \textrm{ and } |X| > |Y| \Rightarrow  \exists x \in X\setminus Y: \{x\}\cup Y \in \I.
    \end{alignedat}
    \end{equation}
\end{mydef}

Here $\mathcal{P}(E)$ denotes the power set of $E$, i.e. $\mathcal{P}(E) = \{Y: Y \subseteq E \}$. We say that a set $X \subseteq E$ is \emph{independent} if $X \in \I$, otherwise it is \emph{dependent}. It is easily verifiable that a set of vectors $E$ along with its linearly  independent subsets $\I$ satisfies this definition. We call a matroid that arises from the column vectors of a matrix a \emph{matric matroid}. Another common class of matroids are those arising from undirected graphs, in which case $E$ is the set of all edges of the graph, and a subset of edges is independent if it does not contain a cycle. The properties in \eqref{matAxioms1} are satisfied for such \emph{graphic matroids} as well. For the definitions of graph theoretic concepts and additional information on graphs, we refer the reader to \cite{graphs}.

There are various mathematical concepts associated with matroids. These concepts are often analogous to concepts already familiar in the context of structures giving rise to matroids, e.g. the column vectors of a matrix or an undirected graph.

Let us start by defining the \emph{rank function} of a matroid $M$: \newline

\begin{mydef}
    \label{rankViaIndSets}
    The rank function $\rho$ of a matroid $M=(E,\I)$ is a function \newline  $\rho: \mathcal{P}(E) \rightarrow \mathbb{Z}$ satisfying the following for $X \subseteq E$:
    \begin{equation*}
    \rho(X) = \max\{|Y|: Y \subseteq X \textrm{ and } Y \in \I\}.
    \end{equation*}
    
\end{mydef}

For matrices, this concept is analogous to the rank of the matrix formed by the column vectors of $X$. For undirected graphs, rank tells the amount of edges in a minimal spanning forest of the subgraph induced by $X$.

The rank function satisfies the following properties \cite{oxley}:\newline

\begin{thm}
    \label{matViaRank}
    
    Let $\rho$ be the rank function of a matroid $M=(E,\I)$. Then for $X,Y \subseteq E$:
    
    \begin{equation}
    \label{rankEqs}
    \begin{alignedat}{3}
    & \textrm{(i)} \quad && 0 \leq \rho (X) \leq |X|, \\
    & \textrm{(ii)} \quad && X \subseteq Y \Rightarrow \rho (X) \leq \rho (Y), \\
    & \textrm{(iii)} \quad && X,Y \subseteq E \Rightarrow \rho(X) + \rho(Y) \geq  \rho (X \cup Y) + \rho(X \cap Y), \\
    & \textrm{(iv)} \quad && X \in \I \Leftrightarrow \rho (X) = |X|.
    \end{alignedat}
    \end{equation}
  
\end{thm}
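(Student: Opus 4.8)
The plan is to verify each of the four properties directly from the definition $\rho(X)=\max\{|Y|: Y\subseteq X \text{ and } Y\in\I\}$, treating submodularity (iii) last since it is the only nontrivial item. For (i), I would note that $\emptyset\subseteq X$ and $\emptyset\in\I$ by axiom \eqref{matAxioms1}(i), so the feasible set in the maximum is nonempty and contains a set of size $0$, giving $\rho(X)\geq 0$; every competitor $Y\subseteq X$ satisfies $|Y|\leq|X|$, giving $\rho(X)\leq|X|$. For (ii), if $X\subseteq Y$ then every independent subset of $X$ is also an independent subset of $Y$, so the feasible set defining $\rho(X)$ is contained in that defining $\rho(Y)$, and a maximum over a smaller family cannot exceed a maximum over a larger one. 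For (iv), if $X\in\I$ then $X$ is itself an independent subset of $X$ of cardinality $|X|$, so $\rho(X)\geq|X|$, which with (i) forces equality; conversely, if $\rho(X)=|X|$, a maximizing set $Y\subseteq X$ has $|Y|=|X|$, and since $E$ is finite this forces $Y=X$, whence $X\in\I$.

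The main work is (iii), and the idea is to build nested bases. First I would choose an independent $B_\cap\subseteq X\cap Y$ attaining the maximum, so $|B_\cap|=\rho(X\cap Y)$. Then, by repeatedly applying the augmentation axiom \eqref{matAxioms1}(iii), I would extend $B_\cap$ to an independent set $B_\cup\subseteq X\cup Y$ with $B_\cap\subseteq B_\cup$ and $|B_\cup|=\rho(X\cup Y)$. Since $B_\cup\subseteq X\cup Y$, we have $B_\cup=(B_\cup\cap X)\cup(B_\cup\cap Y)$, so inclusion–exclusion gives $|B_\cup|=|B_\cup\cap X|+|B_\cup\cap Y|-|B_\cup\cap X\cap Y|$. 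Each of $B_\cup\cap X$ and $B_\cup\cap Y$ is independent by axiom \eqref{matAxioms1}(ii), hence $|B_\cup\cap X|\leq\rho(X)$ and $|B_\cup\cap Y|\leq\rho(Y)$.

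The remaining step is to identify $|B_\cup\cap X\cap Y|$ with $\rho(X\cap Y)$: the inclusion $B_\cap\subseteq B_\cup\cap X\cap Y$ gives $|B_\cap|\leq|B_\cup\cap X\cap Y|$, while $B_\cup\cap X\cap Y$ being an independent subset of $X\cap Y$ gives $|B_\cup\cap X\cap Y|\leq\rho(X\cap Y)=|B_\cap|$; together these yield $|B_\cup\cap X\cap Y|=\rho(X\cap Y)$. Substituting into the inclusion–exclusion identity gives $\rho(X\cup Y)\leq\rho(X)+\rho(Y)-\rho(X\cap Y)$, which rearranges to the claim.

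The step I expect to require the most care is the extension of $B_\cap$ to $B_\cup$: the augmentation axiom only guarantees that a strictly smaller independent set can be enlarged by one element drawn from a larger one, so I must argue that iterating it grows $B_\cap$ into a maximum-size independent subset of $X\cup Y$ while preserving $B_\cap$ as a subset. Concretely, I would start from $B_\cap$ together with any maximum independent subset of $X\cup Y$ and apply \eqref{matAxioms1}(iii) repeatedly until the current set reaches size $\rho(X\cup Y)$; the resulting set is independent, contains $B_\cap$, lies in $X\cup Y$, and is of maximum size, which is exactly what the counting argument above requires.
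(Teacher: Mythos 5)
Your proof is correct. Note that the paper itself offers no proof of Theorem \ref{matViaRank} --- it quotes the result from \cite{oxley} --- and your argument is exactly the standard textbook one: (i), (ii), (iv) read off from Definition \ref{rankViaIndSets}, and submodularity via augmenting a maximum independent subset $B_\cap$ of $X\cap Y$ inside $X\cup Y$ and counting $|B_\cup|=|B_\cup\cap X|+|B_\cup\cap Y|-|B_\cup\cap X\cap Y|$. One small remark: for (iii) you only need the one-sided bound $|B_\cup\cap X\cap Y|\geq\rho(X\cap Y)$, which is immediate from $B_\cap\subseteq B_\cup\cap X\cap Y$; establishing equality is harmless but unnecessary, and your explicit treatment of the iterated augmentation (augmenting the current set from a fixed maximum independent subset of $X\cup Y$, so that all added elements stay in $X\cup Y$) correctly closes the only step where the argument could otherwise be glossed over.
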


Equation \eqref{rankEqs}(iii) is called the \emph{semimodular inequality}, and it can for instance be viewed as an upper bound for the rank of a union of two sets. It also implies subadditivity for  the rank function: we have $\rho(\bigcup_{X \in S} X) \leq \sum_{X \in S} \rho(X)$ for any collection of subsets $S \subseteq \mathcal{P}(E)$.
Actually we can use these properties of the rank function as an alternative matroid definition \cite{oxley}: \newline

\begin{mydef}
    A matroid $M=(E,\rho)$ is a finite set $E$ together with a function $\rho: \mathcal{P}(E) \rightarrow \mathbb{Z}$ such that it satisfies the conditions (i)-(iii) in \eqref{rankEqs}.
\end{mydef}
In this case, an independent set is defined by condition (iv), and we get the conditions in Definition \ref{matAxioms1} as a theorem. In this way, these two definitions are equivalent to each other and we can use them interchangeably.

There is also a plethora of further ways to define a matroid. Many of these definitions are constructed using propositions that apply to matroid concepts defined below. This variety of definitions is a truly useful property of matroids: a matroid can be identified using any definition after which we automatically know that the conditions in the other definitions are also satisfied.

Let us next define some of these matroid concepts. The \emph{nullity} of a set $X \subseteq E$ is defined by $\eta(X) = |X|-\rho(X)$. A \emph{circuit} is a dependent set $X \subseteq E$ whose all proper subsets are independent, i.e.  $\rho(X \setminus \{x\}) = \rho(X) = |X|-1 $ for every $x \in X$. This concept is perhaps most easily understood in the context of graphic matroids since then a set of edges is a circuit if and only if it is a cycle of the undirected graph. We denote the set of circuits of a matroid by $\mathcal{C}(M)$.

The \emph{closure} of a set $X \subseteq E$ is defined by $\cl(X) = \{x \in E: \rho(X \cup \{x\}) = \rho(X)\}$. In terms of a matric matroid, the closure $\cl(X)$ consists of all vectors of $E$ that are in the span of the vectors in $X$. For a graphic matroid, the closure is obtained by adding all the edges of $E$ whose endpoints are connected by a walk in the graph or whose two endpoints are the same vertex. 

A set $X\subseteq E$ is $cyclic$ if it is a union of circuits. Alternatively a cyclic set is a set $X$ such that $\forall x \in X: \rho(X \setminus \{x\}) = \rho(X)$. For matric matroids, this means that $X$ includes no vector that would not be in the span of the rest of the vectors. We denote the set of cyclic sets of a matroid by $\mathcal{U}(M)$.

A set $X\subseteq E$ is a \emph{flat} if $X=\cl(X)$. A \emph{cyclic flat} is a flat that also is a cyclic set, i.e. a union of circuits.

The cyclic flats of a matroid have the property that they form a \emph{finite lattice} $\mathcal{Z}$ with the following \emph{meet} and \emph{join} for $X,Y \in \mathcal{Z}$ \cite{latticeOfCyclicFlats}:

\begin{align*}
& X\wedge Y = \bigcup_{C \in \mathcal{C}(M): C \subseteq X\cap Y}, \\ \vspace{0.25 cm}
& X \vee Y = \cl(X \cup Y).
\end{align*}

The set of the atoms of the lattice is denoted by $A_{\mathcal{Z}}$ and the set of the  coatoms by $coA_{\mathcal{Z}}$. We refer the reader unfamiliar with partial orders and order-theoretic lattices to \cite{artikkeli} for a minimal background or to \cite{lattices} for a more comprehensive exposition. Another way to define a matroid is via this \emph{lattice of cyclic flats}. In fact, this viewpoint is very useful for us since we are later using the lattice of cyclic flats as a tool for constructing and analyzing matroids that correspond to good LRCs. The associated axioms are presented in the following theorem, where $0_{\mathcal{Z}}$ denotes the least element and $1_{\mathcal{Z}}$ denotes the greatest element of the finite lattice $\mathcal{Z}$:\newline

\begin{thm} (\cite{latticeOfCyclicFlats})
    Let $\mathcal{Z} \subseteq \mathcal{P}(E)$ and let $\rho$ be a function $\rho: \mathcal{Z}\rightarrow \mathbb{Z}$. There is a matroid $M$ on $E$ for which $\mathcal{Z}$ is the set of cyclic flats and $\rho$ is the rank function restricted to the sets in $\mathcal{Z}$ if and only if
    
    \begin{equation}
    \begin{alignedat}{2}
    (Z0) \quad & \te{$\mathcal{Z}$ is a lattice under inclusion,} \\
    (Z1) \quad & \rho(0_{\mathcal{Z}})= 0, \\
    (Z2) \quad & X,Y \in \mathcal{Z} \te{ and } X \subsetneq Y \Rightarrow \\
    & 0 < \rho(Y)-\rho(X)<|Y|-|X|,\\
    (Z3) \quad & X,Y \in \mathcal{Z} \Rightarrow\rho(X) + \rho(Y) \geq \\
    & \rho(X \vee Y) + \rho(X \wedge Y) + |(X \cap Y) \setminus (X \wedge Y)|.
    \end{alignedat}
    \end{equation}
    
\end{thm}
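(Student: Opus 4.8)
The plan is to prove the biconditional in both directions, since it characterizes exactly when a candidate pair $(\mathcal{Z},\rho)$ is realized by a matroid. For the forward (necessity) direction I would assume $M=(E,\rho)$ is a matroid whose set of cyclic flats is $\mathcal{Z}$ and verify $(Z0)$--$(Z3)$ directly from the rank properties in Theorem \ref{matViaRank}. Condition $(Z0)$ is precisely the lattice property already recorded above (citing \cite{latticeOfCyclicFlats}). For $(Z1)$, I would note that the least cyclic flat $0_{\mathcal{Z}}$ is the set of loops, i.e. $\cl(\emptyset)$, which is a union of singleton circuits and a flat, and has rank $0$. For $(Z2)$, the left inequality holds because $X\subsetneq Y$ with both flats forces some $y\in Y\setminus X$ to lie outside $\cl(X)$, so that $\rho$ strictly increases; the right inequality is equivalent to $\eta(X)<\eta(Y)$, which I would obtain by contradiction: if the nullity did not strictly increase then $\rho(Y)-\rho(X)=|Y|-|X|$, making $Y\setminus X$ independent over $X$, and removing any $y\in Y\setminus X$ would then drop the rank, contradicting that $Y$ is cyclic (has no coloops in $M|Y$).

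For $(Z3)$ I would start from the semimodular inequality \eqref{rankEqs}(iii) applied to $X$ and $Y$, and then use the standard fact that closure preserves rank to rewrite $\rho(X\cup Y)=\rho(\cl(X\cup Y))=\rho(X\vee Y)$. The remaining gap is closed by the key lemma that, since $X\cap Y$ is a flat (an intersection of flats) and $X\wedge Y$ is its largest cyclic subset, namely the union of all circuits contained in $X\cap Y$, the complementary elements $(X\cap Y)\setminus(X\wedge Y)$ are coloops of $M|(X\cap Y)$ that can be stripped off one at a time. Each removal lowers rank and cardinality by one, so
\[
\rho(X\cap Y)=\rho(X\wedge Y)+\bigl|(X\cap Y)\setminus(X\wedge Y)\bigr|.
\]
Substituting this together with $\rho(X\cup Y)=\rho(X\vee Y)$ into semimodularity yields $(Z3)$.

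For the backward (sufficiency) direction, which I expect to be the main obstacle, I would build the matroid explicitly by defining a candidate rank function on all of $\mathcal{P}(E)$ by
\[
\rho_M(X)=\min_{Z\in\mathcal{Z}}\bigl(\rho(Z)+|X\setminus Z|\bigr),
\]
and then verifying the three rank axioms (i)--(iii) of \eqref{rankEqs}. Boundedness and monotonicity are routine consequences of $(Z1)$ and $(Z2)$; the crux is submodularity of $\rho_M$, and this is exactly where the strengthened inequality $(Z3)$ is indispensable. Concretely, if $Z_X$ and $Z_Y$ attain the minima for $X$ and $Y$, one combines them using the join $Z_X\vee Z_Y$ and meet $Z_X\wedge Z_Y$, and the extra term $|(Z_X\cap Z_Y)\setminus(Z_X\wedge Z_Y)|$ in $(Z3)$ is precisely what accounts for the discrepancy between the lattice meet and the ordinary set intersection, making the submodular estimate go through. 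Once $\rho_M$ is a matroid rank function, I would finish by identifying its cyclic flats: every $Z\in\mathcal{Z}$ should be shown to be a cyclic flat with $\rho_M(Z)=\rho(Z)$, and conversely every cyclic flat of $\rho_M$ should be shown to lie in $\mathcal{Z}$. The genuinely delicate part is this last step—ruling out spurious cyclic flats and confirming the minimum is attained at the intended members of $\mathcal{Z}$—where the strict nullity growth from $(Z2)$ prevents extra cyclic flats from appearing and $(Z3)$ keeps distinct elements of $\mathcal{Z}$ distinct as cyclic flats.
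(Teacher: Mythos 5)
The paper states this theorem without proof, importing it verbatim from \cite{latticeOfCyclicFlats}, so there is no internal proof to compare against; your proposal essentially reconstructs the argument of that cited source. Your necessity direction is complete and correct: the coloop-stripping identity $\rho(X\cap Y)=\rho(X\wedge Y)+\left|(X\cap Y)\setminus(X\wedge Y)\right|$ is exactly the right bridge from the semimodular inequality to (Z3), and your nullity argument for the strict right-hand inequality in (Z2) (equality would make some $y\in Y\setminus X$ a coloop of $M|Y$, contradicting cyclicity) is sound. For sufficiency, the rank formula $\rho_M(X)=\min_{Z\in\mathcal{Z}}\left(\rho(Z)+|X\setminus Z|\right)$ and the use of (Z3)'s extra term to absorb the discrepancy $\left|(Z_X\cap Z_Y)\setminus(Z_X\wedge Z_Y)\right|$ between lattice meet and set intersection are precisely the mechanism of the published proof; the steps you flag as delicate — that $\rho_M$ restricts to $\rho$ on $\mathcal{Z}$ (which for incomparable $Z'$ requires (Z3), not just (Z2)) and that no cyclic flats arise outside $\mathcal{Z}$ — are indeed where the bulk of the cited paper's lemmas live, so your sketch is a faithful outline with those verifications left unexecuted rather than a divergent or flawed route.
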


The \emph{restriction} of $M$ to $X$ is the matroid $M|X = (X, \rho_{|X})$ with $\rho_{|X}(Y)=\rho(Y)$ for all subsets $Y \subseteq X$. This is clearly a matroid since $\rho$ satisfying the conditions \eqref{rankEqs} (i)-(iii) implies $\rho_{X|}$ also satisfying them. Similarly, a subset of $X$ is independent for $M|X$ exactly if it is independent for the original matroid $M$.

For each matroid $M = (\rho, E)$ there is a \emph{dual matroid} $M^* = (\rho^*, E)$ defined by $\rho^*(X) = \rho(E\setminus X) + |X| - \rho(E)$ (which can be proved by checking conditions (i)-(iii) in \eqref{matViaRank}). This means that a subset $X\subseteq E$ is independent for $M^*$ if and only if $\rho(E\setminus X) = \rho(E)$ i.e. $X \subseteq \cl(E\setminus X)$ for $M$. The circuits of $M^*$ are the minimal sets $C^* \subseteq E$ such that $\rho(E \setminus C^*) < \rho(E)$.

Lastly, we need the concept of a uniform matroid. An \emph{$(n,k)$-uniform matroid} is a matroid $M=(E,\I)$ for which $|E|=n$ and a set $X \subseteq E$ is independent if and only if $|X| \leq k$. Defined via the rank function, a matroid is $(n,k)$-uniform if $|E|=n$ and $\rho(X) = \min\{ |X|, k\}$ for all $X \subseteq E$. Note that $\rho(E)=k$. An $(n,k)$-uniform matroid is obtained for instance as the matric matroid of $n$ randomly chosen uniformly distributed vectors in $\{ x \in \mathbb{R}^k:  ||x||< a\}$ with $0<a \in \mathbb{R}$ and where $||\cdot||$ denotes the Euclidean norm. In this case, we get the desired matroid with probability 1.


\section{Almost affine LRCs and their connection to matroids}

The results in this paper apply to a class of codes called \emph{almost affine codes}, with the following definition: \newline

\begin{mydef}
    A code $C \subseteq \Sigma^n$, where $\Sigma$ is a finite set of size $s \geq 2$, is almost affine if for each $X \subseteq [n]$:
    \begin{equation*}
    \log_s(|C_X|) \in \mathbb{Z}.
    \end{equation*}
    
\end{mydef}

The projections $C_X$ of an almost affine code $C$ are also almost affine. The linear codes are a special case of almost affine codes.

The following theorem is the basis for our application of matroid theory to find good LRCs: \newline

\begin{thm} (\cite{affinematroidlink})
    \label{matroidFromAlmostAffine}
    Every almost affine code $C \subseteq \Sigma^n$ with $s = |\Sigma|$ induces a matroid $M_C = ([n],\rho_C)$, where
    \begin{equation}
    \label{eqMatroidFromAlmostAffine}
    \rho_C (X) = \log_s(|C_X|).
    \end{equation}
\end{thm}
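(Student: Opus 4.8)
The plan is to verify that $\rho_C$ satisfies the three rank axioms (i)--(iii) of \eqref{rankEqs}; by the alternative matroid definition via the rank function, this is exactly what is needed to conclude that $M_C = ([n],\rho_C)$ is a matroid. As a preliminary, the almost affine hypothesis guarantees $\rho_C(X) = \log_s|C_X| \in \mathbb{Z}$ for every $X \subseteq [n]$, so $\rho_C$ is a well-defined function $\mathcal{P}([n]) \to \mathbb{Z}$, as the definition requires.

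The boundedness axiom (i) and the monotonicity axiom (ii) I would dispatch immediately. For (i), since $C$ is nonempty we have $1 \leq |C_X| \leq s^{|X|}$ (as $C_X \subseteq \Sigma^{|X|}$), and applying $\log_s$ gives $0 \leq \rho_C(X) \leq |X|$. For (ii), whenever $X \subseteq Y$ the coordinate-forgetting map $C_Y \to C_X$ is surjective, so $|C_X| \leq |C_Y|$, and monotonicity of $\log_s$ yields $\rho_C(X) \leq \rho_C(Y)$.

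The real content is the submodular inequality (iii), $\rho_C(X) + \rho_C(Y) \geq \rho_C(X\cup Y) + \rho_C(X \cap Y)$, equivalently $|C_{X\cup Y}|\,|C_{X\cap Y}| \leq |C_X|\,|C_Y|$, and this is also where the almost affine hypothesis becomes indispensable. The key lemma I would establish is a \emph{uniform fiber property}: for $X \subseteq Y$ every (necessarily nonempty) fiber of the projection $C_Y \to C_X$ has the same size $|C_Y|/|C_X| = s^{\rho_C(Y)-\rho_C(X)}$. I would prove this first for a single added coordinate $Y = X \cup \{j\}$, where $|C_X| \leq |C_Y| \leq s\,|C_X|$: almost affineness forces $|C_Y| \in \{\,|C_X|,\, s\,|C_X|\,\}$, since $\rho_C(Y)-\rho_C(X) \in \{0,1\}$. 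In the first case the projection is a bijection and all fibers have size $1$; in the second case the $|C_X|$ fibers have sizes in $\{1,\dots,s\}$ summing to $s\,|C_X|$, so each must equal the maximum $s$. The general case then follows by adjoining the coordinates of $Y \setminus X$ one at a time and multiplying the uniform, point-independent step multiplicities, giving total fiber size $s^{\rho_C(Y)-\rho_C(X)}$ over every $a \in C_X$.

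With the lemma in hand I would finish (iii) by counting. Writing $A = X \setminus Y$, $B = X \cap Y$, $D = Y \setminus X$, for each $b \in C_B$ let $P_A(b)$ and $P_D(b)$ be the sets of $A$- and $D$-values extending $b$ inside $C_X = C_{A\cup B}$ and $C_Y = C_{B\cup D}$ respectively; the uniform fiber lemma gives $|P_A(b)| = |C_X|/|C_B|$ and $|P_D(b)| = |C_Y|/|C_B|$, both independent of $b$. Since every element of $C_{X\cup Y}$ with $B$-part $b$ has its $(A,D)$-part in $P_A(b) \times P_D(b)$, the number of such elements is at most $|P_A(b)|\,|P_D(b)|$; summing over the $|C_B|$ choices of $b$ yields $|C_{X\cup Y}| \leq |C_X|\,|C_Y|/|C_B| = |C_X|\,|C_Y|/|C_{X\cap Y}|$, which rearranges to (iii). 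The main obstacle is precisely the uniform fiber lemma: the averaging argument in the single-coordinate step is what converts the purely numerical almost affine condition into the rigid combinatorial uniformity that submodularity needs, and a small example with nonuniform fibers (e.g.\ fiber profiles $(2,1)$ and $(2,1)$ over a two-element $C_B$) shows that the inequality genuinely fails without it.
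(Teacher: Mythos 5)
Your proof is correct, and it is essentially the same argument as in the source: the paper itself does not prove this theorem but defers to Simonis and Ashikhmin \cite{affinematroidlink}, whose proof likewise verifies the rank axioms \eqref{rankEqs}(i)--(iii) using exactly your key lemma that for $X \subseteq Y$ every fiber of the projection $C_Y \to C_X$ has the uniform size $s^{\rho_C(Y)-\rho_C(X)}$, established by the same one-coordinate-at-a-time averaging step. Your counting derivation of submodularity over $C_{X\cap Y}$, and the observation that the almost affine hypothesis is what rules out nonuniform fiber profiles, faithfully reconstruct the standard argument.
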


However, not every matroid is a matroid induced by an almost affine code. Examples of matroids not obtainable from almost affine codes are presented in \cite{nonAlmAffMatroid}. Later in this paper (Theorem \ref{thmCorollIII.1}) we present a subclass of matroids for which it is shown that there exists a corresponding almost affine code.

Note that according to this theorem, the matroid induced by a linear code is the matric matroid induced by the columns of its generator matrix.

\subsection{Parameters $(n,k,d,r,\delta)$ as matroid invariants}

The remarkable theorem that allows us to analyze the parameters $(n,k,d,r,\delta)$ of an LRC via its associated matroid is the following \cite{artikkeli}: \newline

\begin{thm} (\cite{artikkeli})
    Let $C$ be an almost affine LRC with the associated matroid $M_C=( [n],\rho_C)$. Then the parameters $(n,k,d,r,\delta)$ of $C$ are matroid invariants, where
    \label{thmParamsFromMatroids} \newpage
    \begin{equation}
    \label{paramsFromMatroidsEq}
    \begin{alignedat}{4}
    &\textrm{(i)} && k = \rho_C([n]), \\
    &\textrm{(ii)} && d = \min\{|X|:X \in \mathcal{C}(M^*)\},  \\
    &\textrm{(iii)} \quad && \textrm{$C$ has all-symbol locality $(r,\delta)$ if and only if for every $j\in [n]$  there exists a} \\
    & &&\textrm{subset $S_j \subseteq [n]$ such that } \\
    &  &&\textrm{a) } j \in S_j, \\
    & &&\textrm{b) } |S_j|\leq r+\delta-1, \\
    & && \te{c) } d(C_{S_j}) = \min\{|X|: X \in \mathcal{C}((M_C|S_j)^*)\} \geq \delta, \\
    & && \te{or the equivalent condition to (c) that} \\
    &  &&\te{c') } \textrm{For all $L \subseteq S_j$ with $|L|=|S_j|-(\delta-1)$, and all $l\in S_j \setminus L$,} \\
    &  &&\quad \textrm{we have $\rho_C (L\cup l) = \rho_C(L) $.}
    \end{alignedat}
    \end{equation}
\end{thm}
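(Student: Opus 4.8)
The plan is to exploit throughout the single identity $\rho_C(X)=\log_s|C_X|$ from Theorem~\ref{matroidFromAlmostAffine}, which converts every statement about sizes of projections into a statement about the rank function. Part~(i) is then immediate: since projecting to all of $[n]$ changes nothing, $C_{[n]}=C$, and so $\rho_C([n])=\log_s|C|=\log_q|C|=k$. Parts~(ii) and~(iii) both hinge on recognizing when a projection is injective, and the bulk of the genuine work sits in part~(ii).

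For part~(ii) I would first record the recovery criterion for the erasure channel: if the positions in $Y\subseteq[n]$ are erased, the codeword is uniquely recoverable from the surviving positions exactly when the projection $C\to C_{[n]\setminus Y}$ is injective, i.e.\ when $|C_{[n]\setminus Y}|=|C|$, which by the rank identity reads $\rho_C([n]\setminus Y)=\rho_C([n])=k$. Next I would show, working directly from the definition $d=\min_{\mathbf{x}\ne\mathbf{y}}\Delta(\mathbf{x},\mathbf{y})$ rather than from any minimum-weight argument, that
\[
d=\min\{|Y|:\rho_C([n]\setminus Y)<k\}.
\]
The inequality $\le$ comes from a minimizing pair $\mathbf{x}\ne\mathbf{y}$ whose difference set $D$ has $|D|=d$: the two codewords agree off $D$, so $\rho_C([n]\setminus D)<k$. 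The reverse inequality comes from any $Y$ with $\rho_C([n]\setminus Y)<k$: non-injectivity of the projection yields distinct codewords agreeing off $Y$, hence $\Delta\le|Y|$. Finally, because the circuits of $M^*$ are by definition the minimal sets $C^*$ with $\rho_C([n]\setminus C^*)<\rho_C([n])$, and any set with this rank-drop property contains such a minimal set of no larger size, the minimum on the right equals $\min\{|X|:X\in\mathcal{C}(M^*)\}$, yielding~(ii).

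For part~(iii) the strategy is to reduce everything to the definition of an $(r,\delta)$-locality set together with part~(ii) applied to a restriction. The key preliminary step is to identify the matroid of a projection code: since $(C_S)_Y=C_Y$ for $Y\subseteq S$, the rank identity gives $\rho_{C_S}(Y)=\rho_C(Y)$, so the matroid induced by $C_S$ is exactly the restriction $M_C|S$. With this in hand, conditions~a) and~b) are verbatim the requirements $j\in S_j$ and $|S_j|\le r+\delta-1$ of a locality set, while condition~c) is the reformulation $d(C_{S_j})\ge\delta$ of the locality axiom~(ii$''$): applying part~(ii) to the almost affine code $C_{S_j}$ and its matroid $M_C|S_j$ rewrites $d(C_{S_j})$ as $\min\{|X|:X\in\mathcal{C}((M_C|S_j)^*)\}$. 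The equivalent form~c$'$) is obtained by taking $\log_s$ in the locality axiom~(ii$'$), turning $|C_{L\cup l}|=|C_L|$ into $\rho_C(L\cup l)=\rho_C(L)$. All-symbol locality is precisely the demand that every $j$ lie in some such $S_j$, so both directions of the asserted equivalence follow from this definitional matching.

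The step I expect to be the main obstacle is the minimum-distance identity in part~(ii): one must establish it without the convenient ``minimum weight of a nonzero codeword'' shortcut, since an almost affine code carries no assumed linear structure and need not even contain a distinguished zero word. The argument therefore has to run entirely through injectivity of projections and the size identity $\rho_C(X)=\log_s|C_X|$, and some care is needed to pass cleanly between the family of all rank-dropping sets and the subfamily of minimal ones (the dual circuits) when comparing minima.
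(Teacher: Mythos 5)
Your proposal is correct, and its skeleton coincides with the paper's: part~(i) by evaluating the rank identity at $X=[n]$, part~(ii) via circuits of the dual matroid, and part~(iii) by identifying the matroid of a projection with the restriction $M_C|S_j$ and then matching the locality axioms. The difference is one of completeness rather than direction: the paper explicitly gives only a proof sketch and imports the two hard ingredients from \cite{affinematroidlink}, namely that $d$ equals the minimum size of a circuit of $M_C^*$ and that $M_{C_X}=M_C|X$ for $X\subseteq[n]$, whereas you prove both from scratch. Your injectivity-of-projections argument for $d=\min\{|Y|:\rho_C([n]\setminus Y)<k\}$ is exactly the right substitute for the minimum-weight shortcut, which is indeed unavailable since an almost affine code has no linear structure or distinguished zero word; the two inequalities you give (a closest pair of codewords forces a rank drop on the complement of their difference set, and any rank-dropping $Y$ yields two codewords agreeing off $Y$) are both sound, and the passage from rank-dropping sets to inclusion-minimal ones matches the paper's own definition of the circuits of $M^*$ in Section~3. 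Likewise your derivation of $M_{C_S}=M_C|S$ from $(C_S)_Y=C_Y$ and the rank identity of Theorem~\ref{matroidFromAlmostAffine} is the standard argument behind the citation. What the paper's route buys is brevity; what yours buys is a self-contained proof. One very minor point you share with the paper: in part~(iii) you invoke the equivalence of the locality condition~(ii) with the forms~(ii)\q\ and~(ii)\q\q, which the paper asserts without proof in Section~2, so strictly speaking that equivalence is an unproven (though easy) ingredient in both treatments; also, $d(C_{S_j})$ tacitly presupposes $|C_{S_j}|\geq 2$, an edge case neither you nor the paper addresses.
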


We will not give a complete proof here but the main ideas of why this result holds. For $n$, $[n]=E$ and $n = |E|$ follow directly from the definitions in Theorem \ref{matroidFromAlmostAffine}. The result $k=\rho_C([n])$ follows straight from \eqref{eqMatroidFromAlmostAffine} since by choosing $X=[n]$, the right side of  the equation is the same as the definition for $k$. In \cite{affinematroidlink} it is proven that $d$ equals the minimum size of a circuit of the dual matroid $M_C^*$
and that $M_{C_x}= M|X$ for $X \subseteq [n]$. Because $C_X$ is also almost affine, it follows that

\begin{equation*}
d(C_X) = \min\{|X|:X \in \mathcal{C}((M|X)^*)\} ,
\end{equation*}

where $d(C_X)$ denotes the minimum distance of $C_X$. An equivalent condition to condition (iii) c) is that for every set $X \subseteq S_j$ for which $|X| \leq \delta-1$, we have $\rho(S_j\setminus X) = \rho(S_j)$, according to our considerations for dual matroids in Section 3. The condition c') is easily seen to be equivalent to this. 

Now we can view the above results as definitions for the parameters  $(n,k,d,r,\delta)$ for matroids. From the viewpoint of the lattice of cyclic flats the parameters are obtained as follows: \newline

\begin{thm} (\cite{artikkeli})
    Let $M=(E,\rho)$ be a matroid with $0 < \rho(E)$ and $1_{\mathcal{Z}} = E$. Then
    \label{thmParamsFromCyclicFlats}
    \begin{equation}
    \label{eqParametersFromCyclicFlats}
    \begin{alignedat}{3}
    &(i) \quad && n=|1_{\mathcal{Z}}|, \\
    &(ii) \quad && k = \rho(1_{\mathcal{Z}}), \\
    &(iii) \quad && d=n-k+1-\max\{\eta(Z): Z \in coA_{\mathcal{Z}}\}, \\
    &(iv) \quad && \te{$M$ has locality $(r,\delta)$ if and only if for each $x \in E$ there exists a cyclic set} \\
    & && \te{$S_x \in \mathcal{U}(M)$ such that} \\
    & && \te{a) } x \in S_x, \\
    & && \te{b) } |S_x| \leq r+\delta-1, \\
    & && \te{c) } d(M|S_x) = \eta(S_x)+1-\max\{\eta(Z): Z \in coA_{\mathcal{Z}(M|S_x)}\} \geq \delta.
    \end{alignedat}
    \end{equation}
\end{thm}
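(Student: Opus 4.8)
The plan is to derive everything from Theorem~\ref{thmParamsFromMatroids}, which already identifies $(n,k,d,r,\delta)$ as matroid invariants, by rewriting each invariant through the lattice of cyclic flats. Parts (i) and (ii) are immediate: the hypothesis $1_{\mathcal{Z}}=E$ together with Theorem~\ref{thmParamsFromMatroids}(i) gives $n=|E|=|1_{\mathcal{Z}}|$ and $k=\rho(E)=\rho(1_{\mathcal{Z}})$.

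For (iii) I would begin from Theorem~\ref{thmParamsFromMatroids}(ii), $d=\min\{|X|:X\in\mathcal{C}(M^*)\}$, i.e. $d$ is the girth of the dual matroid. Using the description of dual circuits from Section~3 (minimal $X$ with $\rho(E\setminus X)<\rho(E)$) and passing to complements $Y=E\setminus X$, this rewrites as $d=n-\max\{|Y|:\rho(Y)<k\}$. The heart of the argument is then the identity
\[
\max\{|Y|:Y\subseteq E,\ \rho(Y)<k\}=k-1+\max\{\eta(Z):Z\in coA_{\mathcal{Z}}\}.
\]
For the inequality ``$\le$'' I would show that any $Y$ with $\rho(Y)<k$ satisfies $\eta(Y)\le\max\{\eta(Z):Z\in coA_{\mathcal{Z}}\}$: replace $Y$ by its cyclic core (delete the coloops of $M|Y$, which leaves nullity unchanged), take its closure to obtain a cyclic flat $Z'$ with $\eta(Z')\ge\eta(Y)$ and $\rho(Z')=\rho(\text{core})\le\rho(Y)<k$ (so $Z'\subsetneq E$), and finally lift $Z'$ to a coatom using that nullity is monotone under inclusion (a consequence of \eqref{rankEqs}, since size grows at least as fast as rank). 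Combined with $\rho(Y)\le k-1$ this yields $|Y|=\rho(Y)+\eta(Y)\le(k-1)+\max\eta$. For ``$\ge$'' I would take a coatom $Z$ of maximal nullity and adjoin rank-increasing elements of $E\setminus Z$ one at a time until the rank reaches $k-1$; each such step preserves the nullity, producing a set of size $k-1+\eta(Z)$ and rank $<k$. Substituting the identity into $d=n-\max\{|Y|:\rho(Y)<k\}$ gives (iii).

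For (iv) I would check two things. First, the formula in (c) is exactly part (iii) applied to the restriction $M|S_x$: its ground set is $S_x$, so ``$n$'' becomes $|S_x|$ and ``$k$'' becomes $\rho(S_x)$, whence $d(M|S_x)=|S_x|-\rho(S_x)+1-\max\{\eta(Z):Z\in coA_{\mathcal{Z}(M|S_x)}\}=\eta(S_x)+1-\max\{\eta(Z):Z\in coA_{\mathcal{Z}(M|S_x)}\}$. This application is legitimate (under $\rho(S_x)>0$) because $S_x$, being cyclic in $M$, is the top element $1_{\mathcal{Z}(M|S_x)}$: it is trivially a flat of $M|S_x$, and the circuits of $M$ witnessing its cyclicity lie inside $S_x$, hence are circuits of $M|S_x$. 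Second, the conditions (a)--(c) here must be matched with the cyclicity-free locality conditions of Theorem~\ref{thmParamsFromMatroids}(iii); the only apparent difference is the requirement $S_x\in\mathcal{U}(M)$, and I would argue this is automatic. Indeed, if $S_x$ had a coloop $x_0$ in $M|S_x$, then $\{x_0\}$ would be a circuit of $(M|S_x)^*$, forcing $d(M|S_x)=1<\delta$ and violating (c); hence every element of $S_x$ lies in a circuit of $M|S_x$, i.e. $S_x$ is cyclic. Thus, once $\delta\ge2$, imposing cyclicity costs nothing and the two sets of locality conditions coincide.

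The hard part will be the identity in (iii), i.e. showing that the extremal $Y$ is governed by a coatom of $\mathcal{Z}$. The delicate points are that passing to the cyclic core and then to its closure does not decrease the nullity while keeping the rank strictly below $k$ (so the resulting cyclic flat stays proper), and that nullity is monotone along inclusions so the bound can be transported from an arbitrary proper cyclic flat up to a coatom. Once these structural facts are secured, both inequalities, and hence (iii) and the formula in (iv)(c), follow in a few lines.
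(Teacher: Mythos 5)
Your proposal is correct and follows essentially the same route as the paper: both reduce via Theorem \ref{thmParamsFromMatroids} to computing the maximal size of a set $Y$ with $\rho(Y)<\rho(E)$, bound it above by $k-1+\max\{\eta(Z): Z \in coA_{\mathcal{Z}}\}$, and attain that bound by growing a maximal-nullity coatom with rank-increasing elements until the rank reaches $k-1$. The one substantive difference is that the paper imports the upper bound as Lemma \ref{lemmaMaxNullity}, citing \cite{artikkeli} for its proof, whereas you prove it yourself --- stripping $Y$ to its cyclic core (nullity preserved), passing to the closure to obtain a cyclic flat of no smaller nullity and of the same rank (hence still proper), and lifting to a coatom via the monotonicity of $\eta$ under inclusion --- and this argument is sound, making your version self-contained where the paper's is an outline. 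You also make explicit two points the paper glosses over in (iv): that a cyclic $S_x$ is the top element of $\mathcal{Z}(M|S_x)$, which is what legitimizes applying (iii) to the restriction, and that demanding $S_x \in \mathcal{U}(M)$ costs nothing because a coloop of $M|S_x$ would be a singleton circuit of $(M|S_x)^*$, forcing $d(M|S_x)=1<\delta$; both checks are correct and are exactly what is needed to match the locality conditions of Theorem \ref{thmParamsFromMatroids}(iii) with the cyclic-set formulation.
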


The only non-trivial parts of this theorem are the expressions for $d$ in (iii) and (iv) c. From Theorem \ref{thmParamsFromMatroids} we know that $d$ equals the size of a minimal circuit of the dual matroid, i.e. the size of a minimal set $X \in E$ for which $\rho(E \setminus X) < \rho(E)$. The problem of finding $d$ is thus reduced to finding the maximal $Y = E \setminus X$ such that $Y$ does not have full rank. In \cite{artikkeli} the following result is proved: \newline

\begin{lemma}
    \label{lemmaMaxNullity}
    \te{If $\rho(X) < \rho(E)$ and $1_{\mathcal{Z}} = E$, then $\eta(X) \leq \max\{\eta(Z): Z \in coA_{\mathcal{Z}} \}$}.
\end{lemma}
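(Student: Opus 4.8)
The plan is to reduce the inequality to a monotonicity property of nullity along the lattice of cyclic flats, and then to manufacture, out of the given $X$, a proper cyclic flat whose nullity is at least $\eta(X)$. The reduction rests on axiom (Z2): if $Z \subsetneq W$ are cyclic flats, then $\rho(W)-\rho(Z) < |W|-|Z|$, i.e.\ $\eta(Z) < \eta(W)$, so nullity is strictly increasing along the lattice order. Consequently, once we have any cyclic flat $Z \ne 1_{\mathcal{Z}}$ with $\eta(Z) \ge \eta(X)$, it lies below some coatom $W \in coA_{\mathcal{Z}}$ and $\eta(W) \ge \eta(Z) \ge \eta(X)$, which finishes the proof.

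First I would record how nullity behaves under the two natural operations. Taking the closure $\cl(Y)$ preserves rank and can only enlarge the set, so $\eta(\cl(Y)) \ge \eta(Y)$; deleting a coloop $x$ of $M|Y$ (an element with $\rho(Y\setminus\{x\}) = \rho(Y)-1$) drops both $|Y|$ and $\rho(Y)$ by one and hence leaves $\eta$ fixed. Iterating the latter until no coloops remain yields the largest cyclic subset of $Y$, its \emph{cyclic core}, with the same nullity as $Y$. Thus closure weakly increases nullity and the cyclic core preserves it. Now set $A_0 = \cl(X)$, a flat with $\eta(A_0) \ge \eta(X)$ and $\rho(A_0) = \rho(X) < \rho(E)$, and alternate: let $B_i$ be the cyclic core of $A_i$ and put $A_{i+1} = \cl(B_i)$. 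Since $B_i \subseteq A_i$ and $A_i$ is a flat, $A_{i+1} = \cl(B_i) \subseteq \cl(A_i) = A_i$, so the $A_i$ form a decreasing chain of flats, while $\eta(A_{i+1}) \ge \eta(B_i) = \eta(A_i)$ keeps the nullities non-decreasing. As $E$ is finite the chain stabilizes at a flat $Z = \cl(\operatorname{cyc}(Z))$ with $\eta(Z) \ge \eta(X)$ and $\rho(Z) \le \rho(A_0) < \rho(E)$, so $Z \ne E$.

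The step I expect to be the crux is verifying that this fixpoint $Z$ is genuinely cyclic, i.e.\ that its cyclic core $B = \operatorname{cyc}(Z)$ equals $Z$. Suppose $x \in Z \setminus B$. Because $Z = \cl(B)$ and $B \subseteq Z\setminus\{x\}$, monotonicity of closure gives $x \in \cl(B) \subseteq \cl(Z\setminus\{x\})$, so $\rho(Z\setminus\{x\}) = \rho(Z)$; hence $x$ is not a coloop of $M|Z$ and therefore lies in some circuit contained in $Z$, forcing $x \in B$ --- a contradiction. Thus $Z = B$ is cyclic, and being equal to its own closure it is a cyclic flat. (This convergence argument is what genuinely couples the two opposing operations; a single pass $\cl(\operatorname{cyc}(\cl(X)))$ need not land on a cyclic flat, which is why the monotone iteration is essential.)

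Finally I would transfer nullity to a coatom. Since $1_{\mathcal{Z}} = E$ and $\rho(Z) < \rho(E)$, we have $Z \ne 1_{\mathcal{Z}}$, so in the finite lattice $\mathcal{Z}$ the element $Z$ lies below some coatom $W \in coA_{\mathcal{Z}}$. If $Z = W$ the claim is immediate; otherwise $Z \subsetneq W$ and (Z2) gives $\eta(Z) < \eta(W)$. Either way $\eta(W) \ge \eta(Z) \ge \eta(X)$, and therefore $\max\{\eta(Z): Z \in coA_{\mathcal{Z}}\} \ge \eta(X)$, as required.
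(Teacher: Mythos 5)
Your proof is correct, but note that the paper does not actually prove Lemma \ref{lemmaMaxNullity} at all: it only quotes the result from \cite{artikkeli} ("In \cite{artikkeli} the following result is proved"), so your argument supplies a proof where the text has none, built entirely from the cyclic-flat axioms. Your two pillars are sound: axiom (Z2) gives $\eta(W)-\eta(Z)=(|W|-|Z|)-(\rho(W)-\rho(Z))>0$ for cyclic flats $Z\subsetneq W$, so it suffices to exhibit one cyclic flat $Z\neq E$ with $\eta(Z)\geq\eta(X)$; and your closure/cyclic-core alternation produces such a $Z$, since closure preserves rank while weakly growing the set, coloop deletion preserves nullity, the rank along the chain never exceeds $\rho(X)<\rho(E)$ (so $Z\neq 1_{\mathcal{Z}}=E$), and your fixpoint verification (a non-coloop of $M|Z$ lies in a circuit of $M|Z$, hence in $\operatorname{cyc}(Z)$) is exactly right. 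The final lift to a coatom of the finite lattice is also correct.

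One inaccuracy, harmless to the proof: your parenthetical claim that a single pass $\cl(\operatorname{cyc}(\cl(X)))$ "need not land on a cyclic flat" is false. The closure of a cyclic set $B$ is always cyclic: for $x\in\cl(B)\setminus B$, extending a basis $I$ of $B$ by $x$ yields a circuit $C$ with $x\in C\subseteq I\cup\{x\}\subseteq\cl(B)$, while every $x\in B$ already lies in a circuit inside $B$. So one pass suffices and your iteration in fact stabilizes after a single step; the monotone-iteration machinery is valid but unnecessary. Also, in the degenerate case $\operatorname{cyc}(\cl(X))=\emptyset$ your fixpoint is $0_{\mathcal{Z}}=\cl(\emptyset)$, which is still a cyclic flat of rank $0<\rho(E)$ (and then $\eta(X)=0$, making the claim trivial since coatoms exist because $0_{\mathcal{Z}}\neq 1_{\mathcal{Z}}$), so your argument covers it verbatim.
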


Let us examine the set $Y'$ that we get by taking a coatom  of maximal nullity $Z_{max}$ and adding elements to it such that it reaches rank $\rho(E)-1$. Every element added to $Z_{max}$ increases rank, due to Lemma \ref{lemmaMaxNullity} (since adding an element always increases either rank or nullity by one). Thus we have $|Y'| = \rho(Y')+ \eta(Y') = \rho(E) - 1 + \eta(Z_{max})$. Using Lemma \ref{lemmaMaxNullity} again, we notice that $Y'$ now has the maximal size of a set with non-full rank. Thus we have $d = |E|-|Y'| = n-k+1-\max\{\eta(Z): Z \in coA_{\mathcal{Z}}\}$.

\subsection{The generalized Singleton bound for matroids}

It turns out that the generalized Singleton bound \eqref{genSingleton} applies to matroids as well \cite{artikkeli}. Let us next present the main ideas of how this bound is obtained. Firstly we have the two following lemmas: \newline

\begin{lemma} (\cite{artikkeli})
    \label{lemmaChain}
    Let $M = (\rho, E)$ be a matroid with parameters $(n,k,d,r,\delta)$ and let $\{S_x\}_{x \in E}$ be a collection of cyclic sets of $M$ for which the conditions (a)-(c) in Theorem \ref{thmParamsFromCyclicFlats} are satisfied. Then there is a subset of cyclic sets $\{S_j\}_{j \in [m]}$ of $\{S_x\}_{x \in E}$  such that for $Y_j = \cl(Y_{j-1} \cup S_j) = Y_{j-1} \vee \cl(S_j)$, where $j=1,...,m$ we have
    
    \begin{equation}
    \begin{alignedat}{3}
    & \te{(i)} \quad && C: 0_{\mathcal{Z}}=Y_0\subsetneq Y_1 \subsetneq ... \subsetneq Y_m = E \te{ is a chain in $(\mathcal{Z}(M), \subseteq)$}, \\
    & \te{(ii)} \quad && \rho(Y_j) - \rho(Y_{j-1})\leq r, \\
    & \te{(iii)} \quad && \eta(Y_j) - \eta(Y_{j-1})\geq \delta-1.
    \end{alignedat}
    \end{equation}

\end{lemma}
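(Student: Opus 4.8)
The plan is to build the chain greedily, attaching one locality set at a time. I would start from $Y_0 = 0_{\mathcal{Z}}$ and, as long as $Y_{j-1} \neq E$, pick any element $x \in E \setminus Y_{j-1}$, set $S_j := S_x$, and define $Y_j := \cl(Y_{j-1} \cup S_j) = Y_{j-1} \vee \cl(S_j)$. First I would record that the closure of a cyclic set is again a cyclic flat (every element of $\cl(S_j)$ lies on a circuit contained in $\cl(S_j)$), so each $\cl(S_j) \in \mathcal{Z}(M)$ and each $Y_j$, being a join of two cyclic flats, is itself a cyclic flat; this gives that $C$ is a chain in $(\mathcal{Z}(M), \subseteq)$. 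The chain is strict because $x \in S_j \subseteq Y_j$ while $x \notin Y_{j-1}$, and it terminates at $Y_m = E$ because $E$ is finite and $E = 1_{\mathcal{Z}}$ is itself a cyclic flat. This establishes (i).

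Next I would extract the two facts about a single locality set $S_j$ that drive (ii) and (iii). By Theorem \ref{thmParamsFromCyclicFlats}(iv)(c) we have $d(M|S_j) \geq \delta$, which by the equivalent formulation noted after Theorem \ref{thmParamsFromMatroids} means $\rho(S_j \setminus X) = \rho(S_j)$ for every $X \subseteq S_j$ with $|X| \leq \delta - 1$. Taking $|X| = \delta - 1$ gives $\rho(S_j) \leq |S_j| - (\delta - 1)$, hence $\eta(S_j) \geq \delta - 1$, and together with $|S_j| \leq r + \delta - 1$ also $\rho(S_j) \leq r$. Property (ii) then follows immediately from subadditivity of the rank function, since $\rho(Y_j) - \rho(Y_{j-1}) = \rho(Y_{j-1} \cup S_j) - \rho(Y_{j-1}) \leq \rho(S_j) \leq r$.

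For (iii) I would first pass from $Y_j$ to $Y_{j-1} \cup S_j$ and then split off $S_j$. Since closure preserves rank but can only enlarge a set, $\eta(Y_j) = \eta(\cl(Y_{j-1} \cup S_j)) \geq \eta(Y_{j-1} \cup S_j)$; and since the semimodular inequality for $\rho$ turns into supermodularity for $\eta$, we get $\eta(Y_{j-1} \cup S_j) \geq \eta(Y_{j-1}) + \eta(S_j) - \eta(S_j \cap Y_{j-1})$. Thus it suffices to show $\eta(S_j) - \eta(S_j \cap Y_{j-1}) \geq \delta - 1$. Writing $P = S_j \setminus Y_{j-1}$, the redundancy property forces $|P| \geq \delta$: otherwise removing $P$, a set of size $\leq \delta - 1$, would not lower the rank, giving $\rho(S_j \cap Y_{j-1}) = \rho(S_j)$ and hence $S_j \subseteq \cl(S_j \cap Y_{j-1}) \subseteq Y_{j-1}$, contradicting the choice of $S_j$. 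Removing a fixed $(\delta-1)$-subset $X \subseteq P$ leaves the rank unchanged, and removing the remaining $|P| - (\delta - 1)$ elements drops it by at most that many, so $\rho(S_j \cap Y_{j-1}) \geq \rho(S_j) - (|P| - (\delta - 1))$; rearranged, this is exactly $\eta(S_j) - \eta(S_j \cap Y_{j-1}) \geq \delta - 1$, and combining the displayed inequalities yields (iii).

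The main obstacle is this last step: controlling the nullity contributed by $S_j$ over the already-built flat $Y_{j-1}$, where the overlap $S_j \cap Y_{j-1}$ could a priori be large and dependent. The resolution is that the very condition making $S_j$ a good locality set, $d(M|S_j) \geq \delta$, is precisely the statement that any $\delta - 1$ of its elements are redundant; this simultaneously bounds $\rho(S_j)$ from above (for (ii)) and guarantees that each newly attached locality set injects at least $\delta - 1$ units of nullity (for (iii)). By contrast, parts (i) and (ii) are routine once the closure-of-a-cyclic-set fact and subadditivity are in place.
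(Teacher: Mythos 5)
Your proof is correct and follows essentially the same route as the paper's: the same greedy chain of cyclic flats joined one locality set at a time, the same semimodularity argument for (ii), and the same two key facts for (iii), namely $|S_j \setminus Y_{j-1}| \geq \delta$ (forced because $Y_{j-1}$ is a flat, so $S_j \subseteq \cl(S_j \cap Y_{j-1}) \subseteq Y_{j-1}$ would otherwise follow) and the redundancy of any $(\delta-1)$-subset of a locality set. The only cosmetic difference is bookkeeping: you route the nullity increment through supermodularity of $\eta$ together with the lower bound $\rho(S_j \cap Y_{j-1}) \geq \rho(S_j) - \bigl(|P| - (\delta-1)\bigr)$, whereas the paper observes directly that a $(\delta-1)$-subset $X \subseteq S_j \setminus Y_{j-1}$ satisfies $X \subseteq \cl(Y_j \setminus X)$ and so adds size without adding rank; both computations yield $\eta(Y_j) - \eta(Y_{j-1}) \geq \delta - 1$ by the same underlying mechanism.
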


A sketch of the proof is the following: There is a subset $\{S_j\}_{j \in [m]}$ satisfying condition (i) as $\bigcup_{x \in E}S_x = E$. The semimodular inequality \eqref{matViaRank} (iii) implies that $\rho(Y_j) \leq \rho(Y_{j-1}) + \rho(S_j)$, since $\rho(Y_j) = \rho(Y_{j-1} \cup \rho(S_j))$. Together with $\rho(S_j)\leq r$ we get condition (ii).  We have $|S_j \setminus Y_{j-1}| \geq \delta$, since otherwise we would have a set $X\subseteq S_j$ with $|X|\leq \delta-1$ and $X \cap Y_{j-1} = \emptyset$ for which $X \subsetneq \cl(S_j \setminus X)$. This is because $Y_{j-1} \cap S_j$ is a subset of a flat. Such a set $X$ would contradict with the definition of a locality set $S_j$.
Now condition (iii) follows because for any $X \subseteq S_j \setminus Y_{j-1}$ with $|X| = \delta-1$, $X \subseteq \cl(Y_j \setminus X)$. \newline

\begin{lemma} (\cite{artikkeli})
    \label{lemmaTwoIneqs}
    Let $M=(E,\rho)$ be a matroid with parameters $(n,k,d,r,\delta)$ and let $C: 0_{\mathcal{Z}}=Y_0\subsetneq Y_1 \subsetneq ... \subsetneq Y_m = E$ be any chain of $(\mathcal{Z}(M), \subseteq)$ given in Lemma \ref{lemmaChain}. Then we have
    
    \begin{equation*}
    d\leq n-k+1-\eta(Y_{m-1}) \quad \te{and} \quad m \geq \kr.
    \end{equation*}
    
\end{lemma}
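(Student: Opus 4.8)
The plan is to establish the two inequalities separately, since each follows directly from the structural properties of the chain guaranteed by Lemma \ref{lemmaChain}, together with results already available to us.

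For the bound $d \leq n-k+1-\eta(Y_{m-1})$, I would start from the expression for $d$ in Theorem \ref{thmParamsFromCyclicFlats}(iii), namely $d = n-k+1-\max\{\eta(Z): Z \in coA_{\mathcal{Z}}\}$. The desired inequality is then equivalent to showing $\eta(Y_{m-1}) \leq \max\{\eta(Z): Z \in coA_{\mathcal{Z}}\}$. Since $Y_{m-1}$ lies in the chain, it is a cyclic flat, and since $Y_{m-1} \subsetneq Y_m = E$, property (Z2) of the lattice of cyclic flats forces $\rho(Y_{m-1}) < \rho(E)$. Thus $Y_{m-1}$ is a set of non-full rank, and Lemma \ref{lemmaMaxNullity} applies to give exactly the nullity bound we need. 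Substituting this back into the formula for $d$ yields the first inequality.

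For the bound $m \geq \kr$, I would use a telescoping argument based solely on property (ii) of Lemma \ref{lemmaChain}. Summing the rank increments $\rho(Y_j)-\rho(Y_{j-1}) \leq r$ over $j = 1, \dots, m$ gives
\begin{equation*}
k = \rho(E) - \rho(0_{\mathcal{Z}}) = \sum_{j=1}^{m}\bigl(\rho(Y_j)-\rho(Y_{j-1})\bigr) \leq mr,
\end{equation*}
where the telescoping uses $\rho(Y_0) = \rho(0_{\mathcal{Z}}) = 0$ by (Z1) and $\rho(Y_m) = \rho(E) = k$. Dividing by $r$ gives $m \geq k/r$, and since $m$ is an integer we conclude $m \geq \kr$.

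I do not anticipate a substantial obstacle here, as both parts reduce to facts established earlier. The only point requiring care is verifying that $Y_{m-1}$ has non-full rank before invoking Lemma \ref{lemmaMaxNullity}; this is handled by the strict inclusion $Y_{m-1} \subsetneq E$ together with property (Z2). It is worth noting that property (iii) of Lemma \ref{lemmaChain} plays no role in the present lemma; it will instead be combined with these two inequalities, via a further telescoping of the nullity increments, to recover the generalized Singleton bound.
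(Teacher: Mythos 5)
Your proof is correct and takes essentially the same route as the paper, which establishes the first inequality from Theorem \ref{thmParamsFromCyclicFlats}(iii) together with Lemma \ref{lemmaMaxNullity}, and the second by deducing $k = \rho(E) \leq mr$ from the rank increments in Lemma \ref{lemmaChain}(ii). Your write-up is in fact more careful than the paper's brief sketch, since you explicitly verify $\rho(Y_{m-1}) < \rho(E)$ via (Z2) before invoking Lemma \ref{lemmaMaxNullity} and make the telescoping with $\rho(0_{\mathcal{Z}})=0$ explicit.
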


The second inequality roughly follows from condition (ii) in Lemma  \ref{lemmaChain} which together with the fact that $\rho(S_i) \leq r$ for every locality set $S_i$ implies $k = \rho(E) \leq mr$. The first inequality follows from Theorem \ref{thmParamsFromCyclicFlats} (iii) and Lemma \ref{lemmaMaxNullity}.

Combining Lemma \ref{lemmaChain} and Lemma \ref{lemmaTwoIneqs} we now get the generalized Singleton bound for matroids: \newline
\begin{thm} (\cite{artikkeli})
    \label{thmSingletonForMatroids}
    Let $M=(E,\rho)$ be an $(n,k,d,r,\delta)$-matroid, then
    
    \begin{equation}
    \label{eqSingleton}
    d \leq n-k+1-\left( \kr -1\right)(\delta-1).
    \end{equation}
\end{thm}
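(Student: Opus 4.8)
The plan is to derive the bound by combining the two preceding lemmas and telescoping the nullity increments along the chain. First I would apply Lemma \ref{lemmaChain} to a collection of cyclic sets $\{S_x\}_{x\in E}$ witnessing the locality $(r,\delta)$, obtaining a chain
\[
0_{\mathcal{Z}}=Y_0\subsetneq Y_1 \subsetneq \cdots \subsetneq Y_m = E
\]
in $(\mathcal{Z}(M),\subseteq)$ satisfying conditions (i)--(iii). The crucial inputs are condition (iii), $\eta(Y_j)-\eta(Y_{j-1})\geq \delta-1$ for each $j$, together with the two conclusions of Lemma \ref{lemmaTwoIneqs}, namely $d\leq n-k+1-\eta(Y_{m-1})$ and $m\geq \kr$.

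Next I would telescope. Summing condition (iii) over $j=1,\ldots,m-1$ collapses the left-hand side, giving
\[
\eta(Y_{m-1})-\eta(Y_0) = \sum_{j=1}^{m-1}\bigl(\eta(Y_j)-\eta(Y_{j-1})\bigr) \geq (m-1)(\delta-1).
\]
Since $Y_0 = 0_{\mathcal{Z}}$ has rank $0$, its nullity is $\eta(Y_0)=|0_{\mathcal{Z}}|\geq 0$, so $\eta(Y_{m-1})\geq (m-1)(\delta-1)$. Substituting this lower bound into the first inequality of Lemma \ref{lemmaTwoIneqs} yields $d\leq n-k+1-(m-1)(\delta-1)$.

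Finally I would invoke the second conclusion $m\geq \kr$ together with $\delta\geq 2$. Because $\delta-1\geq 1>0$, the inequality $m-1\geq \kr-1$ gives $(m-1)(\delta-1)\geq (\kr-1)(\delta-1)$, and hence
\[
d\leq n-k+1-(m-1)(\delta-1)\leq n-k+1-\left(\kr-1\right)(\delta-1),
\]
which is exactly \eqref{eqSingleton}. The argument is essentially bookkeeping once the two lemmas are available; the only points requiring care are tracking the direction of the inequalities---in particular that enlarging the subtracted quantity $(m-1)(\delta-1)$ via $m\geq\kr$ is what tightens the upper bound on $d$---and checking that the boundary term $\eta(Y_0)$ only helps. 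I expect no genuine obstacle at this stage: all the substantive work has already been packaged into Lemma \ref{lemmaChain} (producing a chain whose nullity grows by at least $\delta-1$ per step and whose length is at least $\kr$) and Lemma \ref{lemmaTwoIneqs}.
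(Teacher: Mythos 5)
Your proposal is correct and follows essentially the same route as the paper's proof: combining $d \leq n-k+1-\eta(Y_{m-1})$ and $m \geq \kr$ from Lemma \ref{lemmaTwoIneqs} with the telescoped nullity increments from Lemma \ref{lemmaChain} (iii). Your explicit handling of the boundary term $\eta(Y_0)\geq 0$ is a minor elaboration the paper leaves implicit, but it changes nothing substantive.
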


\begin{proof}
    By Lemma \ref{lemmaTwoIneqs} we have that $d \leq n-k+1-\eta(Y_{m-1})$. From Lemma \ref{lemmaChain} (iii) and $m \geq \kr$ in Lemma \ref{lemmaTwoIneqs} it now follows that $\eta(Y_{m-1}) \geq (m-1)(\delta-1) \geq \left( \kr -1\right)(\delta-1)$, which yields the desired result.
\end{proof}

When a matroid satisfies \eqref{eqSingleton} as an equation, we say that the matroid achieves the bound or that the matroid is optimal.
\section{A structure theorem}

The rest of this paper will be focused on finding matroids that achieve the generalized Singleton bound or at least come close to it. A good starting point is the following theorem which gives a set of necessary structural properties for a matroid to achieve the bound. We say that a collection of sets $X_1,X_2,...,X_j$ has a nontrivial union if

\begin{equation*}
X_l \nsubseteq \bigcup_{i \in [j]\setminus \{l\}} X_i \te{ for } l \in [j].
\end{equation*}\newline

\begin{thm}
    \label{thmStructure}
    Let $M=(E,\rho)$ be an $(n,k,d,r,\delta) $-matroid with $r<k$ and
    
    \begin{equation*}
    d = n-k+1-\left( \kr -1\right)(\delta-1).
    \end{equation*}
    
    Also, let $\{S_x:x\in E \} \subseteq \mathcal{U}(M)$ be a collection of cyclic sets for which the conditions (a)-(c) in Theorem \ref{thmParamsFromCyclicFlats} are satisfied. Then
    
    \begin{equation*}
    \begin{alignedat}{3}
    & \te{(i) } && 0_\mathcal{Z} = \emptyset, \\
    & \te{(ii) } && \te{for each } x \in E, \\
    & && \te{a) } \eta(S_x)=(\delta-1), \\
    & && \te{b) } S_x \te{ is a cyclic flat and }\mathcal{Z}(M|S_x) = \{X \in \mathcal{Z}(M): X \subseteq S_x \} = \{\emptyset, S_x\}, \\
    & \te{(iii) } && \te{For each collection $F_1,...,F_j$ of cyclic flats in $\{S_x: x \in E\}$ that has a non-} \\
    & && \te{trivial union,} \\
    & && \te{c) } \eta(\bigvee_{i=1}^{j} F_i) = \begin{cases}
    j(\delta-1) & \te{if } j<\lceil\frac{k}{r} \rceil, \\
    n-k \geq \lceil\frac{k}{r} \rceil (\delta-1) & \te{if } j \geq \lceil\frac{k}{r} \rceil,
    \end{cases} \\
    & && \te{d) } \bigvee_{i=1}^{j} F_i = \begin{cases}
    \bigcup_{i=1}^j F_i &\te{if } j<\lceil\frac{k}{r} \rceil, \\
    E & \te{if } j\geq\lceil\frac{k}{r} \rceil,
    \end{cases} \\
    & && \te{e) } \rho(\bigvee_{i=1}^{j} F_i) = \begin{cases}
    |\bigcup_{i=1}^j F_i| - j(\delta-1) & \te{if } j < \lceil\frac{k}{r} \rceil, \\
    k & \te{if } j \geq \lceil\frac{k}{r} \rceil,
    \end{cases} \\
    & && \te{f) } |F_j \cap \bigcup_{i=1}^{j-1} F_i)| \leq |F_j|-\delta \te{ if } j \leq \left\lceil\frac{k}{r} \right\rceil.
    \end{alignedat}
    \end{equation*}
\end{thm}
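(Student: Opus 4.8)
The plan is to read everything off from the tightness of the bound \eqref{eqSingleton}. First I would fix the collection $\{S_x\}$, invoke Lemma \ref{lemmaChain} to obtain a chain $0_{\mathcal{Z}}=Y_0\subsetneq\cdots\subsetneq Y_m=E$, and apply Lemma \ref{lemmaTwoIneqs}. Since $d$ meets the bound, the inequalities $d\le n-k+1-\eta(Y_{m-1})$, $\eta(Y_{m-1})\ge(m-1)(\delta-1)$ and $m\ge\lceil k/r\rceil$ must all be tight, forcing $m=\lceil k/r\rceil$, $\eta(Y_0)=0$, and $\eta(Y_j)-\eta(Y_{j-1})=\delta-1$ for $j=1,\dots,m-1$. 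Because $\eta(Y_0)=|0_{\mathcal{Z}}|$, this gives $0_{\mathcal{Z}}=\emptyset$, which is (i). The same reasoning applies verbatim to \emph{any} chain produced by the construction of Lemma \ref{lemmaChain}, so every such chain covering $E$ has length exactly $\lceil k/r\rceil$ and its first $\lceil k/r\rceil-1$ nullity jumps equal $\delta-1$; this reusability drives the rest.

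For (ii) I argue one symbol at a time. The lower bound $\eta(S_x)\ge\delta-1$ is immediate from condition (c) in Theorem \ref{thmParamsFromCyclicFlats}: there $d(M|S_x)=\eta(S_x)+1-\max\{\eta(Z):Z\in coA_{\mathcal{Z}(M|S_x)}\}\ge\delta$ and the maximum is nonnegative. For the reverse inequality I run the construction of Lemma \ref{lemmaChain} with $S_x$ as the first set, so $Y_1=\cl(S_x)$; since $r<k$ gives $\lceil k/r\rceil\ge 2$, the first nullity jump is exactly $\delta-1$, hence $\eta(\cl(S_x))=\delta-1$. Monotonicity of $\eta$ squeezes $\eta(S_x)=\delta-1$, and $\eta(S_x)=\eta(\cl(S_x))$ with $S_x\subseteq\cl(S_x)$ forces $S_x=\cl(S_x)$, so $S_x$ is a cyclic flat; this is (ii)(a) and the flat assertion of (ii)(b). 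Substituting $\eta(S_x)=\delta-1$ back shows $d(M|S_x)=\delta$ and hence $\max\{\eta(Z):Z\in coA_{\mathcal{Z}(M|S_x)}\}=0$; as $\eta$ is monotone, every proper cyclic flat of $M|S_x$ then has nullity $0$, so is independent and cyclic, hence empty. Thus $\mathcal{Z}(M|S_x)=\{\emptyset,S_x\}$, and since a cyclic flat of $M$ contained in $S_x$ restricts to a cyclic flat of $M|S_x$, the middle set in (ii)(b) coincides with these, finishing (ii).

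Part (iii) I would prove by induction on $j$, handling both regimes at once. Writing $A=\bigvee_{i<j}F_i$, the inductive hypothesis (d) says $A=\bigcup_{i<j}F_i$ is a flat. Because $F_j$ is one of the $S_x$, part (ii)(b) forces the meet $A\wedge F_j\in\{\emptyset,F_j\}$, and $A\wedge F_j=F_j$ would give $F_j\subseteq A=\bigcup_{i<j}F_i$, contradicting the nontrivial union; so $A\wedge F_j=\emptyset$, which means $A\cap F_j$ contains no circuit and is independent. Feeding this into the lattice axiom (Z3) together with submodularity of $\rho$ pins down $\rho(A\vee F_j)=\rho(A)+\rho(F_j)-|A\cap F_j|=\rho(A\cup F_j)$, whence $\eta(A\cup F_j)=\eta(A)+\eta(F_j)=j(\delta-1)$. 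For $j<\lceil k/r\rceil$ the strict chain $\emptyset\subsetneq A\cup F_1\subsetneq\cdots$ can use at most $\lceil k/r\rceil-1$ steps with exact jump $\delta-1$, which bounds $\eta(\bigvee_{i\le j}F_i)$ above by $j(\delta-1)$; since $\eta(\cl(W))=\eta(W)$ only when $W$ is a flat, this forces $\bigvee_{i\le j}F_i=\bigcup_{i\le j}F_i$, yielding (d), (e) and (c). Claim (f) falls out of the same picture: $A\cap F_j$ is an independent subset of $F_j$, and $|A\cap F_j|=\rho(F_j)$ would make it a basis of $M|F_j$, giving $F_j\subseteq\cl(A\cap F_j)\subseteq A$ and again contradicting the nontrivial union, so $|A\cap F_j|\le\rho(F_j)-1=|F_j|-\delta$. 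For $j\ge\lceil k/r\rceil$ the length rigidity from the first paragraph caps the strict chain at $\lceil k/r\rceil$ steps, so $\bigvee_{i=1}^{\lceil k/r\rceil}F_i=E$; this gives (d), (e) with $\rho=k$ and (c) with $\eta=n-k$, the inequality $n-k\ge\lceil k/r\rceil(\delta-1)$ following by telescoping the $\delta-1$ jumps along a chain to $E$.

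The main obstacle is the interface between the set-theoretic ``nontrivial union'' hypothesis, which concerns ordinary unions, and the lattice join $\bigvee$, which involves closures: a priori $F_j\subseteq\cl(\bigcup_{i<j}F_i)$ need not contradict $F_j\not\subseteq\bigcup_{i<j}F_i$. This is precisely why the induction must carry claim (d) — that the relevant unions are already flats — as part of its hypothesis, and why both the cyclic-flat structure of each $M|S_x$ and the length rigidity of the Lemma \ref{lemmaChain} chains have to be in hand first. Getting (Z3), submodularity, and the exact nullity jumps to close simultaneously, so that the upper and lower estimates on $\eta(\bigvee_{i\le j}F_i)$ coincide and pin the join to the union, is the delicate step.
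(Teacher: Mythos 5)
Your proposal is correct and takes essentially the same approach as the paper, which itself gives only an outline: both arguments extract everything from the forced equalities in the chain inequalities of Lemmas \ref{lemmaChain} and \ref{lemmaTwoIneqs} (yielding $m=\left\lceil k/r\right\rceil$, $0_{\mathcal{Z}}=\emptyset$, and exact nullity jumps $\delta-1$ before the last step), then run chains started at a prescribed $S_x$ for (ii) and induct along joins for (iii), so you are in effect filling in the details the paper omits. One local wording slip, not a gap: axiom (Z3) and submodularity each give only the upper bound $\rho(A\vee F_j)\le\rho(A)+\rho(F_j)-|A\cap F_j|$, i.e.\ the lower bound $\eta(A\cup F_j)\ge\eta(A)+\eta(F_j)$, so nothing is ``pinned down'' at that point — equality follows only once your next sentence's chain-rigidity bound $\eta\bigl(\bigvee_{i\le j}F_i\bigr)\le j(\delta-1)$ is in hand, exactly the two-sided squeeze your final paragraph correctly identifies as the crux.
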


The following is an outline of the proof: Let us have a chain C: $0_{\mathcal{Z}}=Y_0\subsetneq Y_1 \subsetneq ... \subsetneq Y_m = E$ in $(\mathcal{Z}(M), \subseteq)$ as earlier. In the process of proving Theorem \ref{thmSingletonForMatroids} we obtained that for every such chain we have

\begin{equation}
\label{eqStructureProof}
\begin{alignedat}{2}
 d & \leq n-k+1-\eta(Y_{m-1}) \\
& \leq n-k+1- (m-1)(\delta-1) \\
& \leq n-k+1-\left( \kr -1\right)(\delta-1) .
\end{alignedat}
\end{equation}


Thus in order to achieve the bound we must have $\eta(Y_{m-1}) = (m-1)(\delta-1)$ for every chain $C$, which together with Lemma \ref{lemmaChain} (iii) implies $\eta(\bigvee_{i=0}^{j} Y_i) = j(\delta-1)$ for $j < m$. This in turn implies $0_\mathcal{Z} = \emptyset$, $\eta(S_x) = (\delta-1)$ and $S_x$ being a cyclic flat for every $x\in E$ as well as condition (iii) c) together with $m=\lkr$ which follows from equation \eqref{eqStructureProof}.


The result $\mathcal{Z}(M|S_x) = \{\emptyset, S_x\}$ is required for $\eta(S_x)=(\delta-1)$ to be possible, since otherwise not every $X \subseteq S_x$ with $|X|=\rho(S_x)$ would have $\rho(X)=\rho(S_x)$.
Note that this also implies that $M|S_x$ is a uniform matroid and that $S_x$ is an atom of the lattice of cyclic flats.

Conditions d) and e) are a direct consequence of c) and  that always $m=\lkr$. If (iii) f) was not true for a locality set $F_j$, we would have $\cl(\bigcup_{i=1}^{j-1} F_i) = \bigcup_{i=1}^{j} F_i$ which would contradict with (iii) c).


\section{Matroid constructions}

In this chapter we will give some explicit matroid constructions first introduced in \cite{artikkeli}.  We later use these constructions to prove existence results for matroids with a large $d$.

Construction 1 gives a class of matroids that is beneficial in the sense that the cyclic flats of its matroids have high rank and minimal size, which implies that the coatoms have small nullity. This in turn means that the matroids from this construction have a large $d$. They also have a simple structure which makes analyzing them easier. 

\emph{Construction 1:} Let $F_1,...,F_m$ be a collection of subsets of a finite set $E$ and let us denote $F_I = \bigcup_{i\in I} F_i$ for $I \subseteq [m]$ . Let $k$ be a positive integer and let $\rho$ be a function $\rho: \{F_i\}_{i \in [m]} \rightarrow \mathbb{Z}$ such that

\begin{equation}
\label{eqConstr1}
\begin{alignedat}{3}
&\te{(i)} \quad && \te{$\{F_i\}_{i \in [m]}$ has a nontrivial union, with $F_{[m]}=E$,} \\
&\te{(ii)} \quad && 0< \rho(F_i) < |F_i| \te{ for every } i \in [m], \\
&\te{(iii)} \quad && \te{There exists $I \subseteq [m]$ such that $F_I-\sum_{i \in I} \eta(F_i) \geq k$,} \\
&\te{(iv)} \quad && \te{If $F_I \in \mathcal{Z}_{<k}$ and $j \in [m]\setminus I $, then $|F_I \cap F_j| < \rho(F_j)$,} \\
&\te{(v)} \quad && \te{If $F_I, F_J \in \mathcal{Z}_{<k}$ and $F_{I \cup J} \notin \mathcal{Z}_{<k}$, then $|F_{I \cup J}|- \sum_{t \in I \cup J} \eta(F_t) \geq k$,}
\end{alignedat}
\end{equation}

where we define
\begin{equation*}
\eta(F_i)=|F_i|-\rho(F_i) \te{ for } i \in [m]
\end{equation*}
and
\begin{equation*}
\mathcal{Z}_{<k} = \{F_J: J \subseteq [m] \te{ and } |F_I|-\sum_{i \in I} \eta (F_i) < k \te{ for all } I \subseteq J \}.
\end{equation*}
Now, to a collection of subsets $F_1,...,F_m$ of $E$, integer $k$ and function $\rho$ that satisfy the conditions (i)-(v), we extend the function $\rho$ to a function on $\mathcal{Z}$, by $\mathcal{Z} = \mathcal{Z}_{<k} \cup E$ and
\begin{equation}
\label{eqExtendRho}
\rho(X) = \begin{cases}
|F_I|-\sum_{i \in I} \eta (F_i) &\te{ if } X=F_I \in \mathcal{Z}_{<k}, \\
k & \te{ if } X=E.
\end{cases}
\end{equation} \newline

\begin{thm} (\cite{artikkeli})
    \label{thmConstr1Props}
    Let $F_1, ..., F_m$ be a collection of subsets of a finite set $E$, $k$ a positive integer and $\rho: \{F_i\}_{i \in [m]} \rightarrow \mathbb{Z}$ a function such that the conditions (i)-(v) of Construction 1 are satisfied. Then $\mathcal{Z}$ and $\rho: \mathcal{Z} \rightarrow \mathbb{Z}$, defined in \eqref{eqExtendRho}, define an $(n,k,d,r,\delta)$-matroid $M(F_1, ..., F_m; k; \rho)$ on $E$ for which $\mathcal{Z}$ is the collection of cyclic flats, $\rho$ is the rank function restricted to the cyclic flats and
    
    \begin{equation*}
    \begin{alignedat}{3}
    & (i) \quad && n = |E|, \\
    & (ii) \quad && k = \rho(E), \\
    & (iii) \quad && d = n-k+1 -\max\{\sum_{i \in I}\eta(F_i): F_I \in \mathcal{Z}_{<k} \}, \\
    & (iv) \quad && \delta-1 = \min_{i \in [m]}\{\eta(F_i)\}, \\
    & (v) \quad && r = \max_{i \in [m]}\{\rho(F_i)\}.
    \end{alignedat}
    \end{equation*}
    
    For each $i \in [m]$, any subset $S \subseteq F_i$ with $|S|=\rho(F_i)+ \delta-1$ is a locality set of the matroid. \newline
\end{thm}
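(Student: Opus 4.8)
The plan is to invoke the characterization of a matroid through its lattice of cyclic flats, i.e.\ the axioms $(Z0)$--$(Z3)$: I will verify that the pair $(\mathcal{Z},\rho)$ defined in \eqref{eqExtendRho} satisfies them, so that $(\mathcal{Z},\rho)$ really is the lattice of cyclic flats and the restricted rank function of a matroid $M$ on $E$, and then extract the parameters from Theorem \ref{thmParamsFromCyclicFlats}. A preliminary point must be settled first: the formula $\rho(F_I)=|F_I|-\sum_{i\in I}\eta(F_i)$ refers to the index set $I$, not merely to the set $F_I$, so I must show it is well defined. Here condition (iv) does the work: if $F_I\in\mathcal{Z}_{<k}$ and some $F_j$ with $j\notin I$ satisfied $F_j\subseteq F_I$, then $|F_I\cap F_j|=|F_j|<\rho(F_j)$ would contradict (ii). Hence $I=\{i:F_i\subseteq F_I\}$ is recovered from the set $F_I$ alone, the index set is canonical, and $\rho$ is unambiguous.

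Next I would check $(Z0)$ and the two easy rank axioms. For the join I expect $F_I\vee F_J=F_{I\cup J}$ when $F_{I\cup J}\in\mathcal{Z}_{<k}$ and $F_I\vee F_J=E$ otherwise; condition (v) is precisely what guarantees that in the second case the rank formula for $F_{I\cup J}$ reaches $k$, so that $E$ (the only element of rank $k$, which by condition (iii) lies outside $\mathcal{Z}_{<k}$) is the least cyclic flat above $F_I\cup F_J$. Since $\mathcal{Z}$ is then a finite join-closed family containing the bottom element $\emptyset=F_\emptyset$, it is automatically a lattice, giving $(Z0)$. Axiom $(Z1)$ is immediate from the formula at $I=\emptyset$. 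For $(Z2)$, given $X\subsetneq Y$ in $\mathcal{Z}$, condition (ii) (equivalently $0<\eta(F_i)<|F_i|$) yields that both rank and nullity strictly increase along the inclusion, which is exactly $0<\rho(Y)-\rho(X)<|Y|-|X|$.

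The hard part will be $(Z3)$, the strengthened semimodular inequality
\[
\rho(X)+\rho(Y)\ge \rho(X\vee Y)+\rho(X\wedge Y)+|(X\cap Y)\setminus(X\wedge Y)|.
\]
I would split into the case $X\vee Y\in\mathcal{Z}_{<k}$ and the boundary case $X\vee Y=E$. In the first case, substituting the explicit rank formula everywhere should reduce the inequality, after cancellation, to a combinatorial statement comparing $\sum\eta(F_i)$ over the relevant index sets with the sizes of the pairwise overlaps of the $F_i$, with condition (iv) bounding the intersection contributions. In the boundary case the right-hand side carries $\rho(E)=k$, and condition (v) is exactly what prevents the inequality from failing. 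Matching the extra counting term $|(X\cap Y)\setminus(X\wedge Y)|$ and keeping the two cases consistent is where the genuine effort lies; I expect this to be the main obstacle.

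Once $M$ is known to be a matroid, the parameters follow from Theorem \ref{thmParamsFromCyclicFlats}. Items (i) and (ii) are immediate from $1_{\mathcal{Z}}=E$ and $\rho(E)=k$. For (iii) I would use $\eta(F_I)=\sum_{i\in I}\eta(F_i)$; since nullity is monotone up the lattice, the maximum of $\eta$ over the coatoms $coA_{\mathcal{Z}}$ equals its maximum over all of $\mathcal{Z}_{<k}$, producing $d=n-k+1-\max\{\sum_{i\in I}\eta(F_i):F_I\in\mathcal{Z}_{<k}\}$. For (iv), (v) and the closing locality claim, fix $x\in E$ and, using $F_{[m]}=E$, pick $i$ with $x\in F_i$ and any $S\subseteq F_i$ with $|S|=\rho(F_i)+\delta-1$ and $x\in S$ (possible since $\eta(F_i)\ge\delta-1$). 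The saturation argument from (iv) shows that no $F_j$ with $j\neq i$ sits inside $F_i$, so within $F_i$ there are no proper nonempty cyclic flats and $M|S$ is uniform; as $|S|=\rho(F_i)+\delta-1>\rho(F_i)=\rho(S)$, the restriction $M|S$ is the uniform matroid of rank $\rho(F_i)$, whose minimum distance is exactly $\delta$. Thus each such $S$ is a cyclic $(r,\delta)$-locality set with $|S|\le r+\delta-1$, and ranging over $i$ gives $r=\max_i\rho(F_i)$ and $\delta-1=\min_i\eta(F_i)$.
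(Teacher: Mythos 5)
Your overall route---verifying the Bonin--de Mier axioms (Z0)--(Z3) for the pair $(\mathcal{Z},\rho)$ of \eqref{eqExtendRho} and then reading the parameters off Theorem \ref{thmParamsFromCyclicFlats}---is the right one: this thesis states Theorem \ref{thmConstr1Props} without proof, citing \cite{artikkeli}, and that source proves it exactly through the lattice-of-cyclic-flats axioms. Several of your peripheral steps are also correct and match the source's mechanics: canonicity of the index set via condition (iv) of \eqref{eqConstr1} (so $\rho$ is well defined), the join structure $F_I\vee F_J=F_{I\cup J}$ or $E$, the identity $\eta(F_I)=\sum_{i\in I}\eta(F_i)$ together with monotonicity of nullity, which reduces the coatom maximum in Theorem \ref{thmParamsFromCyclicFlats} (iii) to a maximum over $\mathcal{Z}_{<k}$, and the uniform restriction $M|S$ yielding $d(M|S)=\delta$ for the closing locality claim.

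Two genuine gaps remain, however. First and decisively, (Z3) is not proven but only forecast (``I expect this to be the main obstacle''). Since (Z3) is precisely where conditions (iv) and (v) earn their keep---(iv) controlling the overlap term $|(X\cap Y)\setminus(X\wedge Y)|$ when $X\vee Y\in\mathcal{Z}_{<k}$, and (v) ensuring the inequality survives the collapse to $\rho(E)=k$ at the top---deferring it means you have a plan rather than a proof of the one nontrivial assertion, namely that $(\mathcal{Z},\rho)$ defines a matroid at all. Second, your justification of (Z2) from condition (ii) alone is wrong on the lower-bound side: (ii) gives $\eta(F_j)\geq 1$ and hence $\rho(Y)-\rho(X)<|Y|-|X|$, but the strict increase $\rho(Y)-\rho(X)>0$ can fail without (iv). For instance, $F_1=\{1,2,3\}$ with $\rho(F_1)=2$ and $F_2=\{1,2,3,4\}$ with $\rho(F_2)=1$ satisfy (ii), yet $|F_{\{1,2\}}|-\eta(F_1)-\eta(F_2)=4-1-3=0<2=|F_{\{1\}}|-\eta(F_1)$, so the putative rank decreases along $F_{\{1\}}\subsetneq F_{\{1,2\}}$; of course (iv) is violated here since $|F_{\{1\}}\cap F_2|=3\geq\rho(F_2)$. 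The correct argument adds the atoms of $J\setminus I$ one at a time, using the heredity of $\mathcal{Z}_{<k}$ (its definition quantifies over all subsets of the index set) so that (iv) applies at each intermediate step and gives $|F_j\setminus F_{I'}|\geq\eta(F_j)+1$, whence sizes grow strictly faster than nullities. The same heredity, not condition (v) as you assert, is what makes $E$ the least element of $\mathcal{Z}$ above $F_I\cup F_J$ when $F_{I\cup J}\notin\mathcal{Z}_{<k}$; and the boundary instance of (Z2) with $Y=E$, $\rho(E)=k$ also needs separate treatment that your sketch omits.
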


Theorem \ref{thmCorollIII.1} gives a subclass of matroids obtainable from Construction 1 for which it is proven in \cite{artikkeli} that its matroids correspond to almost affine LRCs. This result, given in Theorem \ref{thmLRCsFromMatroids}, is required in order to prove existence results on almost affine LRCs using matroids. The only additional requirement in Theorem \ref{thmCorollIII.1} compared to Construction 1 is that the manner in which the atoms $F_i$ can intersect with each other is more restricted, determined by condition (iv).\newline
\begin{thm} (\cite{artikkeli})
    \label{thmCorollIII.1}
    Let $F_1,...,F_m$ be a collection of subsets of a finite set $E$, $k$ a positive integer and $\rho: \{F_i\}_{i \in [m]} \rightarrow \mathbb{Z}$ a function such that
    \begin{equation}
    \label{eqCorollIII.1}
    \begin{alignedat}{3}
    & (i) \quad && 0<\rho(F_i)< |F_i| \te{ for } i \in [m], \\
    & (ii) \quad && F_{[m]} = E, \\
    & (iii) \quad && k \leq F_{[m]}-\sum_{i \in [m]} \eta(F_i), \\
    & (iv) \quad &&  |F_{[m]\setminus \{j\}} \cap F_j|<\rho(F_j) \te{ for all } j \in [m].
    \end{alignedat}
    \end{equation}
    
    Then $F_1, ..., F_m$, $k$ and $\rho$ define a matroid $M(F_1,...,F_m;k;\rho)$ given in Theorem \ref{thmConstr1Props}. \newline
\end{thm}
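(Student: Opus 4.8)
The plan is to deduce Theorem \ref{thmCorollIII.1} directly from Theorem \ref{thmConstr1Props} by showing that the four hypotheses \eqref{eqCorollIII.1}(i)--(iv) imply the five conditions \eqref{eqConstr1}(i)--(v) of Construction 1; once this is done, Theorem \ref{thmConstr1Props} supplies the matroid $M(F_1,\dots,F_m;k;\rho)$ together with all its invariants, with no further work. Three of the five Construction-1 conditions are essentially immediate: condition (ii) is literally \eqref{eqCorollIII.1}(i), condition (iii) holds by taking $I=[m]$ in view of \eqref{eqCorollIII.1}(ii)--(iii), and the two halves of condition (i) come from $F_{[m]}=E$ together with a short contradiction argument: if some $F_l$ were contained in $F_{[m]\setminus\{l\}}$, then $|F_{[m]\setminus\{l\}}\cap F_l|=|F_l|>\rho(F_l)$, violating \eqref{eqCorollIII.1}(iv), so the union is nontrivial.

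The real content lies in conditions (iv) and (v) of Construction 1, and the key device for both is the auxiliary function $g(S)=|F_S|-\sum_{i\in S}\eta(F_i)$, which is exactly the quantity defining $\mathcal{Z}_{<k}$ (indeed $F_J\in\mathcal{Z}_{<k}$ means $g(I)<k$ for every $I\subseteq J$). First I would compute the one-step increment for $j\notin S$,
\begin{equation*}
g(S\cup\{j\})-g(S)=|F_j\setminus F_S|-\eta(F_j)=\rho(F_j)-|F_j\cap F_S|,
\end{equation*}
using $F_{S\cup\{j\}}\setminus F_S=F_j\setminus F_S$ and $\eta(F_j)=|F_j|-\rho(F_j)$. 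Since $j\notin S$ gives $S\subseteq[m]\setminus\{j\}$ and hence $F_S\subseteq F_{[m]\setminus\{j\}}$, hypothesis \eqref{eqCorollIII.1}(iv) forces $|F_j\cap F_S|\le|F_{[m]\setminus\{j\}}\cap F_j|<\rho(F_j)$, so this increment is strictly positive. Therefore $g$ is strictly monotone under inclusion.

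Condition (iv) of Construction 1 then follows from the same inclusion $F_I\subseteq F_{[m]\setminus\{j\}}$ whenever $j\notin I$, giving $|F_I\cap F_j|\le|F_{[m]\setminus\{j\}}\cap F_j|<\rho(F_j)$ at once. For condition (v), suppose $F_I,F_J\in\mathcal{Z}_{<k}$ but $F_{I\cup J}\notin\mathcal{Z}_{<k}$; the latter means some $K\subseteq I\cup J$ satisfies $g(K)\ge k$, and the monotonicity of $g$ then yields $g(I\cup J)\ge g(K)\ge k$, which is precisely the required inequality $|F_{I\cup J}|-\sum_{t\in I\cup J}\eta(F_t)\ge k$. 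With \eqref{eqConstr1}(i)--(v) established, Theorem \ref{thmConstr1Props} applies verbatim and finishes the proof.

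I expect the monotonicity computation to be the crux: deriving the increment formula correctly and recognizing that \eqref{eqCorollIII.1}(iv) is exactly what makes it positive is the one genuinely non-routine step, after which everything is bookkeeping. A minor point to watch is that $\mathcal{Z}_{<k}$ is defined by a condition on all subsets $I\subseteq J$ rather than on $J$ alone; the monotonicity of $g$ reconciles the two (the maximum of $g$ over subsets of $J$ is attained at $J$), which is worth flagging explicitly to keep the membership tests for $\mathcal{Z}_{<k}$ clean throughout.
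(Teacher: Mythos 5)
Your proposal is correct, and it is worth noting that the paper itself gives no proof of Theorem \ref{thmCorollIII.1} at all: the result is quoted from \cite{artikkeli}, so there is no in-paper argument to compare against. Your derivation is the natural one and is complete: the reduction of conditions \eqref{eqConstr1}(i)--(iii) to \eqref{eqCorollIII.1}(i)--(iv) is routine (your contradiction argument for the nontrivial union, using $|F_l|>\rho(F_l)$ against \eqref{eqCorollIII.1}(iv), is exactly right), and the increment formula $g(S\cup\{j\})-g(S)=\rho(F_j)-|F_j\cap F_S|$, made strictly positive by $F_S\subseteq F_{[m]\setminus\{j\}}$ and \eqref{eqCorollIII.1}(iv), correctly delivers both \eqref{eqConstr1}(iv) and \eqref{eqConstr1}(v). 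Observe that your auxiliary function $g$ is precisely the function $\rho'(F_I)=|F_I|-\sum_{i\in I}\eta(F_i)$ that this paper uses in the proof of Lemma \ref{lemmaOptForCorollIII.1}, where it cites from \cite{artikkeli} exactly the strict monotonicity $\rho'(F_I)<\rho'(F_J)$ for $I\subsetneq J\subseteq[m]$; so the ``crux'' you identified is indeed the same key lemma the original source relies on, and your one-step computation is a self-contained proof of it. Your closing remark about $\mathcal{Z}_{<k}$ being defined by a condition over all subsets $I\subseteq J$ is also the right point to flag: strict monotonicity of $g$ collapses the membership test to $g(J)<k$, and your use of the representative $L=I\cup J$ in verifying \eqref{eqConstr1}(v) sidesteps any ambiguity about which index set realizes $F_{I\cup J}$.
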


\begin{thm}
    \label{thmLRCsFromMatroids} (\cite{artikkeli})
    Let $M(F_1,...,F_m;k;\rho)$ be an $(n,k,d,r,\delta)$-matroid that we get in Theorem \ref{thmCorollIII.1}. Then for every large enough field there is a linear LRC over the field with parameters $(n,k,d,r,\delta)$.
\end{thm}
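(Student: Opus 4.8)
The plan is to reduce the statement to a pure representability question and then build a suitable matrix by a generic construction. Concretely, it suffices to exhibit, for every sufficiently large finite field $\mathbb{F}_q$, a matrix $\mathbf{G} \in \mathbb{F}_q^{k\times n}$ whose associated matric matroid (on its $n$ columns, indexed by $E$) equals $M(F_1,\dots,F_m;k;\rho)$. Indeed, the linear code $C$ generated by such a $\mathbf{G}$ is in particular almost affine, so by Theorem \ref{matroidFromAlmostAffine} it induces a matroid $M_C$ with $\rho_C(X) = \log_q|C_X| = \dim\langle \mathbf{G}_e : e \in X\rangle$, and by construction $M_C = M$. Theorem \ref{thmParamsFromMatroids} then guarantees that the code parameters of $C$ coincide with the matroid invariants of $M$, which by Theorem \ref{thmConstr1Props} are exactly the prescribed $(n,k,d,r,\delta)$; moreover $M$ has locality $(r,\delta)$, its locality sets being the subsets of each $F_i$ of size $\rho(F_i)+\delta-1$, so part (iii) of Theorem \ref{thmParamsFromMatroids} makes $C$ a genuine $(n,k,d,r,\delta)$-LRC. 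Thus everything hinges on representing $M$.

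To build $\mathbf{G}$ I would exploit the structure recorded in Theorem \ref{thmConstr1Props}: the atoms of the lattice of cyclic flats are the sets $F_i$, each restriction $M|F_i$ is the uniform matroid $U_{\rho(F_i),|F_i|}$, and condition (iv) in \eqref{eqCorollIII.1} controls the overlaps. The idea is to assign to each atom $F_i$ a subspace $V_i \subseteq \mathbb{F}_q^k$ of dimension $\rho(F_i)$, chosen generically subject to the overlap constraints, and then to place the columns $\{\mathbf{G}_e : e \in F_i\}$ in general position inside $V_i$, a shared element $e \in F_i \cap F_j$ receiving a single column lying in $V_i \cap V_j$. Condition (iv), namely $|F_{[m]\setminus\{j\}} \cap F_j| < \rho(F_j)$, is precisely what makes this feasible: the elements $F_j$ shares with the remaining atoms are too few to fill up $V_j$, so the already-assigned shared columns never force a dependency among the columns of $F_j$ beyond the one demanded by $U_{\rho(F_j),|F_j|}$. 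Carrying this out inductively over $j = 1,\dots,m$ -- at each stage placing the genuinely new elements of $F_j$ generically and reusing columns for shared elements -- produces a candidate matrix over $\mathbb{F}_q$.

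The core of the argument is then the verification that the matric matroid of $\mathbf{G}$ has rank function equal to $\rho_M$ on every subset $X \subseteq E$, not merely on the cyclic flats. For this I would treat the free entries of $\mathbf{G}$ as indeterminates and argue via nonvanishing of polynomials: each set $X$ that should have rank $t$ under $\rho_M$ demands that some $t\times t$ minor of the indexed columns be nonzero, while the placement of vectors in the $V_i$ forces the complementary minors to vanish identically. It suffices to show that the product of all must-be-nonzero minors is not the zero polynomial; a Schwartz--Zippel / counting argument then yields a specialization over any sufficiently large $\mathbb{F}_q$ making all of them nonzero at once, which is the source of the ``large enough field'' hypothesis. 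Establishing that this master polynomial is nonzero is the main obstacle: one must confirm that generic placement actually attains the value $\rho_M(F_I) = |F_I| - \sum_{i\in I}\eta(F_i)$ of \eqref{eqExtendRho} for every cyclic-flat union $F_I$ spanning several atoms, with no accidental collapse. This is exactly where condition (iv) and the semimodular lattice axiom (Z3) must be combined: (iv) bounds the pairwise overlaps so that unions of atoms accumulate rank additively up to the cap $k$, and (Z3) certifies that the generically chosen subspaces $V_i$ meet in the minimal dimension consistent with the prescribed ranks. Once the rank function is matched we have $M_C = M$, and the reduction of the first paragraph completes the proof.
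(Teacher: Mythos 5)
Your opening reduction is fine: a matrix $\mathbf{G}$ over $\mathbb{F}_q$ whose matric matroid equals $M(F_1,\dots,F_m;k;\rho)$ does yield, via Theorems \ref{matroidFromAlmostAffine}, \ref{thmParamsFromMatroids} and \ref{thmConstr1Props}, a linear $(n,k,d,r,\delta)$-LRC. But the entire content of the theorem is the representability of $M$ over all large enough fields, and that is exactly the step you leave open. Your argument bottoms out in the assertion that the ``master polynomial'' of must-be-nonzero minors is not identically zero, which is \emph{equivalent} to representability, and the justification you offer -- that condition (iv) of \eqref{eqCorollIII.1} and axiom (Z3) ``must be combined'' -- is not an argument. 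Indeed (Z3) cannot do the work you assign to it: semimodularity of the lattice of cyclic flats holds for \emph{every} matroid, including matroids that are not representable over any field (the paper itself notes, citing \cite{nonAlmAffMatroid}, that not every matroid even arises from an almost affine code), so no certificate that generic subspaces meet in the prescribed dimensions can follow from the lattice axioms alone; any correct proof must use the special structure of Theorem \ref{thmCorollIII.1}'s matroids more substantively. There is also a concrete combinatorial complication you underestimate: nothing in \eqref{eqCorollIII.1} restricts overlaps to pairs of atoms (only Graph construction 1 does that, and the construction in Theorem \ref{thmResult1} deliberately places one set $X$ inside many atoms), so your arrangement must control intersections of arbitrarily many subspaces $V_i$ at once, while two \emph{generic} $r$-dimensional subspaces of $\mathbb{F}_q^k$ meet in dimension $\max(0,2r-k)$, typically $0$. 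The constrained-generic configuration whose existence you need is precisely what has to be proven, and your sketch explicitly flags it as ``the main obstacle'' without resolving it.

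The paper avoids this verification entirely by a different route: following \cite{artikkeli}, one constructs a directed graph supporting a gammoid isomorphic to $M(F_1,\dots,F_m;k;\rho)$, and then invokes Lindström's theorem \cite{gammoid} that every finite gammoid is representable over every sufficiently large finite field. This shifts the combinatorial work to exhibiting the gammoid presentation -- which is where condition (iv) of \eqref{eqCorollIII.1} is genuinely used -- and obtains the nonvanishing-over-large-fields statement as a known black box rather than as an ad hoc Schwartz--Zippel argument. If you want to salvage your approach, you would in effect be reproving Lindström's theorem for this class of matroids; the honest fix is either to carry out that generic-intersection analysis in full (an induction over atoms will not suffice as stated, because of the multi-atom overlaps), or to adopt the gammoid presentation as the paper does.
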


Theorem \ref{thmLRCsFromMatroids} is proved in \cite{artikkeli} by first constructing a directed graph supporting a gammoid isomorphic to the matroid. (See for instance \cite{oxley} for an explanation on gammoids.) The proof is completed by a result  stating that every finite gammoid is representable over every large enough finite field, which is proved in \cite{gammoid}.

The following graph construction was introduced in \cite{artikkeli}. The matroids it yields are a subclass of those obtainable from Theorem \ref{thmCorollIII.1}. The purpose of it is to give a tool for designing matroids with a large $d$. The main idea of it is to restrict the manner the atoms $F_i$ can share elements in such a way that the matroid can be unambiguously described by a weighted undirected graph, together with information on the rank and nullity of each atom. The vertices correspond to atoms and the weights of the edges tell how many elements are shared by the corresponding atoms.

\emph{Graph construction 1} Let $G = G(\alpha, \beta, \gamma; k, r, \delta)$ be a graph with vertices $[m]$ and edges $W$, where $(\alpha, \beta)$ are two functions $[m] \rightarrow \mathbb{Z}$, $\gamma: W \rightarrow \mathbb{Z}$ and $(k,r,\delta)$ are three integers with $0<r<k$ and $\delta \geq 2$, such that

\begin{align}
\label{c1}
\begin{split}
& \textrm{(i)} \quad \textrm{$G$ is a graph with no 3-cycles,} \\
& \textrm{(ii)} \quad  0 \leq \alpha (i) \leq r-1 \textrm{ for } i \in [m], \\
& \textrm{(iii)} \quad  \beta(i) \geq 0 \textrm{ for } i \in [m], \\
& \textrm{(iv)} \quad  \gamma(w) \geq 1 \textrm{ for } w \in W, \\
& \textrm{(v)} \quad  k \leq rm - \sum_{i \in [m]} \alpha(i) - \sum_{w \in W} \gamma(w), \\
& \textrm{(vi)} \quad r - \alpha(i) > \sum_{w=\{i,j\} \in W} \gamma(W) \textrm{ for } i \in [m].
\end{split}
\end{align}

\begin{thm} (\cite{artikkeli})
    \label{c1thm}
    Let $G(\alpha, \beta, \gamma; k, r, \delta)$ be a graph on $[m]$ such that the conditions (i)-(vi) given in \eqref{c1} are satisfied. Then there is an $(n,k,r,d,\delta)$-matroid $M(F_1,...,F_m;k;\rho)$ given in Theorem \ref{thmCorollIII.1} with
    
    \begin{align}
    \begin{split}
    & \textrm{(i)} \quad n = (r+\delta-1)m - \sum_{i \in [m]} \alpha(i) + \sum_{i \in [m]} \beta(i) - \sum_{w \in W} \gamma(w), \\
    & \textrm{(ii)} \quad d = n-k+1 - \max_{I \in V_{<k}} \{(\delta-1)|I|+\sum_{i \in I} \beta(i) \},
    \end{split}
    \end{align}
    
    where
    \begin{equation*}
    V_{<k} = \{I \subseteq [m]: r|I|- \sum_{i \in I} \alpha(i) - \sum_{i,j \in I, w = \{i,j\}\in W} \gamma(w) < k\}.
    \end{equation*}
    
\end{thm}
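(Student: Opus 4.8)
The plan is to manufacture, from the graph data, an explicit list of subsets $F_1,\dots,F_m$ of a ground set $E$ together with a rank assignment on the $F_i$, and then to verify that this data satisfies the hypotheses \eqref{eqCorollIII.1} of Theorem \ref{thmCorollIII.1}; the parameters will then follow from Theorem \ref{thmConstr1Props}. Concretely, I would build $E$ by giving each vertex $i$ a block $P_i$ of private elements and each edge $w=\{i,j\}\in W$ a block $E_w$ of $\gamma(w)$ elements shared by exactly $F_i$ and $F_j$, all these blocks pairwise disjoint, and set $F_i=P_i\cup\bigcup_{w=\{i,j\}\in W}E_w$. The sizes $|P_i|$ are chosen so that $|F_i|=r+\delta-1-\alpha(i)+\beta(i)$, which forces $|P_i|=r-\alpha(i)+\delta-1+\beta(i)-\sum_{w=\{i,j\}\in W}\gamma(w)$; condition (vi) gives $r-\alpha(i)-\sum_{w=\{i,j\}\in W}\gamma(w)\ge 1$ and condition (iii) gives $\beta(i)\ge 0$, so $|P_i|\ge\delta>0$ and the realization is well defined. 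I then set $\rho(F_i)=r-\alpha(i)$, so that $\eta(F_i)=|F_i|-\rho(F_i)=\delta-1+\beta(i)$. The realization keeps every element in at most two atoms, so $|F_i\cap F_j|=\gamma(\{i,j\})$ on edges and vanishing intersections otherwise; the absence of $3$-cycles (hypothesis (i)) is what makes this pattern globally consistent and lets the inclusion–exclusion used below terminate at second order.

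Next I would check the four conditions of \eqref{eqCorollIII.1}. Condition (i), $0<\rho(F_i)<|F_i|$, holds since $\alpha(i)\le r-1$ gives $\rho(F_i)=r-\alpha(i)\ge 1$ and $\delta\ge 2$ gives $\eta(F_i)=\delta-1+\beta(i)\ge 1$. Condition (ii), $F_{[m]}=E$, is immediate. For condition (iv), $F_{[m]\setminus\{j\}}\cap F_j$ is exactly the union of the edge blocks at $j$, of size $\sum_{w=\{i,j\}\in W}\gamma(w)<r-\alpha(j)=\rho(F_j)$ by condition (vi). For condition (iii) I would compute $|F_{[m]}|-\sum_{i\in[m]}\eta(F_i)=rm-\sum_{i\in[m]}\alpha(i)-\sum_{w\in W}\gamma(w)\ge k$, which is exactly condition (v). Theorem \ref{thmCorollIII.1} then yields the matroid $M(F_1,\dots,F_m;k;\rho)$, and Theorem \ref{thmConstr1Props} gives its parameters and confirms all-symbol locality $(r,\delta)$: each atom satisfies $\rho(F_i)\le r$ and $\eta(F_i)\ge\delta-1$, so a subset of $F_i$ of size $\rho(F_i)+\delta-1\le r+\delta-1$ through any prescribed element is a locality set of minimum distance at least $\delta$.

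Finally I would read off $n$ and $d$. For $n=|F_{[m]}|$, second-order inclusion–exclusion gives $n=\sum_{i\in[m]}|F_i|-\sum_{w\in W}\gamma(w)=(r+\delta-1)m-\sum_{i\in[m]}\alpha(i)+\sum_{i\in[m]}\beta(i)-\sum_{w\in W}\gamma(w)$, which is formula (i). For $d$, Theorem \ref{thmConstr1Props}(iii) gives $d=n-k+1-\max\{\sum_{i\in I}\eta(F_i):F_I\in\mathcal{Z}_{<k}\}$, where $\sum_{i\in I}\eta(F_i)=(\delta-1)|I|+\sum_{i\in I}\beta(i)$. The key computation is that for every $I\subseteq[m]$, $|F_I|-\sum_{i\in I}\eta(F_i)=r|I|-\sum_{i\in I}\alpha(i)-\sum_{i,j\in I,\,w=\{i,j\}\in W}\gamma(w)=:g(I)$, the exact defining quantity of $V_{<k}$. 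The main obstacle is reconciling the two index families: $\mathcal{Z}_{<k}$ demands $g(I')<k$ for every $I'\subseteq J$, whereas $V_{<k}$ demands only $g(J)<k$. I would resolve this by noting that $g$ is strictly increasing under adding a vertex, since $g(I\cup\{j\})-g(I)=r-\alpha(j)-\sum_{i\in I,\,\{i,j\}\in W}\gamma(\{i,j\})\ge r-\alpha(j)-\sum_{w=\{i,j\}\in W}\gamma(w)>0$ by condition (vi); hence $V_{<k}$ is downward closed and coincides with $\{J:F_J\in\mathcal{Z}_{<k}\}$. Maximizing $(\delta-1)|I|+\sum_{i\in I}\beta(i)$ over this common family yields formula (ii). Thus the delicate point is not any single estimate but the monotonicity argument that makes condition (vi) do double duty: both for condition (iv) of \eqref{eqCorollIII.1} and for collapsing the two descriptions of the maximizing sets into one.
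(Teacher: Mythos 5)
Your proposal is correct, and it takes exactly the route the paper intends: the thesis states Theorem \ref{c1thm} without proof (citing \cite{artikkeli}), and its surrounding discussion describes precisely your construction --- atoms built from private blocks plus edge blocks of size $\gamma(w)$, verification of \eqref{eqCorollIII.1}, and parameters read off from Theorem \ref{thmConstr1Props}. You correctly identified and resolved the two genuinely delicate points: the monotonicity of $g(I)=r|I|-\sum_{i\in I}\alpha(i)-\sum_{\{i,j\}\subseteq I,\,\{i,j\}\in W}\gamma(\{i,j\})$ under condition (vi), which collapses $V_{<k}$ onto the index family of $\mathcal{Z}_{<k}$, and the fact that when $\alpha(i)>0$ or $\beta(i)>0$ for all $i$ the canonical parameters of Theorem \ref{thmConstr1Props}(iv)--(v) differ from the input $(r,\delta)$, which you rightly sidestep by exhibiting $(r,\delta)$-locality sets directly inside each atom. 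One small misattribution: in your realization the no-3-cycle hypothesis (i) is not actually load-bearing --- since your edge blocks are pairwise disjoint, every element lies in at most two atoms for an arbitrary graph, so the second-order inclusion--exclusion and the check of \eqref{eqCorollIII.1}(iv) go through without it; hypothesis (i) instead serves the converse reading the paper motivates, namely that the weighted graph \emph{unambiguously} describes the matroid (an element in three atoms would force a triangle of pairwise-intersecting atoms), which is not a step your existence argument needs.
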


\section{The maximal value of $d$ for $(n,k,r,\delta)$-matroids}

Let us denote the largest possible $d$ for a matroid with the parameters $(n,k,r, \delta)$ by $d_{max} = d_{max}(n,k,r,\delta)$. In this chapter, we will review the results in \cite{artikkeli} on  $d_{max}$ and on matroid constructions that yield matroids with large $d$. We will also present two improvements to these results.

The complete function $d_{max}$ is unknown, but in \cite{artikkeli} two kinds of results on it are presented: Firstly, it is proved that for some classes of parameters $(n,k,r,\delta)$ the generalized Singleton bound \eqref{eqSingleton} can be reached and for some it can not. Secondly, a general lower bound for $d_{max}$ is derived. Existence results are proved using the matroid constructions from the previous chapter to construct matroids with a desired $d$. Non-existence results of optimal matroids are proved using Theorem \ref{thmStructure}. As new results, we will extend one class of parameters for which the bound can be achieved as well as give an improved lower bound for $d_{max}$.

First we will need a result on which parameter sets are possible for an $(n,k,d,r,\delta)$-matroid: \newline

\begin{thm}
    (\cite{artikkeli}) Let M = $(E,\rho)$ be an $(n,k,r,\delta)$-matroid. Then we have
    \begin{equation}
    \label{eqParamRaja}
    k \leq n -  \kr (\delta-1).
    \end{equation}
\end{thm}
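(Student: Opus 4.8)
The plan is to extract one more factor of $(\delta-1)$ from the chain machinery already developed for the generalized Singleton bound, by summing nullity increments over the \emph{entire} chain rather than stopping one step short as in Theorem \ref{thmSingletonForMatroids}.

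First I would invoke the locality of $M$: since $M$ is an $(n,k,r,\delta)$-matroid, Theorem \ref{thmParamsFromCyclicFlats}(iv) guarantees a collection $\{S_x\}_{x\in E}$ of cyclic sets satisfying conditions (a)--(c). Feeding this collection into Lemma \ref{lemmaChain} produces a chain $0_{\mathcal{Z}}=Y_0\subsetneq Y_1\subsetneq\dots\subsetneq Y_m=E$ in $(\mathcal{Z}(M),\subseteq)$ with $\eta(Y_j)-\eta(Y_{j-1})\ge \delta-1$ for each $j\in[m]$, while Lemma \ref{lemmaTwoIneqs} additionally supplies $m\ge \lceil k/r\rceil$.

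Next I would telescope the nullity increments across all $m$ steps. Summing $\eta(Y_j)-\eta(Y_{j-1})\ge \delta-1$ over $j=1,\dots,m$ yields $\eta(Y_m)-\eta(Y_0)\ge m(\delta-1)$. Here $Y_m=E$, so $\eta(Y_m)=|E|-\rho(E)=n-k$, whereas $\eta(Y_0)=\eta(0_{\mathcal{Z}})\ge 0$ by the nonnegativity of nullity (rank inequality \eqref{rankEqs}(i)). Hence $n-k\ge m(\delta-1)\ge\lceil k/r\rceil(\delta-1)$, which rearranges directly into the claimed bound $k\le n-\lceil k/r\rceil(\delta-1)$.

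The one point requiring care — and the reason this bound is one $(\delta-1)$ stronger than what the Singleton bound alone gives — is that I use the final link $Y_{m-1}\subsetneq Y_m=E$ of the chain as well. In the proof of Theorem \ref{thmSingletonForMatroids}, that last step is \emph{spent} bounding $d$ through $d\le n-k+1-\eta(Y_{m-1})$, so only $m-1$ increments remain available; by contrast, evaluating $\eta(E)$ directly as $n-k$ lets me retain all $m$ increments. I expect no serious obstacle beyond checking that the chain genuinely terminates at $E$ (condition (i) of Lemma \ref{lemmaChain}) and that $m\ge\lceil k/r\rceil$ (Lemma \ref{lemmaTwoIneqs}), both of which are handed to me by the cited lemmas.
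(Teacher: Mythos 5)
Your proposal is correct and matches the paper's own proof: the paper likewise rewrites the claim as $\kr(\delta-1)\leq \eta(E)$ and obtains it from Lemma \ref{lemmaChain}~(iii) together with $m\geq\kr$ from Lemma \ref{lemmaTwoIneqs}, which is exactly your telescoping of the nullity increments over the full chain. Your added remark about why all $m$ increments are available here (unlike in Theorem \ref{thmSingletonForMatroids}, where the last link is spent on bounding $d$) is a fair elaboration of the same argument, not a different route.
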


\begin{proof}
    The inequality can be equivalently expressed as $\kr (\delta-1) \leq \eta(E)$. The left side can be seen to be a lower bound for $\eta(E)$ by Lemma \ref{lemmaChain} (iii) and $m \geq \kr$ in Lemma \ref{lemmaTwoIneqs}.
\end{proof}

We also have $\delta \geq 2$, since $\delta=1$ would allow independent locality sets, which would make local error correction impossible.

Let us now review the results on $d_{max}$ in \cite{artikkeli} and present the two improvements. We will start by stating the results on when the bound can be achieved, after which we will consider a general lower bound for $d_{max}$.

\subsection{Achievability of the generalized Singleton bound}

Exactly an $(n,k)$-uniform matroid achieves the bound when $r=k$. The bound is in this case simplified into $d \leq n-k + 1$. From Theorem \ref{thmParamsFromCyclicFlats} (iii) we see that the bound is achieved if and only if $\mathcal{Z} = \{\emptyset, E\}$, which is satisfied exactly by uniform matroids. Moreover, uniform matroids allow use to choose the required $(r,\delta)$-locality sets. The same uniform matroid that we used for $r=k$ is also valid for $r > k$ when $k$ is the same as before, although choosing such an $r$ has no practical use whatsoever. For the rest of the discussion, we will consider the case $r<k$.

Let $m$ denote the number of atoms of a matroid. In order to achieve the bound, we must have $m \geq \lceil \frac{n}{r+\delta-1} \rceil$, since otherwise we would have an atom $F_i$ with $|F_i| > r+\delta-1$, and condition (ii) in Theorem \ref{thmStructure} would not be satisfied.

Let us now introduce two important constants for an $(n,k,d,r, \delta)$-matroid:

\begin{align}
& a=\kr r -k, \\
& b = \left\lceil \frac{n}{r+\delta-1} \right\rceil (r+\delta-1) - n.
\end{align}

We notice that $\lkr r$, denoted by $k_{max}$, gives the largest possible rank of a union of $\lkr$ atoms, i.e. $F_T$  with $|T|=\lkr$. Also, $\lceil \frac{n}{r+\delta-1} \rceil (r+\delta-1)$, denoted by $n_{max}$, gives the largest possible size of an optimal matroid with $\lceil \frac{n}{r+\delta-1} \rceil$ atoms. We get this kind of a matroid from Theorem \ref{thmCorollIII.1} by choosing $\lceil \frac{n}{r+\delta-1} \rceil$ disjoint sets $F_i$ which are of size $r+\delta-1$, and the parameter $k$ as $k_{max}$. We call such a matroid a \textit{broad matroid}. Any matroid with the parameters $(n,k,d,r,\delta)$ induces a broad matroid, which in turn has the parameters $(n_{max}, k_{max}, d_{opt}, r, \delta)$, where $d_{opt}$ is always such that the broad matroid achieves the bound.

The constants $a$ and $b$ have illustrative interpretations in the context of optimal matroids: $b$ tells us how much smaller the matroid is compared to the size of a corresponding broad matroid. For an interpretation for $a$, remember that an optimal matroid must have $\rho(F_T)=k$ for every $T\subseteq [m]$ with $|T|=\lkr$. Thus $a$ tells how much smaller the rank of such unions of atoms can be compared to that of a broad matroid, in order for the original matroid to be optimal.

For matroids from Theorem \ref{thmCorollIII.1}, there exists an even more useful interpretation for $a$, which is described by Lemma \ref{lemmaOptForCorollIII.1} below. Note that for these matroids, the rank function restricted to cyclic flats, $\rho: \mathcal{Z} \rightarrow \mathbb{Z}$, can be expressed as $\rho(X) = \min\{\rho' (X), k \}$, where $\rho' (F_I) = |F_{I}|-\sum_{i \in I} \eta(F_i)$ for $I \subseteq [m]$. \newline



\begin{lemma}
    \label{lemmaOptForCorollIII.1}
    An $(n,k,d,r,\delta)$-matroid from Theorem \ref{thmCorollIII.1} has $d=n-k+1-(\kr-1)(\delta-1)$ if and only if
    
    \begin{equation}
    \label{eqSuffForOptimality}
    \begin{alignedat}{3}
    &\te{(i)} \quad && |F_T|\geq \kr (r+\delta-1) - a \te{ for each } T \subseteq [m] \te{ with } |T|=\kr,\\
    &\te{(ii)} \quad && \eta(F_i)=\delta-1 \te{ for each } i \in [m].
    \end{alignedat}
    \end{equation}
    
\end{lemma}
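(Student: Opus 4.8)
The plan is to work throughout with the auxiliary function $\rho'$ on unions of atoms, $\rho'(F_I) = |F_I| - \sum_{i\in I}\eta(F_i)$, and to base everything on the formula $d = n-k+1 - \max\{\sum_{i\in I}\eta(F_i) : F_I \in \mathcal{Z}_{<k}\}$ from Theorem \ref{thmConstr1Props}(iii). Writing $\sigma(I) = \sum_{i\in I}\eta(F_i)$ and $N = \max\{\sigma(I): F_I \in \mathcal{Z}_{<k}\}$, the optimality hypothesis $d = n-k+1-(\kr-1)(\delta-1)$ is, via $d = n-k+1-N$, exactly the statement $N = (\kr-1)(\delta-1)$. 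So the whole lemma reduces to characterizing when this maximum drops to its smallest value.

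First I would establish the key structural fact that for matroids from Theorem \ref{thmCorollIII.1} the function $\rho'$ is \emph{strictly increasing} as atoms are adjoined: for $j \notin J$, condition (iv) of Theorem \ref{thmCorollIII.1} gives $|F_j \cap F_J| \le |F_j \cap F_{[m]\setminus\{j\}}| < \rho(F_j)$, and the telescoping identity $\rho'(F_{J\cup\{j\}}) - \rho'(F_J) = \rho(F_j) - |F_j \cap F_J|$ is then positive. Two consequences make the rest routine. First, the ``for all subsets'' membership test collapses: $F_I \in \mathcal{Z}_{<k}$ if and only if $\rho'(F_I) < k$, since the maximum of $\rho'$ over subsets of $I$ is attained at $I$. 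Second, the same telescoping yields $\rho'(F_I) \le |I|\,r$, so because $a = \kr r - k$ satisfies $0 \le a \le r-1$, any $I$ with $|I| \le \kr-1$ has $\rho'(F_I) \le (\kr-1)r = k + a - r < k$ and hence lies in $\mathcal{Z}_{<k}$. I would also record that $r<k$ forces $\kr \ge 2$ and $m \ge \kr$, and rewrite condition (i), using $a=\kr r-k$, in the equivalent form $|F_T| \ge k + \kr(\delta-1)$.

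For the forward direction, assume $N = (\kr-1)(\delta-1)$. Taking any $\kr-1$ atoms $I$ (possible as $m \ge \kr$), the bound above gives $F_I \in \mathcal{Z}_{<k}$, so $(\kr-1)(\delta-1) = \sigma(I) = \sum_{i\in I}\eta(F_i)$; since each $\eta(F_i) \ge \delta-1$ by Theorem \ref{thmConstr1Props}(iv), every summand must equal $\delta-1$. Ranging over all $(\kr-1)$-subsets reaches every atom, giving (ii). With (ii) in hand, (i) follows by contradiction: if some $T$ with $|T|=\kr$ had $\rho'(F_T) < k$, then $F_T \in \mathcal{Z}_{<k}$ would force $N \ge \sigma(T) = \kr(\delta-1) > (\kr-1)(\delta-1)$; hence $\rho'(F_T) \ge k$, i.e. $|F_T| \ge k + \kr(\delta-1)$, which is (i). For the converse, assume (i) and (ii): (ii) turns (i) into $\rho'(F_T) \ge k$ for every $|T|=\kr$, so by monotonicity of $\rho'$ no $F_I \in \mathcal{Z}_{<k}$ can contain $\kr$ atoms, whence $\sigma(I) = |I|(\delta-1) \le (\kr-1)(\delta-1)$ and $N \le (\kr-1)(\delta-1)$; the matching lower bound comes from any fixed $(\kr-1)$-subset, so $N = (\kr-1)(\delta-1)$ and the matroid is optimal.

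The main obstacle, and the step that makes everything else elementary, is the strict monotonicity of $\rho'$ under adjoining atoms: it simultaneously reduces the awkward definition of $\mathcal{Z}_{<k}$ to the single inequality $\rho'(F_I)<k$, and, combined with $0 \le a \le r-1$, guarantees that every collection of at most $\kr-1$ atoms already sits inside $\mathcal{Z}_{<k}$. Once these facts are in place, both directions are short counting arguments comparing $\sigma(I)$ against $(\kr-1)(\delta-1)$.
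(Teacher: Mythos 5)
Your proof is correct, and on the necessity half it takes a genuinely different, more self-contained route than the paper. The sufficiency computation is essentially the paper's: both show $\rho'(F_T)\geq k$ for every $T$ with $|T|=\kr$ and then bound the maximum in Theorem \ref{thmConstr1Props}(iii) by $(\kr-1)(\delta-1)$ --- though you pin the maximum down exactly via the $(\kr-1)$-subsets, so you never need to invoke the generalized Singleton bound for the matching inequality. The two real differences are these. First, the paper cites the strict monotonicity $\rho'(F_I)<\rho'(F_J)$ for $I\subsetneq J$ as an external result from \cite{artikkeli}, whereas you derive it from condition (iv) of Theorem \ref{thmCorollIII.1} via the telescoping identity $\rho'(F_{J\cup\{j\}})-\rho'(F_J)=\rho(F_j)-|F_j\cap F_J|$; the same telescoping gives you the collapse of the membership test for $\mathcal{Z}_{<k}$ and the bound $\rho'(F_I)\leq |I|\,r$. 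Second, for the necessity of (ii) the paper appeals to the structure theorem (Theorem \ref{thmStructure} (ii) a)), which tacitly requires identifying the atoms $F_i$ with the locality sets $S_x$ of an optimal matroid (via their private elements guaranteed by condition (iv)); you instead get (ii) by the elementary observation that every $(\kr-1)$-subset lies in $\mathcal{Z}_{<k}$ (using $0\leq a\leq r-1$), so each nullity is squeezed against $\eta(F_i)\geq\delta-1$ from Theorem \ref{thmConstr1Props}(iv). Your route buys an argument that stays entirely inside the combinatorics of Construction 1; the paper's buys brevity by leaning on the general structure theorem. One point worth keeping explicit in a final write-up: like the lemma itself, your argument lives under the standing assumption $r<k$ (with $r=k$ one has $\kr=1$ and optimal matroids from Theorem \ref{thmCorollIII.1} can have some $\eta(F_i)>\delta-1$), so your recording of $r<k\Rightarrow \kr\geq 2$ and $m\geq\kr$ is not cosmetic but genuinely load-bearing.
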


\begin{proof}
    Let us first prove that the conditions are sufficient. For each $T \subseteq [m]$ with $|T|=\kr$, we have
    
    \begin{equation*}
    \begin{split}
   & \kr r -k = a \\
   & \geq \kr (r+\delta-1) - |F_T| \\
   & = \kr r - (|F_T|-\sum_{i \in T} \eta(F_i)).
    \end{split}
    \end{equation*}

Thus we have $\rho'(F_T)=|F_T|-\sum_{i \in T} \eta(F_i) \geq k$ and $\max_{F_I \in \mathcal{Z}_{<k}}\{|I|\}\leq \lkr - 1$. This in turn implies $\max\{\sum_{i \in I}\eta(F_i): F_I \in \mathcal{Z}_{<k} \} \leq (\lkr-1)(\delta-1)$ and finally $d \geq n-k+1 - (\lkr-1)(\delta-1)$ via Theorem \ref{thmConstr1Props} (iii).

Now we prove that the conditions in \eqref{eqSuffForOptimality} are necessary. Condition (ii) not being satisfied would contradict with Theorem \ref{thmStructure} (ii) a). If (i) is false and (ii) true, we get $\rho'(F_T)=|F_T|-\sum_{i \in T} \eta(F_i)<k$ for some $T$ with $|T|=\lkr$ in the same manner as above. Together with the result in \cite{artikkeli} that matroids from Theorem \ref{thmCorollIII.1} have $\rho'(F_I)<\rho'(F_J)$ for $I \subsetneq J \subseteq [m]$, we now get $d\leq n-k+1 - \lkr(\delta-1)$, similarly as above.
\end{proof}

Thus for matroids from Theorem \ref{thmCorollIII.1}, the constant $a$ also determines how much smaller unions of $\lkr$ atoms are allowed to be compared to those of a corresponding broad matroid.

 There are two ways in which a set $F_I$ can be smaller than its broad matroid counterpart:
\begin{enumerate}
    \item Its atoms $F_i$ can intersect with each other.
    \item One or more of its atoms $F_i$ may have  $|F_i| < r+\delta-1$, in which case $\rho(F_i)<r$ but $\eta(F_i)=\delta-1$.
\end{enumerate}

It is now easily seen that if $a\geq b$, any matroid from Theorem \ref{thmCorollIII.1} satisfying (ii) in \eqref{eqSuffForOptimality} and having $\lceil \frac{n}{r+\delta-1} \rceil$ atoms is optimal, since $a \geq b = \lceil \frac{n}{r+\delta-1} \rceil(r+\delta-1)- n \geq \lkr (r+\delta-1) - |F_T|$.

However, the situation $b>a$ is more difficult, the general problem being minimizing $\max \{|F_I|: |I| = \lkr\}$, while having $|F_{[m]}|= n$ and $\eta(F_i) = \delta-1$ for each $i \in [m]$.  In a sense, the $b$  ``losses'' in the size of $F_{[m]}$ should be distributed as sparsely and evenly across the matroid as possible, so that as few of them as possible could at most be included in a union of $\lkr$ atoms.

In general it seems favorable to have $|F_i| = r + \delta - 1$ for each $i \in [m]$ and reduce $|F_{[m]}|$ by having the atoms $F_i$ intersect. This is because roughly speaking, picking an atom with reduced size to a union of $\lkr$ atoms automatically decreases the size of the union, but having two atoms intersect requires picking both of them for the size to decrease.

If we allow the atoms $F_i$ to intersect arbitrarily (but each having at least $\eta(F_i)+1$ elements unique to them, required by Theorem \ref{eqCorollIII.1} (iv)), finding the optimal scheme is extremely difficult for most classes of parameters. This is why in \cite{artikkeli}, the scope of the problem is limited by only considering matroids from Graph construction 1 where no 3-cycles are allowed (implying that an element of $E$ belongs to at most two atoms). In this case, the problem is reduced to constructing graphs with minimal $\max_{G'}(\sum_{w \in W_{G'}} \gamma(w))$ where $G'$ is any subgraph induced by $\lkr$ vertices. This viewpoint yields optimal matroids for many classes of parameters $(n,k,r,\delta)$, but for some classes of parameters, it does not even though the Singleton bound could be reached by other matroids.

As an example, the following bound was derived in \cite{artikkeli}, using a version of Graph construction 1, for when $b>a \geq \lceil \frac{k}{r} \rceil -1$ and $\lceil \frac{k}{r} \rceil = 2$. Then we have $d_{max} = n-k+1-\left( \kr -1\right)(\delta-1)$ if

\begin{equation}
\label{eqOldForKr2}
\left\lceil \frac{n}{r+\delta-1} \right\rceil \geq
\begin{cases}
\left\lceil \frac{b}{a} \right\rceil + 1 & \te{ if } 2a \leq r-1, \vspace{1.5 mm}\\
\left\lceil \frac{b}{\left\lfloor \frac{r-1}{2} \right\rfloor} \right\rceil + 1 & \te{ if } 2a > r-1.
\end{cases}
\end{equation}

This bound can be improved however. Previously we viewed our problem as maximizing the minimal union of $\lkr$ atoms while having a certain $n=|E|$. Let us now adopt the reverse viewpoint: While having $\lkr r - a$ as a minimum for the size of a union of $\lkr$ atoms, we minimize $n$.

The original bound was derived using a matroid from Graph construction 1, which means no element can be part of three or more atoms. However, as $\lceil \frac{k}{r} \rceil = 2$, the fulfilment of the condition (i) in \eqref{eqSuffForOptimality} only depends on the sizes of unions of two atoms. Thus we can ``pack'' the atoms in a smaller room by having a set $X$ with $|X|= a$, for which $X \subseteq F_i$ for as many atoms $F_i$ as needed. Exactly this is done in our first result, Theorem \ref{thmResult1}, given below. However, we first need the following lemma: \newline

\begin{lemma}
    \label{suhderaja}
    (\cite{artikkeli}) Let $M$ be an $(n,k,d,r,\delta)$-matroid. Then the following hold:
    \begin{equation*}
    \left\lceil \frac{n}{r+\delta-1} \right\rceil \geq
    \begin{cases}
    \kr \quad \quad \textrm{if $b \leq a$} \vspace{0.2 cm},\\
    \kr +1 \quad \textrm{if $b>a$} .
    \end{cases}
    \end{equation*}
    
    \begin{proof}
        The proof can be found in \cite{artikkeli}. The result follows directly from the inequality \eqref{eqParamRaja}.\newline
    \end{proof}
\end{lemma}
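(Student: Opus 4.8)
The plan is to show that the inequality \eqref{eqParamRaja}, already proved immediately above, forces the two stated lower bounds on $\lceil \frac{n}{r+\delta-1} \rceil$ by elementary manipulation. The key quantities are $a = \kr r - k$ and $b = \lceil \frac{n}{r+\delta-1} \rceil (r+\delta-1) - n$, and the whole argument is a chain of inequalities connecting $n$, $k$, $a$, and $b$ to the ceiling $\lceil \frac{n}{r+\delta-1}\rceil$.

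First I would start from \eqref{eqParamRaja}, namely $k \leq n - \kr(\delta-1)$, and rewrite it as $\kr(\delta-1) \leq n - k = \eta(E)$. Next I would introduce $\lceil \frac{n}{r+\delta-1}\rceil$ by using the definition of $b$ to substitute $n = \lceil \frac{n}{r+\delta-1}\rceil(r+\delta-1) - b$. Similarly I would eliminate $k$ via $k = \kr r - a$. Plugging both into $k + \kr(\delta-1) \leq n$ gives
\begin{equation*}
\kr r - a + \kr(\delta-1) \leq \left\lceil \frac{n}{r+\delta-1}\right\rceil (r+\delta-1) - b,
\end{equation*}
so that $\kr(r+\delta-1) - a + b \leq \lceil \frac{n}{r+\delta-1}\rceil(r+\delta-1)$. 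Dividing through by $r+\delta-1 > 0$ yields
\begin{equation*}
\left\lceil \frac{n}{r+\delta-1}\right\rceil \geq \kr + \frac{b-a}{r+\delta-1}.
\end{equation*}

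From this single inequality both cases follow by observing that $\lceil \frac{n}{r+\delta-1}\rceil$ is an integer. In the case $b \leq a$, the added term $\frac{b-a}{r+\delta-1}$ is $\leq 0$, so the bound gives at least $\kr$ after noting the left side is a nonnegative integer at least $\kr$ once we check the trivial lower estimate; more directly, since $\lceil \frac{n}{r+\delta-1}\rceil$ is an integer and $\kr r \geq k$ forces $n \geq k + \kr(\delta-1) \geq \kr r - a + \kr(\delta-1)$, dividing and taking ceilings delivers $\lceil \frac{n}{r+\delta-1}\rceil \geq \kr$. In the case $b > a$, we have $b - a \geq 1$ (as $a,b$ are integers), so the fractional term is strictly positive; since the left side is an integer and already at least $\kr$, strict positivity of the slack bumps it up to $\lceil \frac{n}{r+\delta-1}\rceil \geq \kr + 1$.

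The step requiring the most care is the final integrality argument in the case $b > a$: one must verify that the strict inequality $b - a \geq 1$ combined with integrality of $\lceil \frac{n}{r+\delta-1}\rceil$ genuinely rules out equality with $\kr$, rather than merely with a fraction just above $\kr$. The cleanest way to secure this is to track the inequality at the integer level throughout—working with the equivalent form $\kr(r+\delta-1) + (b-a) \leq \lceil \frac{n}{r+\delta-1}\rceil(r+\delta-1)$ and noting that if $\lceil \frac{n}{r+\delta-1}\rceil = \kr$ then $b - a \leq 0$, a contradiction when $b > a$. Since the excerpt already states this lemma is proved in \cite{artikkeli} and follows directly from \eqref{eqParamRaja}, I expect no genuine obstacle beyond this bookkeeping.
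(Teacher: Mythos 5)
Your proof is correct and takes essentially the same route as the paper, whose proof is simply the remark that the lemma follows directly from \eqref{eqParamRaja}; substituting $k=\kr r-a$ and $n=\left\lceil \frac{n}{r+\delta-1}\right\rceil(r+\delta-1)-b$ into that inequality and comparing at the integer level is exactly that derivation, spelled out. The only detail you leave implicit is the bound $0\leq a\leq r-1<r+\delta-1$ (immediate from the definition of the ceiling $\kr$), which is what legitimizes the ceiling-taking step in the case $b\leq a$ and guarantees $\left\lceil \frac{n}{r+\delta-1}\right\rceil\geq\kr$ unconditionally before your contradiction argument for $b>a$.
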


Now we can give the enlarged class of parameters for which the Singleton bound can be achieved: \newline
\begin{thm}
    \label{thmResult1}
    Let $(n,k,r,\delta)$ be integers such that $0<r<k\leq n - \kr(\delta-1)$, $\delta \geq 2$, $b>a \geq \lceil \frac{k}{r} \rceil -1$ and $\lceil \frac{k}{r} \rceil = 2$. If
    \begin{equation}
    \left\lceil \frac{n}{r+\delta-1} \right\rceil \geq \left\lceil \frac{b}{a} \right\rceil + 1,
    \end{equation}
    
    then $d_{max} = n-k+1-\left( \kr -1\right)(\delta-1)$.
    
    \begin{proof}
        We prove our result by giving an explicit construction for matroids which achieve the bound when the conditions are satisfied.
        
        \emph{A matroid construction.} Let $n'$, $r'$, $\delta'$ and  $k$ be integers such that $0<r'<k\leq n' - \left\lceil \frac{k}{r'} \right\rceil(\delta'-1)$, $\delta' \geq 2$, $b'>a'$, $m \geq \left\lceil \frac{b'}{a'} \right\rceil + 1$, where we define
        
        \begin{align*}
        & b' = \left\lceil \frac{n'}{r'+\delta'-1} \right\rceil (r'+\delta'-1) - n' , \\
        & a'=\left\lceil \frac{k}{r'} \right\rceil r' -k ,\\
        & m = \left\lceil \frac{n'}{r'+\delta'-1} \right\rceil.
        \end{align*}
        
        Let $F_1, ..., F_m = \{F_i\}_{i \in [m]}$ be a collection of finite sets with $E = \bigcup_{i \in m}F_i$ and $X \subseteq E$ a set such that
        
        \begin{equation}
        \label{eqSetConditions}
        \begin{alignedat}{3}
        &\te{(i)} \quad && F_i \cap F_j \subseteq X \quad \te{ for }i,j \in [m] \te{ with } i \neq j, \\
        &\te{(ii)} \quad && |X| = a', \\
        &\te{(iii)} \quad && |F_i| = r' + \delta' - 1 \quad \te{ for } i \in [m], \\
        &\te{(iv)} \quad && |F_i \cap X| = a' \te{ for } 1 \leq i \leq \left\lceil \frac{b'}{a'} \right\rceil, \\
        &\te{(v)} \quad && |F_i \cap X| = b'- \left(\left\lceil \frac{b'}{a'} \right\rceil  - 1 \right)a' \te{ for } i = \left\lceil \frac{b'}{a'} \right\rceil + 1,\\
        &\te{(vi)} \quad && |F_i \cap X| = 0 \te{ for } i > \left\lceil \frac{b'}{a'} \right\rceil + 1. \\
        \end{alignedat}
        \end{equation}
        
        Let $\rho$ be a function $\rho:   \{F_i\}_{i \in [m]} \rightarrow \mathbb{Z}$ such that $\rho(F_i)=r'$ for each $i \in [m]$.
        
        Now we prove that this gives a matroid obtainable from Theorem \ref{thmCorollIII.1}. The set $E$ is clearly finite and the conditions (i) and (ii) of \eqref{eqCorollIII.1} are trivially satisfied. Let us next calculate the size of $F_{[m]}=E$ by adding each of the sets $F_i$ to the union one at a time, in order according to their index $i$. Let us write $s = r'+\delta'-1$.  The first set $F_1$ adds $|F_1| =s$ elements. As $X \subseteq F_1$ and the sets in $\{F_i\}_{i \in [m] \setminus \{1\}}$ only intersect with $X$, each subsequent set $F_i$ adds $|F_i|- |F_i \cap X|$ elements. Thus we get the following: The sets $F_i$ for $2 \leq i \leq \lceil \frac{b'}{a'} \rceil$ each add $s-a'$ new elements. The set $F_{\lceil \frac{b'}{a'} \rceil+1}$ adds $s - (b'- (\lceil \frac{b'}{a'} \rceil  - 1 )a')$ new elements. The rest of the sets add $s$ elements each. Simple cancellation of terms then gives us
        
        \begin{equation*}
        \begin{alignedat}{2}
        |F_{[m]}|& = s + \left( \left\lceil \frac{b'}{a'} \right\rceil -1\right)(s-a') + s - \left(b'- \left(\left\lceil \frac{b'}{a'} \right\rceil  - 1 \right)a'\right) + \left(\left\lceil \frac{n'}{s} \right\rceil - \left(\left\lceil \frac{b'}{a'} \right\rceil + 1\right)\right) s \\
        & = \left\lceil \frac{n'}{s} \right\rceil s - b' \\
        & = n'.
        \end{alignedat}
        \end{equation*}

        Using $s-a'$ as a lower bound for the number of elements added by the sets $F_i$ with $3 \leq i  \leq \lceil \frac{n'}{s} \rceil$ and recalling that $r' > a$ and $\lceil \frac{k}{r'} \rceil = 2$, we get the following:

        \begin{equation*}
        \begin{alignedat}{2}
        &|F_{[m]}|-\sum_{i \in [m]} \eta(F_i) \\
        \geq & \ s+s-a' + \left(\left\lceil \frac{n'}{s} \right\rceil -2 \right)(s-a') - \left\lceil \frac{n'}{s} \right\rceil (\delta'-1) \\
        = &  \ 2r' - a' + \left(\left\lceil \frac{n'}{s} \right\rceil -2 \right) (r'-a')\\
        \geq & \left\lceil \frac{k}{r'} \right\rceil r' -a' \\
        = & \ k.
        \end{alignedat}
        \end{equation*}
        
        This shows that the construction satisfies \eqref{eqCorollIII.1} (iii). When I $\subseteq [m], j \in [m] \setminus I$, we have $|F_I \cap F_j| \leq |X \cap F_j|\leq a' < r' = \rho(F_j)$, so \eqref{eqCorollIII.1} (iv) is also satisfied. Thus the construction gives a matroid obtainable from Theorem \ref{thmCorollIII.1}.
        
        According to Theorem \ref{thmConstr1Props} we now have
        
        \begin{equation*}
        \begin{alignedat}{3}
        & \te{(i)} \quad && n = |E| = n', \\
        & \te{(ii)} \quad && k = \rho(E), \\
        & \te{(iii)} \quad && d = n-k+1-(\delta-1) = n-k+1-\left( \kr -1\right)(\delta-1), \\
        & \te{(iv)} \quad && \delta = \delta', \\
        & \te{(v)} \quad && r = r'. \\
        \end{alignedat}
        \end{equation*}

        Note that (i),(ii), (iv) and (v) imply that $a=a'$ and $b=b'$ so we can stop using the primed letters. For (iii), note that for every $F_i, F_j \in \{F_t\}_{t \in [m]}$ we have
        
        \begin{equation*}
        \begin{alignedat}{2}
        &\rho(F_{\{i,j\}} ) =  |F_{\{i,j \}}|-\sum_{i \in \{i,j\}} \eta(F_i) \\
        = \ & |F_{\{i,j\}}| - 2(\delta-1) \\
        \geq \ & |F_i|+|F_j| - |X| - 2(\delta-1) \\
        = \ & \rho(F_i) + \rho(F_j) - a \\
        = \ & r+r-(2r-k) \\
        = \ & k,
        \end{alignedat}
        \end{equation*}
        
        which implies that for any $i,j$, $F_{\{i,j\}}  \notin \mathcal{Z}_{<k}$. Due to \eqref{rankEqs} (ii), for any unions $F_I$ with $|I| \geq 2$ we have $F_I \notin \mathcal{Z}_{<k}$. Thus $\mathcal{Z}_{<k} = \{F_i\}_{i \in [m]}$ as $r<k$.
        
        Now we have shown that the construction gives a matroid that achieves the generalized Singleton bound for every desired parameter set $(n,k,d,r,\delta)$.

    \end{proof}
\end{thm}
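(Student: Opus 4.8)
The plan is to argue constructively. By Theorem~\ref{thmSingletonForMatroids} the quantity $n-k+1-\left(\kr-1\right)(\delta-1)$ is an upper bound for the $d$ of any $(n,k,r,\delta)$-matroid, so it suffices to exhibit one matroid with these parameters whose $d$ equals this value; that forces $d_{max}$ to equal it. I would obtain such a matroid from Theorem~\ref{thmCorollIII.1}, so the whole task reduces to producing atoms $F_1,\dots,F_m$, the integer $k$, and ranks $\rho(F_i)$ satisfying the four conditions \eqref{eqCorollIII.1}, and then reading off the parameters via Theorem~\ref{thmConstr1Props}.

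The conceptual key is that, since $\kr=2$, the optimality criterion of Lemma~\ref{lemmaOptForCorollIII.1}(i) only constrains unions of \emph{two} atoms. This frees me to collect all pairwise overlaps into a single shared set $X$ of size $a$, instead of spreading them out pairwise as Graph construction~1 does; packing the savings into one small region is exactly what allows a smaller $n$ to suffice. Concretely I would take $m=\left\lceil\frac{n}{r+\delta-1}\right\rceil$ atoms, each of size $s:=r+\delta-1$ and rank $r$ (so $\eta(F_i)=\delta-1$), with every pairwise intersection contained in $X$. To land $|E|$ exactly on $n$ I would let the first $\left\lceil\frac{b}{a}\right\rceil$ atoms contain all of $X$, let atom $\left\lceil\frac{b}{a}\right\rceil+1$ meet $X$ in the remaining $b-\left(\left\lceil\frac{b}{a}\right\rceil-1\right)a$ elements, and keep the rest disjoint from $X$.

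I would then check \eqref{eqCorollIII.1}. Parts (i)--(ii) are immediate. Adding the atoms one at a time gives a telescoping count that collapses to $\left\lceil\frac{n}{s}\right\rceil s-b=n$, so $|E|=n$. For (iii), writing $\rho'(E)=|E|-\sum_i\eta(F_i)$ and bounding each atom after the first by its minimal contribution $s-a$ yields $\rho'(E)\ge mr-(m-1)a=k+(m-2)(r-a)\ge k$, where I use $k=2r-a$, the fact that $a=2r-k<r$, and $m\ge 2$; hence $\rho(E)=k$. Condition (iv) holds because every pairwise intersection sits in $X$, so $|F_{[m]\setminus\{j\}}\cap F_j|\le|X|=a<r=\rho(F_j)$. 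Theorem~\ref{thmConstr1Props} then returns $n$, $k$, $\delta$ and $r$ as designed. For $d$, I would show $\mathcal{Z}_{<k}$ contains only the singletons: any pair has $\rho'(F_{\{i,j\}})=|F_{\{i,j\}}|-2(\delta-1)\ge(2s-a)-2(\delta-1)=2r-a=k$, so no union of two or more atoms lies in $\mathcal{Z}_{<k}$. Consequently $\max\{\sum_{i\in I}\eta(F_i):F_I\in\mathcal{Z}_{<k}\}=\delta-1$, giving $d=n-k+1-(\delta-1)=n-k+1-\left(\kr-1\right)(\delta-1)$ since $\kr=2$.

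The hard part will be the simultaneous bookkeeping in the construction: arranging the overlaps so that $|E|$ equals $n$ \emph{exactly} while keeping every pairwise intersection inside $X$ and of size at most $a$. The telescoping count and the lower bound for $\rho'(E)$ must be set up so that (iii) survives for every admissible $m$, and it is precisely here that the hypothesis $\left\lceil\frac{n}{r+\delta-1}\right\rceil\ge\left\lceil\frac{b}{a}\right\rceil+1$ is needed: it guarantees there are enough atoms to absorb the $b$ units of savings at a rate of at most $a$ per atom while still leaving $m-\left\lceil\frac{b}{a}\right\rceil-1$ full atoms disjoint from $X$.
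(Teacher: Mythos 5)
Your proposal is correct and is essentially the paper's own proof: the identical construction ($m=\left\lceil\frac{n}{r+\delta-1}\right\rceil$ atoms of size $r+\delta-1$ and rank $r$, all pairwise intersections packed into a single set $X$ with $|X|=a$, the first $\left\lceil\frac{b}{a}\right\rceil$ atoms containing $X$, the next meeting it in the remainder $b-\left(\left\lceil\frac{b}{a}\right\rceil-1\right)a$, the rest disjoint), the same telescoping count giving $|E|=n$, the same verification of \eqref{eqCorollIII.1}(iii)--(iv) via the bound $mr-(m-1)a\geq k$, and the same computation showing $\mathcal{Z}_{<k}=\{F_i\}_{i\in[m]}$ because $\rho'(F_{\{i,j\}})\geq 2r-a=k$, whence $d=n-k+1-(\delta-1)$ achieves the bound. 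Your reorganization of the rank estimate as $k+(m-2)(r-a)\geq k$ is just an algebraic rewriting of the paper's chain of inequalities, so there is nothing substantive to add.
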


Even this scheme does not give the full class of matroids which are optimal and have $\lkr = 2$. To see how the atoms could be organized even more efficiently, notice that we may have $a<r-1$, in which case the atoms of our construction have more non-shared elements than would be necessary. Using these unnecessarily non-shared elements, the atoms $F_i$ could be packed even more efficiently.

For example, if we denote by $F_i'$ the set of elements the atom $F_i$ is allowed to share ($|F_i'|=\rho(F_i)-1$), the following setup of three atoms $F_i'$ would be possible, when $a=2$ and $|F_i'|=4$:

\begin{equation}
\label{bestSchemeExample}
\begin{alignedat}{2}
& F_1' = \{1,2,3,4 \}, \\
& F_2' = \{1,2,5,6\}, \\
& F_3' = \{3,4,5,6\}.
\end{alignedat}
\end{equation}

Here we have $n=6+3(\eta(F_i)+1)$. For a reference, the construction used in Theorem \ref{thmResult1} in a similar case would be

\begin{align*}
& F_1' = \{1,2,3,4\}, \\
& F_2' = \{1,2,5,6\}, \\
& F_3' = \{1,2,7,8\},
\end{align*}

where $n=8+3(\eta(F_i)+1)$. Here we have $2a \leq r-1$, so the construction for \eqref{eqOldForKr2} would also have $n=8+3(\eta(F_i)+1)$.

Graph construction 1 is also used to derive classes of optimal matroids for $\lkr \geq 3$. Improvements similar to Theorem \ref{thmResult1} may be possible here as well, but they would be considerably more complicated.

\subsection{A general lower bound for $d_{max}$}

Besides finding the classes of parameters $(n,k,r,\delta)$ for which the bound can be achieved, we are also interested in finding $d_{max}$ for those classes for which the bound can not be achieved. The following partial result towards this goal was presented in \cite{artikkeli}, for $b>a$:

\begin{equation}
\label{oldbound}
d_{max} \geq \begin{cases}
n-k+1-\kr(\delta-1) &\textrm{if $b \leq r-1,$} \\
n-k+1-\kr(\delta-1)+(b-r) &\textrm{if $b \geq r$.}
\end{cases}
\end{equation}

If we do not require optimality, we can ignore some of the requirements in \eqref{eqSuffForOptimality}. If we let go of (i) and allow that we may reach full rank only with a union of $\lkr+1$ atoms, we get the bound for $b \leq r-1$ in \eqref{oldbound}.

If we let go of (ii) we can use $m = \left\lceil \frac{n}{r+\delta-1} \right\rceil - 1$, which lets us have atoms of full rank $r$ that are pairwise disjoint. In order that the union of the atoms is large enough we must however have at least one atom with $\eta(F_i)> \delta-1$. This was done in \cite{artikkeli} to obtain the bound for $b \geq r$ in \eqref{oldbound}. The corresponding matroid construction had one atom that contained all the extra nullity required. However, a better bound can be derived by spreading the extra nullity evenly among the atoms, since then all the extra nullity can not be included in a coatom, i.e. a union of $\kr-1$ atoms. This is done in the following theorem:  \newline

\begin{thm}
    \label{raja}
    Let $(n,k,r,\delta)$ be integers such that $0<r<k\leq n - \kr(\delta-1)$, $\delta \geq 2$ and $b>a$. Also let $m=\left\lceil \frac{n}{r+\delta-1} \right\rceil - 1$ and $v = r+\delta-1-b-\left\lfloor \frac{r+\delta-1-b}{m} \right\rfloor m$.
    
    Then, if $\delta-1 \leq \left( \left\lceil\frac{k}{r} \right\rceil -1\right) \left\lfloor \frac{r+\delta-1-b}{m} \right\rfloor
    + \min\{v, \kr -1\}$, we have
    \begin{equation}
    \label{remainingbound}
    d_{max} \geq n-k+1-\kr(\delta-1).
    \end{equation}
    
    Otherwise, if $\delta-1 > \left( \left\lceil\frac{k}{r} \right\rceil -1\right) \left\lfloor \frac{r+\delta-1-b}{m} \right\rfloor
    + \min\{v, \kr -1\}$, then
    
    \begin{equation}
    \label{newbound}
    d_{max} \geq n-k+1- \left( \kr -1\right)\left(\left\lfloor \frac{r+\delta-1-b}{m} \right\rfloor+\delta-1\right)  - \min\left\{v, \kr -1\right\}.
    \end{equation}
    
    This bound is always at least as good as the bound in \eqref{oldbound}. Moreover, denoting the bound for $b \geq r$ in \eqref{oldbound} by $d_{old} = n-k+1-\kr(\delta-1)+(b-r)$ and similarly the bound in \eqref{newbound} by
    
    \begin{equation*}
    d_{new} = n-k+1- \left( \kr -1\right)\left(\left\lfloor \frac{r+\delta-1-b}{m} \right\rfloor+\delta-1\right)  - \min\left\{v, \kr -1\right\},
    \end{equation*}
    
    it follows that
    
    \begin{equation}
    \label{parannus}
    d_{new} - d_{old} \geq \left\lfloor \frac{r+\delta-1-b}{m} \right\rfloor \left(m-\left\lceil\frac{k}{r} \right\rceil +1\right) \geq 0.
    \end{equation}
    
\end{thm}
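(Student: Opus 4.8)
The plan is to prove the two lower bounds by exhibiting explicit matroids produced by Theorem \ref{thmCorollIII.1}, and then to settle the comparison with \eqref{oldbound} by a direct computation. Throughout write $s=r+\delta-1$ and $q=\lfloor\frac{s-b}{m}\rfloor$, so that the surplus nullity $s-b=r+\delta-1-b$ satisfies $s-b=qm+v$ with $0\le v<m$. For the general bound \eqref{newbound} I would take the $m=\lceil\frac{n}{r+\delta-1}\rceil-1$ sets $F_1,\dots,F_m$ to be pairwise disjoint, assign each rank $\rho(F_i)=r$, and distribute the surplus nullity as evenly as possible: $v$ of the atoms receive nullity $(\delta-1)+q+1$ and the remaining $m-v$ receive nullity $(\delta-1)+q$. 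Because the atoms are disjoint and of rank $r$, conditions (i),(ii),(iv) of \eqref{eqCorollIII.1} are immediate, while (iii) reduces to $k\le mr$, which holds since $b>a$ forces $m\ge\kr$ by Lemma \ref{suhderaja}; hence this data defines a matroid through Theorem \ref{thmConstr1Props}.

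The delicate point is that when $q\ge1$ the even distribution leaves \emph{every} atom with nullity strictly above $\delta-1$, so Theorem \ref{thmConstr1Props}(iv) certifies only the stronger locality $(r,\delta+q)$. I would therefore argue $(r,\delta)$-locality directly: each atom restricts to a uniform matroid $U_{r,|F_i|}$ on $|F_i|=r+\eta(F_i)\ge r+\delta-1$ elements, so every $x\in E$ lies in a subset $S\subseteq F_i$ of size $r+\delta-1$ with $M|S=U_{r,r+\delta-1}$ a cyclic set of minimum distance $\delta$; thus $M$ genuinely has $(r,\delta)$-locality and qualifies as an $(n,k,r,\delta)$-matroid. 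Its minimum distance then comes from Theorem \ref{thmConstr1Props}(iii): since the atoms are disjoint of rank $r$, $F_I\in\mathcal{Z}_{<k}$ exactly when $|I|r<k$, i.e. $|I|\le\kr-1$, and $\sum_{i\in I}\eta(F_i)$ is maximized over such $I$ by taking $\kr-1$ atoms of largest nullity, giving $(\kr-1)(\delta-1+q)+\min\{v,\kr-1\}$. Substituting into (iii) yields exactly $d=d_{new}$, proving \eqref{newbound} and in particular the second-case claim.

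In the first case the hypothesis $\delta-1\le(\kr-1)q+\min\{v,\kr-1\}$ is precisely the statement that $d_{new}\le n-k+1-\kr(\delta-1)$, so the construction above is not strong enough and \eqref{remainingbound} must come from a different matroid. Here I would keep every atom at the baseline nullity $\delta-1$ but use $\lceil\frac{n}{r+\delta-1}\rceil$ atoms overlapping in a total of $b$ elements (reducing the disjoint size $n+b$ down to $n$), arranged so that every union of $\kr+1$ atoms attains full rank $k$. Then each member of $\mathcal{Z}_{<k}$ is a union of at most $\kr$ baseline atoms, so $\max\{\sum_{i\in I}\eta(F_i):F_I\in\mathcal{Z}_{<k}\}\le\kr(\delta-1)$, and Theorem \ref{thmConstr1Props}(iii) gives $d\ge n-k+1-\kr(\delta-1)$. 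The main obstacle of the whole proof lies here: one must check that such an overlap pattern is realizable while respecting \eqref{eqCorollIII.1}(iv) (each atom keeping more than $\delta-1$ private elements), and that the first-case hypothesis is exactly what guarantees the $b$ shared elements can be spread so that no $(\kr+1)$-union falls short of rank $k$.

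Finally, the comparison with \eqref{oldbound} is purely arithmetic. Using $s-b=qm+v$ and cancelling terms in $d_{new}-d_{old}$ gives
\begin{equation*}
d_{new}-d_{old}=q\left(m-\kr+1\right)+\bigl(v-\min\{v,\kr-1\}\bigr),
\end{equation*}
and since $v-\min\{v,\kr-1\}\ge0$, $q\ge0$, and $m\ge\kr$ (again Lemma \ref{suhderaja}), both summands are nonnegative; this is \eqref{parannus}. The same identity shows the new bound dominates \eqref{oldbound} in every regime, the first case following from $n-k+1-\kr(\delta-1)\ge d_{new}\ge d_{old}$ and the range $b\le r-1$ being handled because there \eqref{oldbound} already equals $n-k+1-\kr(\delta-1)$.
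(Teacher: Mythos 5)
Your proof of \eqref{newbound} follows the paper's route essentially verbatim: the paper instantiates Graph construction 1 with $W=\emptyset$, $\alpha\equiv 0$ and $\beta$ spreading the surplus $r+\delta-1-b$ as evenly as possible over $m=\lceil\frac{n}{r+\delta-1}\rceil-1$ vertices, which is exactly your disjoint-atom matroid, and its evaluation of $d$ via Theorem \ref{c1thm}(ii) is the same computation you perform with $\mathcal{Z}_{<k}$ and the maximal nullity sum. Your ``delicate point'' about $(r,\delta)$- versus $(r,\delta+q)$-locality is real, and your uniform-restriction argument is precisely the fact the paper outsources to the final assertion of Theorem \ref{thmConstr1Props} (any $S\subseteq F_i$ with $|S|=\rho(F_i)+\delta-1$ is a locality set); making it explicit is a small gain in rigor. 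Likewise your closed-form identity $d_{new}-d_{old}=q\,(m-\lkr+1)+(v-\min\{v,\lkr-1\})$ is a sharpened version of the paper's verification of \eqref{parannus}, and your dominance discussion over \eqref{oldbound} is correct.

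The genuine gap is case 1, i.e.\ \eqref{remainingbound}. You correctly see that the disjoint construction only yields $d_{new}$, which in case 1 lies below the claimed bound, and you propose an overlapping construction with $\lceil\frac{n}{r+\delta-1}\rceil$ baseline-nullity atoms -- but you explicitly leave unverified that the $b$ shared elements can be arranged so that every union of $\lkr+1$ atoms reaches rank $k$ while respecting \eqref{eqCorollIII.1}(iv). This is not a removable formality: when $b>r+a$ and $\lceil\frac{n}{r+\delta-1}\rceil=\lkr+1$, every such arrangement fails, since the union of all atoms then has $\rho'=n-(\lkr+1)(\delta-1)=(\lkr+1)r-b<k$, so condition (iii) of \eqref{eqCorollIII.1} cannot even be met. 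What rescues case 1 is an observation you never make: the case-1 hypothesis forces $b\le r$. Indeed $m\ge\lkr$ by Lemma \ref{suhderaja}, so $(\lkr-1)\lfloor\frac{r+\delta-1-b}{m}\rfloor+\min\{v,\lkr-1\}\le m\lfloor\frac{r+\delta-1-b}{m}\rfloor+v=r+\delta-1-b$, and the hypothesis then gives $\delta-1\le r+\delta-1-b$. But for $b\le r$ the right-hand side of \eqref{remainingbound} coincides with the already-established bound \eqref{oldbound} (the first branch when $b\le r-1$; the second branch gives the same value when $b=r$), so case 1 requires no new construction at all -- which is evidently how the paper's proof treats it, since its graph construction likewise delivers only $d_{new}$ and the remaining case is tacitly absorbed by \eqref{oldbound}, the bridge being exactly this unstated implication. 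Alternatively, once $b\le r\le r+a$ is in hand, your overlap pattern does become realizable: place all $b$ overlaps on one pair of atoms if $b\le r-1$, or spread them over a triangle of atoms if $b=r$, so that each atom shares at most $r-1<\rho(F_j)$ elements (securing \eqref{eqCorollIII.1}(iv)) and every $(\lkr+1)$-union has $\rho'\ge(\lkr+1)r-b\ge\lkr r-a=k$. Without first deriving $b\le r$, the step you yourself flag as ``the main obstacle'' cannot be carried out, so as written your argument proves \eqref{newbound} and \eqref{parannus} but not \eqref{remainingbound}.
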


\begin{proof}
    Let $n' \in \mathbb{Z}$ be such that it satisfies the conditions for n in Theorem \ref{raja}. Then, analogously to the result in Lemma \ref{suhderaja}, $n'$ also satisfies $\left\lceil \frac{n'}{r+\delta-1} \right\rceil  \geq  \kr +1$.
    
    \emph{Graph construction}. Let $G(\alpha, \beta, \gamma; k, r, \delta)$ be intended as an instance of Graph construction 1 with
    
    \begin{align}
    \begin{split}
    & \textrm{(a)} \quad m = \left\lceil \frac{n'}{r+\delta-1} \right\rceil - 1,\\
    & \textrm{(b)} \quad  W=\emptyset, \\
    & \textrm{(c)} \quad \alpha (i)= 0 \textrm{ for } i \in [m], \\
    & \textrm{(d)} \quad \beta (i) = \begin{cases} \left\lceil \frac{r+\delta-1-b'}{m} \right\rceil \textrm{ for } 1 \leq i \leq v' \vspace{1.5 mm}, \\
    \left\lfloor \frac{r+\delta-1-b'}{m} \right\rfloor \textrm{ for } v' <  i \leq m,
    \end{cases}
    \end{split}
    \end{align}

    where $b' = \left\lceil \frac{n'}{r+\delta-1} \right\rceil (r+\delta-1) - n'$ and $v' = r+\delta-1-b'-\left\lfloor \frac{r+\delta-1-b'}{m} \right\rfloor m$.  Now we have to show that the conditions in equation \eqref{c1} apply in order to prove that this is indeed an instance of Graph construction 1.  Conditions (i), (ii) and (iv) are trivially satisfied. As $b' < r+\delta-1$, (iii) is also true. For our construction G, requirement (v) is simplified to the form $k \leq rm$, which is true as
    
    \begin{equation*}
    k=\kr r - a \leq \kr r \leq \left(\left\lceil \frac{n'}{r+\delta-1} \right\rceil - 1\right) r = mr.
    \end{equation*}
    
    As $W= \emptyset$, $\alpha (i)= 0 \textrm{ for } i \in [m]$ and $r>0$, (vi) is
    also true. Thus G is an instance of Graph construction 1.

    We have $\sum_{i \in [m]}\beta(i) = r+\delta-1-b'$ because of the following:  If  $\frac{r+\delta-1-b'}{m}  \in \mathbb{Z}$:
    
    \begin{equation*}
    \sum_{i \in [m]}\beta(i) = m \cdot \frac{r+\delta-1-b'}{m} = r+\delta-1-b'.
    \end{equation*}
    
    If $\frac{r+\delta-1-b'}{m}  \notin \mathbb{Z}$:
    
    \begin{equation*}
    \begin{split}
    \sum_{i \in [m]}\beta(i) &= \left\lceil \frac{r+\delta-1-b'}{m} \right\rceil  v' + \left\lfloor \frac{r+\delta-1-b'}{m} \right\rfloor (m-v') \\
    &= v' + \left\lfloor \frac{r+\delta-1-b'}{m} \right\rfloor m \\
    &= r+\delta-1-b'.
    \end{split}
    \end{equation*}

    From Theorem \ref{c1thm} we get thus that
    \begin{equation*}
    \begin{split}
    n&=(r+\delta-1)m + \sum_{i \in [m]}\beta (i) \\
    &= (r+\delta-1)(m+1)-b' \\
    &= (r+\delta-1)\left\lceil \frac{n'}{r+\delta-1} \right\rceil-\left( \left\lceil \frac{n'}{r+\delta-1} \right\rceil (r+\delta-1) - n'\right) \\
    &= n'.
    \end{split}
    \end{equation*}
    
    This shows that we can use this construction for any desired parameter set $(n,k,r,\delta)$ satisfying the requirements in Theorem \ref{raja}. As $n=n'$ also $a=a'$, $b=b'$ and $v=v'$, so we can from now on only use the non-primed letters.
    
    Using Theorem \ref{c1thm} (ii), we show next that this construction gives a $d$ that is the desired lower bound for $d_{max}$. We have $\max_{I \in V_{<k}}(|I|) = \lkr -1 $, as  for every $I \in [m], \sum_{i \in I} \alpha(i) = \sum_{w \in I \times I} \gamma(w) = 0$. Clearly $(\delta-1)|I|+\sum_{i \in I} \beta(i)$ is maximized with a maximal $|I|$, so we get from Theorem \ref{c1thm} (ii) that
    
    \begin{equation*}
    d = n-k+1-
    \left( \kr -1 \right) (\delta-1) -
    \left( \kr -1\right)\left\lfloor \frac{r+\delta-1-b}{m} \right\rfloor  - \min\left\{v, \kr -1\right\}.
    \end{equation*}
    
    We now get the bound in \eqref{newbound} by rearranging the right side of this equation.
    
    Next we prove the result in \eqref{parannus}. By simple cancellation of identical terms and some rearranging we get the following inequality that is equivalent to the first inequality in \eqref{parannus}:
    
    \begin{equation*}
    m \left\lfloor \frac{r+\delta-1-b}{m} \right\rfloor + \min\left\{v, \kr -1\right\} \leq r+\delta-1-b.
    \end{equation*}
    
    This inequality can be seen to be true by using $v$ as an upper bound for $\min\{v, \lkr -1\}$ and substituting $v$ by its definition. Thus the first inequality \eqref{parannus} is true. The second inequality in \ref{parannus} is true because $m = \lceil \frac{n}{r+\delta-1} \rceil - 1 \geq \lkr$  by Lemma \ref{suhderaja}.
    
    The combined bound of \eqref{remainingbound} and \eqref{newbound} is always at least as good as \eqref{oldbound} because $d_{new} \geq d_{old}$ and we always use the stricter of the two bounds \eqref{remainingbound} and \eqref{newbound} in our result.\newline

\end{proof}

Could this bound be improved even further? The following theorem gives a partial answer by stating that for matroids from Theorem \ref{thmCorollIII.1}, the bound is strict for parameter sets $(n,k,r,\delta)$ for which there exists no optimal matroid from Theorem \ref{thmCorollIII.1}. \newline 

\begin{thm}
    Let $(n,k,r,\delta)$ be integers such that there exists no $(n,k,d',r,\delta)$-matroid from Theorem \ref{thmCorollIII.1} with $d'=n-k+1-(\lkr-1)(\delta-1)$. Let $M$ be a $(n,k,d,r,\delta)$-matroid from Theorem \ref{thmCorollIII.1} and let us denote the bound in Theorem \ref{raja} by $d_{b}=d_{b}(n,k,r,\delta)$. Then $d \leq d_b$.

\end{thm}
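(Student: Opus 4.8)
The plan is to work from the explicit formula in Theorem \ref{thmConstr1Props}(iii), namely $d = n-k+1-N$ with $N = \max\{\sum_{i\in I}\eta(F_i) : F_I \in \mathcal{Z}_{<k}\}$, and to recast the claim $d \le d_b$ as the equivalent lower bound $N \ge n-k+1-d_b =: N_b$. Writing $\rho'(F_I)=|F_I|-\sum_{i\in I}\eta(F_i)$ and using the strict monotonicity of $\rho'$ that holds for matroids of Theorem \ref{thmCorollIII.1}, one has $\mathcal{Z}_{<k}=\{F_I:\rho'(F_I)<k\}$ and $\eta(F_I)=\sum_{i\in I}\eta(F_i)$ on this family, so $N$ is exactly the largest atom-nullity sum over unions of atoms whose $\rho'$-value stays below $k$.

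I would first record two cheap facts. Since each $\rho(F_i)\le r$, every union of at most $\kr-1$ atoms has $\rho'\le(\kr-1)r<k$ and hence lies in $\mathcal{Z}_{<k}$; thus $N$ is at least the sum $\Sigma_{\kr-1}$ of the $\kr-1$ largest atom-nullities, and in particular $N\ge(\kr-1)(\delta-1)$. Second, if some union of $\kr$ atoms \emph{also} lies in $\mathcal{Z}_{<k}$, then that union already gives $N\ge\kr(\delta-1)$; since a direct comparison shows $N_b\le\kr(\delta-1)$ in both branches of Theorem \ref{raja} (with equality exactly in the first branch), this case is finished immediately. This splits the argument into an \emph{easy} regime (some $\kr$-fold union has $\rho'<k$) and a \emph{hard} regime (every $\kr$-fold union has $\rho'\ge k$), the latter forcing $N=\Sigma_{\kr-1}$, because any index set of size $\ge\kr$ contains a $\kr$-subset and so has $\rho'\ge k$ by monotonicity.

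In the hard regime I would invoke the hypothesis that no optimal matroid of Theorem \ref{thmCorollIII.1} exists. If every atom had the minimal nullity $\delta-1$, then ``every $\kr$-union has $\rho'\ge k$'' rewrites as $|F_T|\ge k+\kr(\delta-1)$ for all $T$ with $|T|=\kr$, which is precisely condition (i) of Lemma \ref{lemmaOptForCorollIII.1}; combined with $\eta(F_i)=\delta-1$ (condition (ii)) this would make $M$ optimal, contradicting the hypothesis. Hence in the hard regime $M$ must carry genuine excess nullity, and it remains to prove $\Sigma_{\kr-1}\ge N_b$.

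The heart of the proof, and the step I expect to be the main obstacle, is an extremal comparison showing that the evenly-spread, pairwise-disjoint, full-rank construction used to prove Theorem \ref{raja} minimises $\Sigma_{\kr-1}$ among all hard-regime matroids of Theorem \ref{thmCorollIII.1} with the given $(n,k,r,\delta)$. Concretely, I would fix the value $N=\Sigma_{\kr-1}$ and bound the total size $n$ from above: using $|F_I|=\rho'(F_I)+N\le(\kr-1)r+N$ for the top $\kr-1$ atoms, the hard-regime fact that adjoining any further atom raises the rank to exactly $k$, and condition (iv) of Theorem \ref{thmCorollIII.1} to limit how much each remaining atom may overlap the others, one shows that $n$ cannot exceed the size of the corresponding even-spread matroid having the same top-$(\kr-1)$ nullity. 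Since that extremal matroid realises the top-$(\kr-1)$ nullity appearing in the second branch of Theorem \ref{raja}, which is in turn $\ge N_b$, every matroid of the prescribed size $n$ must satisfy $\Sigma_{\kr-1}\ge N_b$, completing the hard regime and hence the theorem. The difficulty lies entirely in making this size estimate tight in the presence of arbitrary overlaps and unequal atom ranks; the cleanest route is probably to argue that distributing the excess nullity over the maximal possible number $m=\lceil n/(r+\delta-1)\rceil-1$ of atoms, as evenly as possible, is simultaneously the size-maximising and the $\Sigma_{\kr-1}$-minimising configuration, matching the $b,m,v$ bookkeeping in Theorem \ref{raja}.
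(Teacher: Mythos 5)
Your framing is sound: the recast of $d\le d_b$ as a lower bound on $N=\max\{\sum_{i\in I}\eta(F_i):F_I\in\mathcal{Z}_{<k}\}$, the split into the easy regime ($\max\{|I|:F_I\in\mathcal{Z}_{<k}\}\ge\lkr$, where $N\ge\lkr(\delta-1)\ge N_b$) and the hard regime ($\max\{|I|:F_I\in\mathcal{Z}_{<k}\}=\lkr-1$), and the use of Lemma \ref{lemmaOptForCorollIII.1} to extract excess nullity in the hard regime all coincide with the paper's proof, as does the final even-spread computation leading to \eqref{eqBoundTasajako2}. However, there is a genuine gap at exactly the step you flag as the heart of the argument: you assert that $m=\lceil n/(r+\delta-1)\rceil-1$ is ``the maximal possible number'' of atoms, but this is false as a structural statement about matroids from Theorem \ref{thmCorollIII.1}. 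Atoms may be smaller than $r+\delta-1$ (any $0<\rho(F_i)<|F_i|$ is allowed) and may overlap, so $m$ can be $\lceil n/(r+\delta-1)\rceil$ or larger; and since your bound \eqref{eqBoundTasajako2} is \emph{decreasing} in $m$ (a larger $m$ weakens the lower bound $s\ge n-rm$ on the total nullity and spreads it over more atoms), a hard-regime matroid with too many atoms would slip below $N_b$, and no amount of local bookkeeping with condition (iv) of Theorem \ref{thmCorollIII.1} and atom sizes will rule it out.

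The paper closes this hole by using the nonexistence hypothesis a second time, and in an essentially stronger way than you do. You invoke the hypothesis only to conclude that $M$ itself carries excess nullity (which needs only that $M$ is not optimal); the paper instead proves $m<\lceil n/(r+\delta-1)\rceil$ by contradiction via an explicit algorithmic transformation (Algorithm 1): given a hard-regime matroid with $m\ge\lceil n/(r+\delta-1)\rceil$ and an atom with $\eta(F_u)>\delta-1$, it deletes an element of $F_u$ and re-inserts it elsewhere (into a rank-deficient atom, or in exchange for a shared element, which a counting argument shows must exist when $m\ge\lceil n/(r+\delta-1)\rceil$), preserving conditions (i)--(iv) of \eqref{eqCorollIII.1}, the value of $\rho'$ from below, and the hard-regime property, while strictly decreasing the excess nullity. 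Iterating produces an \emph{optimal} matroid from Theorem \ref{thmCorollIII.1} with the same parameters, contradicting the hypothesis that none exists. Only after this does the even-spread estimate apply with $m\le\lceil n/(r+\delta-1)\rceil-1$, yielding \eqref{newbound}. Your proposal would need this transformation (or an equivalent global use of the nonexistence hypothesis) to make the extremal comparison legitimate; as written, the ``size-maximising and $\Sigma_{\lkr-1}$-minimising'' claim is precisely the unproved point.
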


\begin{proof}
    Let $M=M(F_1,...,F_m;k;\rho)$ be a matroid from Theorem \ref{thmCorollIII.1} for which there exists no optimal matroid from Theorem \ref{thmCorollIII.1} with the same parameters $(n,k,r,\delta)$.
    
    Assume that $\max\{|I| : F_I \in \mathcal{Z}_{<k}\} \geq \lkr$. Using Theorem \ref{thmConstr1Props} (iii), we then obtain $ d \leq n-k+1-\kr(\delta-1)$, as $\eta(F_i) \geq \delta-1$ for every $i \in [m]$. 
    
    Thus the theorem holds in this case and we are only left with the case $\max\{|I| : F_I \in \mathcal{Z}_{<k}\} = \lkr - 1$. Having $\max\{|I| : F_I \in \mathcal{Z}_{<k}\}<\lkr-1$ is impossible, since it would imply $\rho(F_J)=k$ for every $J$ with $|J|=\lkr-1$. This in turn is impossible since $\rho(F_i) \leq r$ and rank behaves subadditively, according to  \eqref{rankEqs}(iii).
    
    There must be an atom $F_i$ with $\eta(F_i) > \delta-1$, since otherwise the matroid would be optimal. Next we show that our current assumptions imply $m < \lceil \frac{n}{r+\delta-1} \rceil$. We do this by showing that $m \geq \lceil \frac{n}{r+\delta-1} \rceil$ would allow the existence of optimal matroids, which is a contradiction. The optimal matroids are constructed by repeatedly applying Algorithm 1, which takes a Theorem \ref{thmCorollIII.1} -matroid $M_i$ satisfying
    
    \begin{enumerate}
    	\item $\max\{|I| : F_I \in \mathcal{Z}_{<k}\} = \lkr - 1$,
    	\item $m \geq \lceil \frac{n}{r+\delta-1} \rceil$,
    	\item $\exists F_u \in A_{\mathcal{Z}(M_i)}: \ \eta(F_u) > \delta-1$ ,
    \end{enumerate} 
    
    and returns another Theorem 9 -matroid $M_{i+1}$ still satisfying 1. and 2. but having the nullity of the atom $F_u$ with $\eta(F_u) > \delta-1$ reduced by one.
    
    \begin{algorithm}
        \caption{From $M_i$ to $M_{i+1}$}
    \begin{algorithmic}[1]
        \State $F_u$ is an atom with $\eta(F_u)>\delta-1$.
        \State $x$ is an element $x \in F_u$.
        \State $F_u \gets F_u \setminus \{x\}$
        \If {$\forall i \in [m]: x \notin F_i$}
            \If {$\exists F_j: \rho(F_j)<r$}
                \State $F_j \gets F_j \cup \{x\}$
                \State $\rho(F_j) \gets \rho(F_j)+1$
            \Else
                \State $F_k, F_l$ are distinct atoms with $|F_k \cap F_l| \geq 1$.
                \State $y$ is an element $y \in F_k \cap F_l$.
                \State $F_k \gets F_k \setminus \{y\}$
                \State $F_k \gets F_k \cup \{x\}$
            \EndIf
            
        \EndIf
    \end{algorithmic}
    \end{algorithm}
    
    The definition of the algorithm is otherwise clearly sound, but we need to prove that the atoms $F_k$ and $F_l$ required on line 9 exist. Assume on the contrary that they do not. Then,
    \begin{equation*}
    \left|\bigcup_{i \in [m]}F_i\right| > \left\lceil \frac{n}{r+\delta-1} \right\rceil (r+\delta-1) \geq n,
    \end{equation*}
    
    which is contradiction. Thus the desired $F_k$ and $F_l$ exist.
    
    Next, we need to prove that $M_{i+1}$ is a matroid from Theorem \ref{thmCorollIII.1}. We do this by proving that the conditions (i)-(iv) of \eqref{eqCorollIII.1} are satisfied. 
    
    Condition (i) stays satisfies after executing lines 1-3 as originally $|F_u| = \rho(F_u) + \eta(F_u) > \rho(F_u)+1$. Lines 5-7 preserve the condition as both the rank and the size of the atom are increased by one. Lines 9-12 preserve the size and rank of $F_k$ and $F_l$. Condition (ii) is satisfied as $x$ is re-added and $y$ always remains in $F_l$. Condition (iii) is satisfied for $M_{i+1}$ as it is satisfied for $M_i$, and the nullity of an atom is never increased and $|F_{[m]}|=|E|$ stays constant.
    
    
    To prove (iv), first note that the condition is equivalent to every atom $F_i$ having at least $\eta(F_i)+1$ elements that are not contained in any other atom, i.e. are non-shared. 
    On lines 1-3, $F_u$ may lose one non-shared element, but its nullity also decreases by one. Also, $x$ can not be unique to any other atom besides $F_u$. On lines 5-6, $x$ does not belong to any atom and we do not increase the nullity or decrease the amount of unique elements of $F_j$. Similarly on lines 10-11, $y$ is not unique to any atom and we do not alter the nullities or decrease the amounts of unique elements of $F_k$ or $F_l$. Thus $M_{i+1}$ satisfies (iv) and we have proved that $M_{i+1}$ is a matroid from Theorem \ref{thmCorollIII.1}.
    
    The new matroid $M_{i+1}$ also satisfies $\max\{|I| : F_I \in \mathcal{Z}_{<k}\} = \lkr - 1$. This is because the nullity of an atom is never increased, and $F_u$ is the only atom whose size is reduced, but this is compensated by correspondingly reducing its nullity. Thus $\rho'_{i+1}(F_I) \geq \rho'_i(F_I)$ for every $I \subseteq [m]$, where we denote $\rho'(F_I) = |F_{I}|-\sum_{i \in I} \eta(F_i)$, and the subscript distinguishes between the matroids $M_i$ and $M_{i+1}$. 
    
    
    
    Thus we can repeatedly use Algorithm 1 to decrease the nullity of some atom $F_u$ with $\eta(F_u)>\delta-1$ until every atom has $\eta(F_i)=\delta-1$. From Theorem \ref{thmConstr1Props} (iii) we then see that we have obtained an optimal matroid. However, this is a contradiction and therefore $m < \left\lceil \frac{n}{r+\delta-1} \right\rceil$.

    Let us denote $s=\sum_{i \in [m]} \eta(F_i)$. Let us distribute this nullity evenly among the atoms $F_i$, i.e. set
    
    \begin{equation}
    \eta (F_i) = \begin{cases} \left\lceil \frac{s}{m} \right\rceil \textrm{ for } 1 \leq i \leq s- \left\lfloor \frac{s}{m} \right\rfloor m \vspace{1.5 mm}, \\
    \left\lfloor \frac{s}{m} \right\rfloor \textrm{ for } s- \left\lfloor \frac{s}{m} \right\rfloor m <  i \leq m.
    \end{cases}
    \end{equation}
    
    For minimizing $\max\left\{\sum_{i \in I}\eta(F_i): |I|= \kr-1 \right\}$, this setup is optimal and yields the bound
    
    \begin{equation}
    \label{eqBoundTasajako}
    \begin{split}
    & \max\left\{\sum_{i \in I}\eta(F_i): |I|= \kr-1 \right\} \\
    & \geq \left(\kr-1\right)\left\lfloor \frac{s}{m} \right\rfloor + \min\left\{\kr-1, s- \left\lfloor \frac{s}{m} \right\rfloor m  \right\}.
    \end{split}
    \end{equation}
    
    A proof can be given by contradiction: Assume that there exists a set of $\lkr-1$ atoms whose sum of nullities is maximal, but lower than the bound in  \eqref{eqBoundTasajako}. This imposes such an upper bound on the nullities of the other atoms that $\sum_{i \in [m]}\eta(F_i)$ must be lower than $s$, which is of course a contradiction.
    
    The bound in \eqref{eqBoundTasajako} is increasing as a function of $s$. Let us show this by considering how the value of the bound changes when the value of $s$ is increased by one. If $\lfloor \frac{s}{m} \rfloor$ remains unchanged, the value of the bound clearly does not decrease. If the value of $\lfloor \frac{s}{m} \rfloor$ is increased, the first term is increased by $(\lkr-1)$, whereas the value of the second term is altered by at most $(\lkr-1)$, since $0 \leq \min\{\lkr-1, s- \lfloor \frac{s}{m} \rfloor m  \}\leq \lkr-1$. Thus an increment of $s$ by one never decreases the value of the bound and it is increasing as a function of $s$.
    
    We have 
    
    \begin{equation}
    \sum_{i \in [m]} |F_i|  =   \sum_{i \in [m]} \rho(F_i) + \sum_{i \in [m]} \eta(F_i) \geq |E|,
    \end{equation}
    
   so $s \geq n-rm$. As the bound in \eqref{eqBoundTasajako} is increasing as a function of $s$, we obtain the bound
   
   \begin{equation}
   \label{eqBoundTasajako2}
   \begin{split}
   & \max\left\{\sum_{i \in I}\eta(F_i): |I|= \kr-1 \right\} \\
   & \geq \left(\kr-1\right)\left\lfloor \frac{n-rm}{m} \right\rfloor + \min\left\{\kr-1, n-rm- \left\lfloor \frac{n-rm}{m} \right\rfloor m  \right\}.
   \end{split}
   \end{equation}

    By a similar consideration as above, we note that this bound is decreasing as a function of $m$. Thus we can obtain an a new bound by substituting $m = \left\lceil \frac{n}{r+\delta-1} \right\rceil-1$. This is also the definition of $m$ in Theorem \ref{raja}. By additionally substituting $v$ and $b$ by their definitions in \eqref{newbound}, we can see that the bounds \eqref{newbound} and \eqref{eqBoundTasajako2} are equal.
    
    We have thus proved that the value of $d$ for non-optimal matroids is always bounded from above by either the bound \eqref{remainingbound} or the bound \eqref{newbound}. This proves the theorem.

\end{proof}

\newpage


\section{Conclusions}

In this paper, we reviewed several results first established in \cite{artikkeli}: We first demonstrated how almost affine codes induce a matroid in such a way that the key parameters $(n,k,d,r,\delta)$ of a locally repairable code appear as matroid invariants. We then discussed how this enables us to use matroid theory to study properties of almost affine locally repairable codes. We extended the generalized Singleton bound to matroids after which we derived the structure theorem stating a list of requirements for a matroid to achieve the bound. We reviewed results on $d_{max}$ for different classes of parameters $(n,k,r,\delta)$ and gave matroid constructions used to prove these results. Lastly, we presented two improvements to previous results: We extended the class of parameters for which the bound can be achieved when $\lkr = 2$. We also presented an improved general lower bound for $d_{max}$.

There  still remains significant gaps in our knowledge on the complete function $d_{max}(n,k,r,\delta)$. We particularly lack results on when the Singleton bound can not be achieved, except for the class of parameters $a<\kr-1$ for which the reachability of the bound is completely solved in \cite{artikkeli}. Further such non-existence results could be derived using Theorem \ref{thmStructure}.

The classes of parameters for which the bound can be achieved could be extended as well, using matroids from Theorem \ref{thmCorollIII.1} without restricting ourselves to Graph Construction 1, as demonstrated in the example in the previous chapter. However, designing such matroid constructions probably becomes increasingly complicated as we progress in finding improvements. Finding a general optimal scheme would probably be exceedingly difficult. Some related problems are studied by a branch of mathematics called extremal set theory, see for instance \cite{erdos}.

Whether there exists a better general lower bound than the one derived is unknown. We proved that to have a chance at finding such a bound, one needs to use a class of matroids more general than the class given by Theorem \ref{thmCorollIII.1}. However, it seems plausible that our lower bound actually gives the best possible $d$ for the whole class of parameters $(n,k,r,\delta)$ for which the generalized Singleton bound can not be achieved.

Many interesting areas of research remain open in the wider context of almost affine codes and locally repairable codes. For instance, the class of matroids induced by almost affine codes could possibly be expanded from that given in Theorem \ref{thmCorollIII.1}. Also, the exact size of the finite field required to represent matroids from Theorem \ref{thmCorollIII.1} is unknown.

\newpage

\newpage

\appendix

\end{document}

questions: kelpaako Will
(follows representation in artikkeli)

remove eq numbers not used for referencing

hyphenation

aikaisempi työ / teoreettinen tausta

emphasize where, on first use or on definition
only intersects with x

    lopuksi: leveämpi teksti, tarkasta rivien päätteet, tavutus!!!!!!

     tarkasta yhtälöiden välimerkit ja numerot

    our lower bound, best for what, maybe even best you can get for corollIII.1 matroids when bound can not be achieved

    minimal nullity for coatoms as a lemma or theorem
    
    title?
    name of thesis:
        Locally Repairable Codes and Matroid Theory
        Improvements to results on optimal locally repairable codes (using matroid theory)
        Optimal Locally Repairable Codes from Matroid Theory

    Lopuksi:
    
    Korjaa päivämäärä!!!!
    number of pages